\newenvironment{talign}
 {\align}
 {\endalign}
\def\BibTeX{{\rm B\kern-.05em{\sc i\kern-.025em b}\kern-.08em
    T\kern-.1667em\lower.7ex\hbox{E}\kern-.125emX}}
\DeclareRobustCommand{\vect}[1]{\bm{#1}}
  \renewcommand{\vect}[1]{#1}%
\newtheoremstyle{slanted}
{0em plus 0em minus 0em}
  {0em plus 0em minus 0em}
  {\em}
  {}
  {\bfseries}
  {.}
  { }
  {}
\theoremstyle{slanted}
\theoremstyle{slanted}
\newtheorem{definition}{Definition}
\theoremstyle{slanted}
\newtheorem{theorem}{Theorem}
\theoremstyle{slanted}
\theoremstyle{slanted}
\theoremstyle{slanted}
\theoremstyle{slanted}
\newtheorem{lemma}{Lemma}
\title{Parameter Training Efficiency Aware Resource Allocation for AIGC in Space-Air-Ground Integrated Networks}
\author{\IEEEauthorblockN{Liangxin Qian, \emph{Graduate Student Member, IEEE,} Peiyuan Si, \emph{Graduate Student Member, IEEE,} Jun Zhao, \emph{Member, IEEE}, and 
Kwok-Yan Lam, \emph{Senior Member, IEEE} }\vspace{-10pt}\thanks{Liangxin Qian, Peiyuan Si, Jun Zhao, and Kwok-Yan Lam are with the College of Computing and Data Science (CCDS) at Nanyang Technological University, Singapore. Email: qian0080@e.ntu.edu.sg, peiyuan001@e.ntu.edu.sg, junzhao@ntu.edu.sg, kwokyan.lam@ntu,edu,sg.
}
}
\begin{document}
\maketitle
\begin{abstract}
With the evolution of artificial intelligence-generated content (AIGC) techniques and the development of space-air-ground integrated networks (SAGIN), there will be a growing opportunity to enhance mobile user experiences with customized AIGC applications. This is enabled by combining parameter-efficient fine-tuning (PEFT) with mobile edge computing. In this paper, we formulate the optimization problem of maximizing the parameter training efficiency of the SAGIN system over wireless networks under limited resource constraints. We propose the \underline{P}arameter training efficiency \underline{A}ware \underline{R}esource \underline{A}llocation (PARA) technique to jointly optimize user association, data offloading, and communication and computational resource allocation. Detailed derivations are presented to solve this difficult sum of ratios problem based on quadratically constrained quadratic programming (QCQP), semidefinite programming (SDP), graph theory, and fractional programming (FP) techniques. Our proposed PARA technique is effective in finding a stationary point of this non-convex problem. The simulation results demonstrate that the proposed PARA method outperforms other baselines.
\end{abstract}
\begin{IEEEkeywords}
Space-air-ground integrated networks, artificial intelligence generated content, parameter-efficient fine-tuning, resource allocation.
\end{IEEEkeywords}
\section{Introduction}
\subsection{Background}

Parameter-efficient fine-tuning (PEFT) techniques, e.g., low-rank adaptation (LoRA), model pruning, and knowledge distillation, have emerged as essential tools for adapting large artificial intelligence (AI) models to specific downstream tasks with significantly reduced computational overhead \cite{hu2021lora,cao2023comprehensive,gou2021knowledge,ding2023parameter,wu2023ai}. These methods enable faster and more efficient training by fine-tuning only a small portion of the model's parameters, making them particularly suitable for edge scenarios where resources are limited \cite{lu2023llama}. In parallel, artificial intelligence-generated content (AIGC) systems have seen rapid adoption across domains, e.g., personalized assistants. To support these applications, frequent and adaptive fine-tuning on user data is needed, pushing the demand for scalable and distributed model update mechanisms.


However, traditional infrastructure faces serious limitations in enabling such distributed PEFT workloads. Terrestrial edge servers face coverage and capacity limitations, whereas centralized cloud servers suffer from high latency and energy overhead. These limitations render them impractical for latency-sensitive and resource-constrained scenarios.

To address these limitations, space-air-ground integrated networks (SAGIN) have gained attention as a promising hierarchical architecture for global communication and computation~\cite{cui2022space,ray2022review,bai2020relay}. By incorporating satellites, aerial platforms, and terrestrial base stations, SAGIN offers an infrastructure capable of wide-area coverage and flexible resource coordination. This makes SAGIN an attractive candidate for supporting on-demand PEFT tasks across diverse user locations \cite{du2023age}.

\subsection{Motivation and challenges}
Given the fact that much of the research has focused on resource allocation for terrestrial networks, there is a need to explore the potential performance improvements of high-altitude and satellite platforms for communications and computing missions. 
Terrestrial edge servers can provide a fundamental infrastructure, but they may not always be sufficient to handle all computing tasks efficiently, especially for AI model training. While offloading residual training tasks to remote cloud servers is an option, relying solely on cloud computing introduces significant bandwidth consumption and transmission delays. Instead, a hierarchical computing framework that integrates aerial and satellite platforms enables a more balanced and scalable approach to resource management. Even though AI training is not a real-time application, minimizing energy consumption and optimizing computing resources across multiple layers is crucial for efficiency. Aerial and satellite computing layers can serve as intermediate nodes, reducing cloud dependency and distributing the workload dynamically based on available resources.

The main difficulty in rolling out PEFT services across SAGIN is dealing with the limited resources these networks have \cite{liu2018space}. Resources like the amount of data the network can handle, the computing power of aerial platforms, and the energy available for sending data are all limited and can change based on actual requirements \cite{zhou2020deep}. SAGIN is made up of different levels, from mobile users to ground, air, and satellite servers, each with its own set of rules for how things work, making the task of managing resources even more complex \cite{zhang2023ai}. It is also necessary to find a suitable balance between system delay and energy consumption and make sure AI content creation tools are trained properly.

To tackle these issues, our study proposes a novel method to manage resources that are specially made to improve how efficiently parameters are trained in SAGIN. This approach focuses on user association, partial offloading, transmit power, bandwidth, and computation resource optimization. Our goal is to make all levels of SAGIN work better together, enhancing support for services that fine-tune models with minimal resources.

\subsection{Studied problem}
Our research focuses on enhancing parameter training efficiency (PTE), i.e., $\frac{\text{training parameter sizes}}{\text{delay}+\text{energy} }$ to be introduced in Section \ref{sec.optimization_problem}, in SAGIN through a mobile edge computing mechanism. This approach involves a sequential distribution of training tasks, starting from users and moving through terrestrial, aerial, and finally satellite servers. Each server in this hierarchy is responsible for processing a specific portion of the user's workload, with the task progressively offloaded from one level to the next. Initially, one user's task is sent to a terrestrial server, which undertakes a part of the training parameters, leaving the remainder for subsequent levels. The task is then further divided, with subsequent portions handled by aerial and satellite servers, ensuring the entire workload is distributed across the network's levels. This approach is similar to how heat spreads out in thermodynamics, using the closeness of each layer in the network to reduce how far data needs to travel and make better use of resources. The problem we are tackling is how to improve this detailed task offloading and resource allocation strategy. This includes figuring out how to best offload work and manage resources among users on the ground, servers in the air, and satellites in space, to make the data processing more efficient throughout the SAGIN system.
\subsection{Main contributions}
Our main contributions are as follows:
\begin{itemize}
    \item[$\bullet$]We propose a novel metric to quantify parameter training efficiency across SAGIN. This metric provides a foundational basis for evaluating and optimizing the training process, setting a new standard for assessing performance in complex network environments. To the best of our knowledge, there is no research on the proposed metric.
    \item[$\bullet$]To address the challenging non-convex sum of ratios optimization problem in Section \ref{sec.optimization_problem}, we propose the \underline{P}arameter training efficiency-\underline{A}ware \underline{R}esource \underline{A}llocation (PARA) technique. This method is distinct from the approaches discussed in Section \ref{sec.related_work} (refer to papers \cite{jong2012efficient,shen2018fractional,10368052,10.1145/3565287.3610271}) as it enables the joint optimization of user association, offloading ratio, and communication and computation resource allocations. Note that no approximation method is used but a novel fractional programming (FP) technique to conduct the joint optimization of bandwidth, transmit power, and computation resources across all four levels of the SAGIN architecture, including users, terrestrial servers, aerial servers, and satellite servers.
    \item[$\bullet$]We have given rigorous proofs of the proposed PARA technique by utilizing quadratically constrained quadratic programming (QCQP), semidefinite programming (SDP), graph theory, and FP techniques. This rigorous theoretical framework ensures the reliability and effectiveness of the PARA technique.
    \item[$\bullet$]Through comprehensive simulation results, we demonstrate the PARA technique's capability to reliably find a stationary point for the proposed optimization problem in Section \ref{sec.optimization_problem}. These results showcase its superiority in enhancing PTE within the SAGIN framework.
\end{itemize}
This paper is organized as follows: Section \ref{sec.related_work} reviews the related work. The system model is detailed in Section \ref{sec.system_model}. The formulation of the optimization problem is presented in Section \ref{sec.optimization_problem}. Our proposed solution, the PARA algorithm, is introduced in Section \ref{sec.proposed_para_algorithm}, followed by an analysis of its complexity in Section \ref{sec.complexity_analysis}. Simulation results demonstrating the effectiveness of our approach are discussed in Section \ref{sec.numerical_results}. Finally, the paper concludes with Section \ref{sec.conclusion}.

\begin{table*}[t]
\centering
\caption{Comparison of our paper with related papers.}
\tiny 
\resizebox{\textwidth}{!}{%
\begin{tabular}{lcccccccccc}
\toprule
\textbf{Paper} & \textbf{Optimization Objective} & \textbf{SAGIN} & \textbf{AIGC} & \textbf{Delay} & \textbf{Energy} & \textbf{Utility} & \textbf{Transmit Power} & \textbf{Bandwidth} & \textbf{Computation Resource} \\
\midrule
Nguyen \textit{et al.} \cite{nguyen2023integrated} & Energy & \checkmark & $\times$ & \checkmark & \checkmark & $\times$ & $\times$ & \checkmark & \checkmark \\
Wang \textit{et al.} \cite{wang2021incorporating} & Storage & \checkmark & $\times$ & $\times$ & $\times$ & $\times$ & $\times$ & $\times$ & \checkmark \\
Zhang \textit{et al.} \cite{zhang2022distributed} & Caching & \checkmark & $\times$ & \checkmark & $\times$ & \checkmark & $\times$ & $\times$ & $\times$ \\
Qin \textit{et al.} \cite{qin2023energy} & Energy efficiency & \checkmark & $\times$ & $\times$ & \checkmark & $\times$ & \checkmark & \checkmark & $\times$ \\
Cao \textit{et al.} \cite{cao2021edge} & Resource scheduling & \checkmark & $\times$ & \checkmark & $\times$ & $\times$ & $\times$ & $\times$ & \checkmark \\
Shen \textit{et al.} \cite{shen2018fractional} & Power/beamforming & $\times$ & $\times$ & $\times$ & \checkmark & $\times$ & \checkmark & $\times$ & $\times$ \\
Zhao \textit{et al.} \cite{10368052} & Utility-cost ratio & $\times$ & $\times$ & \checkmark & \checkmark & \checkmark & \checkmark & \checkmark & \checkmark \\
Zhao \textit{et al.} \cite{10.1145/3565287.3610271} & Utility-energy efficiency & $\times$ & $\times$ & $\times$ & \checkmark & \checkmark & \checkmark & \checkmark & $\times$ \\
Huang \textit{et al.} \cite{huang2019reconfigurable} & Energy efficiency & $\times$ & $\times$ & $\times$ & \checkmark & $\times$ & \checkmark & $\times$ & $\times$ \\
Ju \textit{et al.} \cite{ju2013throughput} & Sum throughput & $\times$ & $\times$ & $\times$ & $\times$ & $\times$ & $\times$ & $\times$ & $\times$ \\
Liu \textit{et al.} \cite{liu2025optimizing} & Quality of experience & $\times$ & \checkmark & \checkmark & $\times$ & \checkmark & $\times$ & \checkmark & $\times$ \\
Zhou \textit{et al.} \cite{zhou2020computation, sun2019joint} & Computation efficiency & $\times$ & $\times$ & $\times$ & \checkmark & $\times$ & \checkmark & $\times$ & \checkmark \\
This paper & Parameter training efficiency & \checkmark & \checkmark & \checkmark & \checkmark & \checkmark & \checkmark & \checkmark & \checkmark \\
\bottomrule
\end{tabular}%
}
\label{tab:related work comparison}
\end{table*}

\section{Related Work}\label{sec.related_work}
In this section, we discuss the related work on the research of efficiency metrics, resource allocation in SAGIN, and novel fractional programming techniques.
\subsection{Efficiency metric research}
In wireless communication systems, performance has traditionally been measured using well-established metrics, e.g., spectral efficiency, energy efficiency, cost efficiency, and throughput efficiency. Each of these focuses on a specific aspect of transmission performance. Spectral efficiency quantifies how effectively bandwidth is used, typically measured in bits per second per Hz \cite{foschini1999simplified}. Energy efficiency focuses on the amount of data transmitted per unit of energy consumed, an important factor in battery-constrained devices \cite{huang2019reconfigurable}. Cost efficiency evaluates the financial overhead of transmitting data \cite{tombaz2011energy}, aiming to balance performance with affordability. Throughput efficiency reflects the system’s ability to handle traffic density in space and time \cite{ju2013throughput}. Another metric closely related to our work is computation efficiency, which is generally defined as the ratio of total computed bits to energy consumption \cite{zhou2020computation, sun2019joint}.

While these metrics are effective for transmission-focused applications, they fall short in addressing computation-intensive tasks, e.g., AIGC model training, over edge networks. In such scenarios, the system must not only transmit data but also allocate computation resources efficiently across multiple layers of a hierarchical infrastructure (e.g., terrestrial, aerial, and satellite servers).

\subsubsection{Differences between parameter training efficiency and other efficiency metrics}
In this research, we introduce PTE as a novel and application-driven metric specifically tailored for AI model training in hierarchical edge computing environments such as SAGIN. PTE is defined as the ratio of the total number of trainable parameters (i.e., the actual computation workload) to the aggregate system cost, which comprises both end-to-end communication delay and total energy consumption during the training process. This definition directly integrates the core aspects of modern edge intelligence: computational intensity, network responsiveness, and power efficiency.

Traditional metrics, e.g., spectral efficiency (bits per second per Hz), energy efficiency (bits per Joule), computation efficiency (bits per Joule), or throughput efficiency (bits per second) are well-suited for evaluating data transmission in conventional wireless networks. However, they are insufficient for quantifying the effectiveness of distributed training tasks, especially those involving partial model offloading, multi-hop computation, and layer-wise coordination across devices with heterogeneous resources. In contrast, PTE is explicitly designed to capture the computational utility achieved per unit of incurred system cost, making it more relevant for evaluating how efficiently AI workloads are processed and learned across the network.

Moreover, PTE reflects the critical trade-offs between training scope and system constraints. Increasing the number of trainable parameters (e.g., fine-tuning more adapter modules) may yield better model performance, but it also escalates communication overhead and power usage. In resource-constrained environments, where UAVs, satellites, and edge devices have limited battery and bandwidth, PTE serves as a meaningful optimization target: it quantifies how much training benefit can be obtained relative to the delay and energy budget.

From a system design perspective, PTE offers a holistic decision-making tool for resource allocation and task scheduling. For example, it can guide whether to keep training tasks locally at the user, offload them to terrestrial or aerial servers, or further push them to satellites, depending on real-time delay and energy trade-offs. This makes PTE not only a performance metric but also an optimization objective that bridges communication and computation in AI-driven wireless systems.


\subsection{Resource allocation research in SAGIN}
In addressing the difficulty of resource allocation within SAGIN, recent studies have introduced a spectrum of innovative solutions tailored to enhance network performance across varying dimensions. The authors in \cite{nguyen2023integrated} delve into the computation offloading challenges in hybrid edge-cloud-based SAGIN, focusing on an integrated approach to optimize computation offloading, UAV trajectory, user scheduling, and radio resource allocation, aiming to minimize energy consumption while adhering to delay constraints. This approach leverages alternating optimization and the successive convex approximation method to address the non-convex optimization problem, demonstrating significant efficiency gains over conventional methods. Another research in \cite{wang2021incorporating} introduces a distributed deep reinforcement learning algorithm for managing SAGIN's limited storage resources, showcasing notable improvements in resource allocation revenue and user request acceptance rate. Furthermore, to cater to the IoE scenario, another study in \cite{zhang2022distributed} advocates for wireless edge caching within SAGIN, optimized through distributed DRL to minimize transmission delays and alleviate task offloading pressures. In the context of the industrial power IoT, a NOMA-enabled SAGIN-IPIoT model is proposed in \cite{qin2023energy} to enhance system throughput and energy efficiency by optimizing subchannel and terminal power through a mixed-integer nonlinear programming approach. To bridge the communication gap within the Internet of Vehicles, a novel SAGIN-IoV edge-cloud architecture is proposed in \cite{cao2021edge}, leveraging software-defined network (SDN) and network function virtualization (NFV) to optimize resource scheduling, highlighting the pivotal role of advanced computational models in refining resource allocation and ensuring seamless connectivity within SAGIN environments. 

In addition to these foundational studies, several recent works have addressed more specific challenges in SAGIN resource management. Wei \textit{et al}. \cite{wei2024energy} focused on energy-efficient caching and user selection in emergency scenarios, using UAV-assisted caching to extend coverage while preserving satellite energy. Jia \textit{et al}. \cite{jia2025nfv} investigated service recovery under resource failures by designing an NFV-enabled SFC recovery model, allowing tasks to dynamically reallocate resources when failures occur. He \textit{et al}. \cite{he2024online} developed a two-timescale graph model and online algorithm to optimize joint data offloading and power control under the dynamic topology and energy constraints of SAGIN, achieving near-optimal results with low computation cost.

\subsubsection{Detailed comparison to our proposed PARA technique}
Our proposed PARA technique, aimed at optimizing parameter training efficiency within SAGIN, introduces a comprehensive optimization framework that is distinct from existing research in both its objectives and methodologies. The authors in \cite{nguyen2023integrated} aim to minimize the weighted energy consumption of systems and use three alternative optimization steps to optimize user scheduling, partial offloading control, computation resource, and bandwidth allocation. Note that in terms of resource allocation, only computation resource and bandwidth are considered in \cite{nguyen2023integrated}, without the consideration of transmit power. For the optimization of bandwidth allocation, the successive convex approximation (SCA) method is used to find an upper bound of the Shannon formula. Compared to this method, we not only consider the joint optimization of transmit power and bandwidth allocation simultaneously but also use a novel fractional programming technique without any approximation. Besides, three levels’ resources (i.e., terrestrial, aerial, satellite servers) are also not considered in \cite{nguyen2023integrated}. What’s more, the authors ignore the joint optimization of user scheduling and partial offloading control simultaneously in our paper. 

Unlike previous works \cite{nguyen2023integrated,wang2021incorporating,zhang2022distributed,qin2023energy,cao2021edge} that focus on specific aspects such as computation offloading, storage management, or throughput enhancement, we ambitiously target a holistic improvement by jointly optimizing user association, partial offloading, transmit power, bandwidth, and computation resources across the SAGIN's user, terrestrial, aerial, and satellite layers. Utilizing optimization methods such as QCQP, SDP, graph theory, and FP techniques, without resorting to approximations, the proposed PARA algorithm uniquely addresses challenges in the simultaneous optimization of bandwidth, transmit power, and computation resource allocation.
\subsection{Novel fractional programming technique research}
For the sum of ratio optimization problem $\sum_{i=1}^{N}\frac{A_n(\bm{x})}{B_n(\bm{x})}$, the authors in \cite{jong2012efficient} proposed to transform it into parametric convex optimization problem to obtain a global optimum for maximization or minimization problem. However, the technique proposed in \cite{jong2012efficient} can't be applied for the optimization problem $C(\bm{x})+\sum_{i=1}^{N}\frac{A_n(\bm{x})}{B_n(\bm{x})}$. To address this issue, \cite{shen2018fractional} replaced $\frac{A_n(\bm{x})}{B_n(\bm{x})}$ as $2y_n\sqrt{A_n(\bm{x})}-y_n^2B_n(\bm{x})$. Based on the proof of \cite{shen2018fractional}, the maximization of $C(\bm{x})+\sum_{i=1}^{N}\frac{A_n(\bm{x})}{B_n(\bm{x})}$ is same as that of $C(\bm{x})+\sum_{i=1}^{N}2y_n\sqrt{A_n(\bm{x})}-y_n^2B_n(\bm{x})$, where $y_n$ is iteratively updated to $\frac{A_n(\bm{x})}{B_n(\bm{x})}$. In an alternative manner of optimizing $\bm{y}$ and $\bm{x}$, a stationary point can be obtained. Note that the minimization problem of $C(\bm{x})+\sum_{i=1}^{N}\frac{A_n(\bm{x})}{B_n(\bm{x})}$ can't be solved by this technique. Zhao~\emph{et~al.}~\cite{10368052} proposed to replace $\frac{A_n(\bm{x})}{B_n(\bm{x})}$ as $A_n^2(\bm{x})y_n+\frac{1}{4B_n^2(\bm{x})y_n}$, where $y_n = \frac{1}{2A_n(\bm{x})B_n(\bm{x})}$, and they successfully solve the minimization problem by proofs. In \cite{10368052}, optimizing just one ratio $\frac{\text{utility}}{\text{cost}}$ is also considered. The optimization of $C(\bm{x})+\sum_{i=1}^{N}\frac{A_n(\bm{x})}{B_n(\bm{x})}$ is used in the ``cost'' term. To tackle the sum of ratio optimization problem $\sum\frac{\text{utility}}{\text{cost}}$, the authors in \cite{10.1145/3565287.3610271} propose a parametric optimization technique to obtain the global optimum. However, only energy efficiency is considered in \cite{10.1145/3565287.3610271}. Therefore, we consider extending these technologies to solve more sum of ratio optimization problems like the optimization problem we propose in Section \ref{sec.optimization_problem} and applying those techniques proposed in \cite{10368052} and \cite{10.1145/3565287.3610271} to solve more problems like this class.

\section{System Model}\label{sec.system_model}
In this section, we present the SAGIN system, edge training mechanism, and analysis of system costs.

\begin{table}[t]
\caption{Important Notations.}\label{table.notation}
\centering
\setlength{\tabcolsep}{0pt}
\small
\renewcommand{\arraystretch}{1.1}
\begin{tabular}{ m{7em}  m{6.5cm}}
    \toprule                                   
    \textbf{Notation}  & \textbf{Description} \\
    \midrule
    $\mathcal{N}$ & Set of all users ($n \in\{1,...,N\}$)\\ 
    $\mathcal{M}^{(i)}$ & Set of servers ($m^{(i)} \in\{1,...,M^{(i)}\}$, $i \in \{t, a, s\}$)\\
    $\varphi_n^{(i)}$ & Offloading ratio of user $n$ ($i \in \{u, t, a, s\}$)\\
    $d_n$ & Training parameter size of user $n$\\
    $d_n^{(t)}$ & Number of input tokens for user $n$\\
    $d_n^{(l)}$ & Intermediate results and labeling data size of the local dataset\\
    $\gamma_n^{(u)}$ & Computing speed partition ratio of user $n$\\
    $f_n$ & Maximum computing speed of user $n$\\
    $\gamma_{n,m}^{(i)}$ & Computing speed partition ratio of server $m^{(i)}$ for user $n$'s task ($i \in \{t, a, s\}$)\\
    $f_{m_{i}}$ & Maximum computing speed of server $m^{(i)}$ ($i \in \{t, a, s\}$)\\
    $\phi_{n,m}^{(i)}$ & Communication bandwidth partition ratios of server $m^{(i)}$ for user $n$'s task ($i \in \{t, a, s\}$)\\
    $b_{m_{i}}$ & Maximum allocated bandwidth of server $m^{(i)}$ ($i \in \{t, a, s\}$)\\
    $\rho_n^{(u)}$ & Transmission power partition ratio of user $n$\\
    $p_n$ & Maximum transmission power of user $n$\\
    $\rho_{n,m_i}$ & Transmission power partition ratios of server $m^{(i)}$ for user $n$'s task ($i \in \{t, a\}$)\\
    $p_{n,m}^{(i)}$ & Maximum transmission power of server $m^{(i)}$ ($i \in \{t, a\}$)\\
    $x_{n,m}^{(i)}$ & Association of user $n$ and selected server $m^{(i)}$ ($i \in \{t, a, s\}$)\\
    $\psi_n^{(u)}$ & Auxiliary variable: $\psi_n^{(u)} = \frac{c_n^{(u)}\varphi_n^{(u)}d_n}{cost_n^{(u)}}$\\
    $\psi_{n,m}^{(i)}$ & Auxiliary variable: $\psi_{n,m}^{(i)} = \frac{c_{n,m}^{(i)}\varphi_{n,m}^{(i)}d_n}{cost_{n,m}^{(i)}}$ ($i \in \{t, a, s\}$)\\
    $c_n^{(u)}$ & PTE preference of user $n$\\
    $c_{n,m}^{(i)}$ & PTE preference of server $m^{(i)}$ for user $n$'s task\\
    $\alpha_n^{(u)}$ & Auxiliary variable: $\alpha_n^{(u)} = \frac{1}{cost_n^{(u)}}$\\
    $\alpha_{n,m}^{(i)}$ & Auxiliary variable: $\alpha_{n,m}^{(i)} = \frac{1}{cost_{n,m}^{(i)}}$ ($i \in \{t, a, s\}$)\\
    $\varrho_{n,m}^{(i)}$ & Auxiliary variable ($i \in \{t, a, s\}$, see Lemma 3 for details)\\
    \bottomrule
\end{tabular}
\vspace{-10pt}
\end{table}
\subsection{SAGIN networks}
We consider a SAGIN network consisting of $N$ mobile users and $M$ servers (including $M^{(t)}$ \underline{t}errestrial servers, $M^{(a)}$ \underline{a}erial servers and $M^{(s)}$ \underline{s}atellite servers, i.e., $M = M^{(t)} + M^{(a)} + M^{(s)}$) in Fig. \ref{fig.system_model}. We use $m$ to indicate the $m$-th server, where $m \in \mathcal{M} := \{1,2,\cdots,M^{(t)} + M^{(a)} + M^{(s)}\}$, and $n$ represents the $n$-th mobile user, where $n\in \mathcal{N}:= \{1,2,\cdots,N\}$.
\begin{figure}[htbp]
\vspace{-0.1cm}
\centering
\includegraphics[width=0.49\textwidth]{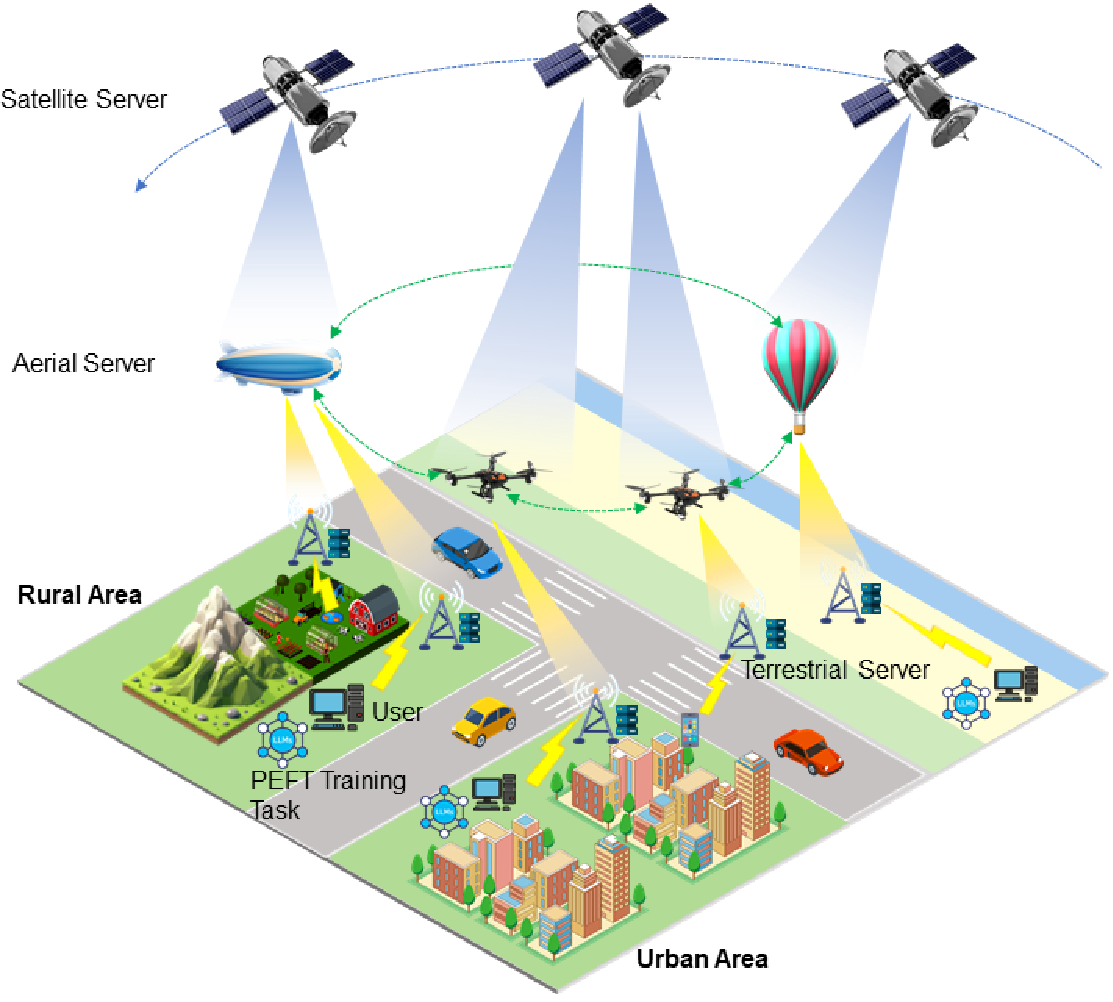}
\caption{Schematic diagram of the SAGIN system for PEFT training tasks.}\vspace{-5pt}
\label{fig.system_model}
\end{figure}
\subsubsection{Terrestrial networks}
In terrestrial networks, there are $M^{(t)}$ terrestrial edge servers. $m^{(t)}$ is used to represent the $m$-th terrestrial edge server, where $m^{(t)}\in \mathcal{M}^{(t)}:= \{1,2,\cdots,M^{(t)}\}$. Each terrestrial base station has a specific communication coverage area. 
For each terrestrial edge server, GPU resources are provided for computing mobile users' services. 

\subsubsection{Aerial networks}
In the aerial networks, there are $M^{(a)}$ aerial edge servers, which are made up of several high-altitude platforms (HAPs), i.e., drones, hot balloons, and airships, and $m^{(a)}$ is the index of the \mbox{$m^{(a)}$-th} aerial edge servers, where $m^{(a)}\in \mathcal{M}^{(a)}:= \{1,2,\cdots,M^{(a)}\}$. Those aerial vehicles are typically at altitudes of around 17 to 22 kilometers, e.g., Project Loon from Google. Due to their high altitudes, HAPs can cover a much larger area compared to terrestrial base stations, making them ideal for providing connectivity in remote or rural areas. Each aerial edge server is equipped with enough computing resources. They can provide computing services for users within their coverage area.

\subsubsection{Satellite networks}
In the satellite networks, there are $M^{(s)}$ low earth orbit (LEO) satellites, and $m^{(s)}$ is used to denote the $m^{(s)}$-th LEO satellite, where $m^{(s)}\in \mathcal{M}^{(s)}:= \{1,2,\cdots,M^{(s)}\}$. 
In the assignment of mobile devices, directly connecting mobile users to the LEO satellite would introduce many unstable factors, including high user mobility, limited and intermittent satellite visibility, and the long propagation distance through the atmosphere. In our SAGIN architecture, mobile users instead access nearby terrestrial edge servers via short-range wireless links, and the terrestrial and aerial servers then communicate with the LEO satellite through more stable links with less interference \cite{wang2024high}. Therefore, we leverage the LEO satellite to assist the aerial edge server in executing PEFT training tasks for users.

\subsection{PEFT edge training model}
In this section, we discuss the PEFT edge training scheme. In the proposed SAGIN system, PEFT follows a hierarchical offloading framework where training tasks are distributed across multiple layers of edge servers. The base model, such as large language model meta AI (LLaMA), and the upcoming PEFT technique are known to both the user and edge servers in the pre-communication. As shown in Fig. \ref{fig.peft}, take the serial adapter \cite{houlsby2019parameter}, one additive PEFT method, as an example. Additive PEFT enhances the model architecture by integrating new trainable modules or parameters \cite{han2024parameter}. In the serial adapter method, each Transformer block is augmented with two adapter modules—one after the self-attention layer and another after the feedforward layer.
\begin{figure}[t]
\vspace{-0.1cm}
\centering
\includegraphics[width=0.4\textwidth]{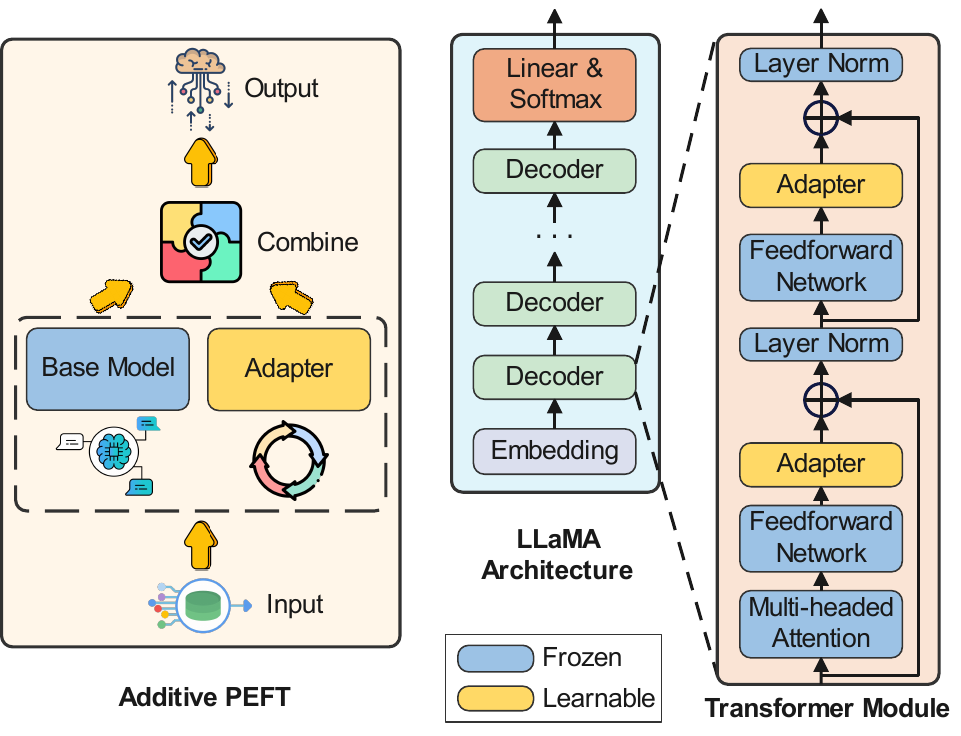}
\caption{Illustration of additive PEFT techniques in the LLaMA architecture with trainable adapters. Blue/yellow colors indicate frozen/trainable parameters respectively.}\vspace{-5pt}
\label{fig.peft}
\end{figure}
\begin{figure}[t]
\vspace{-0.1cm}
\centering
\includegraphics[width=0.48\textwidth]{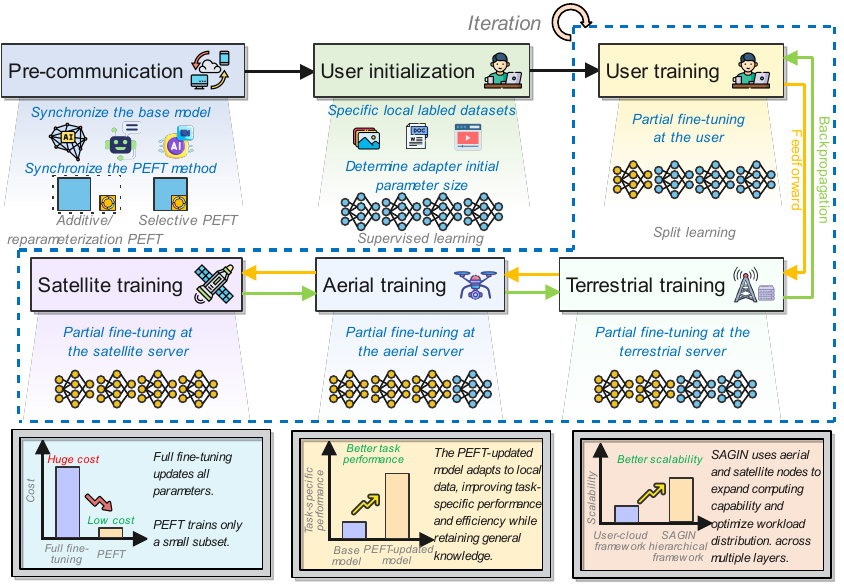}
\caption{Illustration of the PEFT edge training procedure in the SAGIN system. Blue represents frozen, and yellow means trainable.}\vspace{-5pt}
\label{fig.procedure}
\end{figure}

We give the illustration of the PEFT edge training procedure in the SAGIN system in Fig. \ref{fig.procedure}. The user determines the number of Transformer blocks to fine-tune and the size of the adapter parameters, which define the total trainable parameters. The fine-tuning process starts with the user training a fraction of the adapter parameters on its local input data while keeping the base model frozen. The remaining adapter parameters, along with intermediate results and labeled output data, are transmitted to the terrestrial server for further processing. The terrestrial server trains a portion of the remaining parameters and offloads the rest to the aerial server, which continues the process before finally passing the workload to the satellite server for completion. The user or each server processes adapter modules sequentially in integer multiples of Transformer blocks before transferring the remaining workload to the next level. This hierarchical training strategy reduces communication overhead, optimizes energy consumption, and effectively distributes computational resources across the SAGIN architecture, enabling efficient fine-tuning of large AI models over wireless networks.

\subsubsection{Work offloading ratio decisions}
In the PEFT training task, the number of user and server training transformer modules is an integer. But for simplicity, let us first consider the case where they are continuous numbers. For the case where the offloading ratio is a discrete value, the solution of the continuous value can be obtained first and then approximated to the discrete value. We consider using continuous variables $\varphi_n^{(u)}$, $\varphi_n^{(t)}$, $\varphi_n^{(a)}$, and \mbox{$\varphi_n^{(s)} \in [0,1]$} to indicate the work offloading ratios of the local \underline{u}ser, \underline{t}errestrial server, \underline{a}erial server, and \underline{s}atellite server, respectively. The sum of $\varphi_n^{(u)}$, $\varphi_n^{(t)}$, $\varphi_n^{(a)}$, and $\varphi_n^{(s)}$ is one. We define $\bm{\varphi^{(u)}}:=[\varphi^{(u)}_n]|_{n \in \mathcal{N}}$, $\bm{\varphi^{(i)}}:=[\varphi^{(i)}_{n,m}]|_{n \in \mathcal{N}, m \in \mathcal{M}^{(i)}}$, for $i \in \{t,a,s\}$, and $\bm{\varphi}:=\{\bm{\varphi^{(u)}},\bm{\varphi^{(t)}},\bm{\varphi^{(a)}},\bm{\varphi^{(s)}}\}$.

\subsubsection{PEFT training offloading data}
We assume the input tokens' number of user $n$ is $d_n^{(t)}$, the training parameter size of user $n$ is $d_n$, the whole data size of PEFT training of the user $n$ is $\omega_b d_n$, and the intermediate results and labeling data size of the local dataset is $d_n^{(l)}$, where $\omega_b$ is bits used to represent each parameter. As we discussed in the previous section, the user and servers share the knowledge of the base model and upcoming PEFT method. However, the whole parameter size of the adapter, which can be determined by the user, is not shared with the servers due to privacy requirements. For simplicity, let's assume that the user and the servers are trained on the same foundation model structure (e.g., LLaMA), use the serial adapter as the PEFT technique, and each of them processes adapter modules sequentially in integer multiples of Transformer blocks. Therefore, the intermediate results are of the same size.

Based on these assumptions, the size of data communicated (if there is) between the user $n$ and the connected terrestrial server $m^{(t)}$ is $(1-\varphi_n^{(u)})\omega_b d_n + d_n^{(l)}$, where $(1-\varphi_n^{(u)})d_n$ is the remaining neural networks modules excluding those that user $n$ having trained locally. Similarly, the sizes of data communicated between the terrestrial server $m^{(t)}$ and the aerial server $m^{(a)}$, and between the aerial server $m^{(a)}$ and the LEO server $m^{(s)}$ are $(1-\varphi_n^{(u)}-\varphi_n^{(t)})\omega_b d_n + d_n^{(l)}$ and $(1-\varphi_n^{(u)}-\varphi_n^{(t)}-\varphi_n^{(a)})\omega_b d_n + d_n^{(l)}$, respectively. We assume that users' tasks offloaded to the selected servers are processed instantly. Therefore, queue backlogs or queueing delays are ignored \cite{gao2024cost}.

\subsubsection{Edge training mechanism}
The uplink work offloading is studied in this SAGIN network. In the context of cooperation layer training in the SAGIN networks, training task communication at all levels between users, ground servers, aerial servers, and satellite servers includes the remaining layer parameters, intermediate results of the previous level, and labeling data. Initially, the user offloads work to a terrestrial server. The remaining training work is transmitted from the terrestrial server to an aerial server and finally from the aerial server to a satellite server. This layered offloading strategy is adopted primarily due to the reduced communication distance between successive network layers, specifically between the aerial and satellite servers, compared to the longer distance between the user and the satellite server directly.

Let's consider an edge training scheme where the work $d_n$ of user $n$ is first pushed to the terrestrial server, which gets $(1-\varphi_n^{(u)}) d_n$ training parameters but only completes $\varphi_n^{(t)} d_n$ of that while freezing the remaining modules' parameters (i.e., $(1-\varphi_n^{(u)}-\varphi_n^{(t)})d_n$). The terrestrial server then pushes some of its work to the aerial server, which finishes $\varphi_n^{(a)}d_n$ parameters. Finally, the aerial server pushes some of its tasks to the satellite server, which trains $\varphi_n^{(s)}d_n$ parameters. Note that $\varphi_n^{(u)} + \varphi_n^{(t)} + \varphi_n^{(a)} + \varphi_n^{(s)} = 1$. Users' tasks are gradually distributed to servers at all levels, a process similar to diffusion in thermodynamics.


\subsubsection{Computing speed partition ratio decisions}
We consider using continuous variables $\gamma_{n}^{(u)}$, $\gamma_{n,m}^{(t)}$, $\gamma_{n,m}^{(a)}$, and \mbox{$\gamma_{n,m}^{(s)} \in [0,1]$} to indicate the computing speed partition ratios of the user, terrestrial server, aerial server, and satellite server, respectively. Thus, the actually used computing speeds of user $n$, terrestrial server $m^{(t)}$, aerial server $m^{(a)}$, satellite server $m^{(s)}$ are $\gamma_{n}^{(u)}f_n$, $\gamma_{n,m}^{(t)}f_{m_t}$, $\gamma_{n,m}^{(a)}f_{m_a}$, $\gamma_{n,m}^{(s)}f_{m_s}$, respectively. $f_n$ (unit: FLOPs) is the maximum computing speed of user $n$. $f_{m_t}$, $f_{m_a}$, and $f_{m_s}$ are the maximum computing speeds of terrestrial server $m^{(t)}$, aerial server $m^{(a)}$, and satellite server $m^{(s)}$, respectively. 
We define $\bm{\gamma^{(u)}}:=[\gamma^{(u)}_n]|_{n \in \mathcal{N}}$, $\bm{\gamma^{(i)}}:=[\gamma^{(i)}_{n,m}]|_{n \in \mathcal{N}, m \in \mathcal{M}^{(i)}}$, for $i \in \{t,a,s\}$, and $\bm{\gamma}:=\{\bm{\gamma^{(u)}},\bm{\gamma^{(t)}},\bm{\gamma^{(a)}},\bm{\gamma^{(s)}}\}$.

\subsubsection{Communication bandwidth partition ratio decisions}
We consider using continuous variables $\phi_{n,m}^{(t)}$, $\phi_{n,m}^{(a)}$, and \mbox{$\phi_{n,m}^{(s)} \in [0,1]$} to indicate the communication bandwidth partition ratios of the terrestrial server $m^{(t)}$, aerial server $m^{(a)}$, and satellite server $m^{(s)}$, respectively. The actual allocated bandwidth from terrestrial server $m^{(t)}$, aerial server $m^{(a)}$, and satellite server $m^{(s)}$ to the user or server of the previous level are $\phi_{n,m}^{(t)}b_{m_t}$, $\phi_{n,m}^{(a)}b_{m_a}$, and $\phi_{n,m}^{(s)}b_{m_s}$, respectively. $b_{m_t}$, $b_{m_a}$, and $b_{m_s}$ are the maximum bandwidth that terrestrial server $m^{(t)}$, aerial server $m^{(a)}$, and satellite server $m^{(s)}$ can allocate to the user or server of the previous level. We define $\bm{\phi^{(i)}}:=[\phi^{(i)}_{n,m}]|_{n \in \mathcal{N}, m \in \mathcal{M}^{(i)}}$, for $i \in\{t,a,s\}$, and $\bm{\phi}:=\{\bm{\phi^{(t)}},\bm{\phi^{(a)}},\bm{\phi^{(s)}}\}$.

\subsubsection{Transmission power partition ratio decisions}
We consider using continuous variables $\rho_{n}^{(u)}$, $\rho_{n,m}^{(t)}$, and \mbox{$\rho_{n,m}^{(a)} \in [0,1]$} to indicate the transmission power partition ratios of the user $n$, terrestrial server $m^{(t)}$, and aerial server $m^{(a)}$, respectively. Therefore, the actual used transmission power of user $n$, terrestrial server $m^{(t)}$, and aerial server $m^{(a)}$ are $\rho_{n}^{(u)}p_n$, $\rho_{n,m}^{(t)}p_{m_t}$, and $\rho_{n,m}^{(a)}p_{m_a}$, respectively. $p_n$, $p_{m_t}$, and $p_{m_a}$ is the maximum transmission power of user $n$, terrestrial server $m^{(t)}$, and aerial server $m^{(a)}$, respectively. We define $\bm{\rho^{(u)}}:=[\rho^{(u)}_n]|_{n \in \mathcal{N}}$, $\bm{\rho^{(i)}}:=[\rho^{(i)}_{n,m}]|_{n \in \mathcal{N}, m \in \mathcal{M}^{(i)}}$, for $i \in \{t,a\}$, and $\bm{\rho}:=\{\bm{\rho^{(u)}},\bm{\rho^{(t)}},\bm{\rho^{(a)}}\}$.
\subsubsection{User association decisions}
We use binary variables $x_{n,m}^{(t)}$, $x_{n,m}^{(a)}$, and $x_{n,m}^{(s)} \in \{0,1\}$ to indicate the connection of $\{$user $n$, terrestrial server $m^{(t)}\}$, $\{$terrestrial server $m^{(t)}$, aerial server $m^{(a)}\}$, and $\{$aerial server $m^{(a)}$, satellite server $m^{(s)}\}$ for processing the offloading training work from user $n$, respectively. These connection decisions are made based on some metrics that we want to optimize. We define $\bm{x^{(i)}}:=[x^{(i)}_{n,m}]|_{n \in \mathcal{N}, m \in \mathcal{M}^{(i)}}$, for $i \in \{t,a,s\}$, and $\bm{x}:=\{\bm{x^{(t)}},\bm{x^{(a)}},\bm{x^{(s)}}\}$.

\subsubsection{Wireless communication model}
The up-link channel is considered in the wireless communication between users and one terrestrial base station or aerial/satellite server. We employ frequency division multiple access (FDMA) to ensure non-interfering communication between users and servers.
For the mobile user $n$ and the terrestrial server $m^{(t)}$ and the transmission rate is
\begin{talign}
    r_{n,m_t} = \phi_{n,m}^{(t)}b_{m_t} \log_2 (1 + \frac{\rho_n^{(u)} p_n g_{n,m_t}}{\sigma^2 \phi_{n,m}^{(t)} b_{m_t}}),
\end{talign}
where $\phi_{n,m}^{(t)}b_{m_t}$ is the allocated bandwidth between user $n$ and the server $m^{(t)}$, $\rho_n^{(u)}p_n$ is the transmit power of user $n$, $g_{n,m_t}$ is the channel gain between the user $n$ and the server $m^{(t)}$, and $\sigma^2$ is the noise power spectral density. Similarly, we can define the transmission rate between terrestrial server $m^{(t)}$ and aerial server $m^{(a)}$, and that between aerial server $m^{(a)}$ and satellite server $m^{(s)}$ as $r_{m_t,m_a}$ and $r_{m_a,m_s}$, respectively.

\subsection{System cost}
\subsubsection{Time consumption}
In this section, we discuss time consumption in the PEFT edge training system. For user $n$, he needs to train $\varphi_n^{(u)}d_n$ parameters. Based on the training time estimation given in \cite{narayanan2021efficient}, the training time is 
\begin{talign}
T^{(up)}_n = \frac{e_n t_n \varphi_n^{(u)}d_n}{\gamma_n^{(u)}f_n},
\end{talign}
where $e_n$ is the training epochs of user $n$, $t_n = \omega_f d_n^{(t)}$, $\omega_f$ is the ratio that transforms each training parameter into FLOPs, and $\omega_f$ is eight (FLOPs/(parameters$\cdot$tokens)) in \cite{narayanan2021efficient}.
Then, user $n$ transmits the remaining parameters, intermediate results, and labeling data $(1-\varphi_n^{(u)})d_n + d_n^{(l)}$ to the connected terrestrial server $m^{(t)}$. Data transmission time in this phase is 
\begin{talign}
T^{(ut)}_{n,m} = \frac{x_{n,m}^{(t)}[\omega_b(1-\varphi_n^{(u)})d_n+ d_n^{(l)}]}{r_{n,m_t}}.
\end{talign}
For terrestrial server $m^{(t)}$, after receiving the remaining training parameters, the intermediate results, and labeling data from user $n$, it would allocate some computing resources for processing the partial $\varphi_n^{(t)}$ of training task for user $n$. The training time of terrestrial server $m^{(t)}$ is 
\begin{talign}
T^{(tp)}_{n,m} = \frac{e_{m_t}x_{n,m}^{(t)}t_n\varphi_n^{(t)}d_n }{\gamma_{n,m}^{(t)}f_{m_t}},
\end{talign}
where $e_{m_t}$ is the training epochs of terrestrial server $m^{(t)}$.
Then, terrestrial server $m^{(t)}$ transmits $\omega_b(1-\varphi_n^{(u)}-\varphi_n^{(t)})d_n + d_n^{(l)}$ data to the connected aerial edge server $m^{(a)}$. Data transmission time within this period is
\begin{talign}
T^{(tt)}_{n,m} = \frac{x_{n,m}^{(a)}[\omega_b(1-\varphi_n^{(u)}-\varphi_n^{(t)})d_n + d_n^{(l)}]}{r_{m_t,m_a}}.
\end{talign}
Aerial server $m^{(a)}$ processes $\varphi_n^{(a)}$ part of those parameters once received and the training time consumed is given as
\begin{talign}
T^{(ap)}_{n,m} = \frac{e_{m_a}x_{n,m}^{(a)}t_n\varphi_n^{(a)}d_n}{\gamma_{n,m}^{(a)}f_{m_a}},
\end{talign}
where $e_{m_a}$ is the training epochs of aerial server $m^{(a)}$.
After finishing partial training tasks, aerial server $m^{(a)}$ would send the remaining data to connected (if any) satellite server $m^{(s)}$ and related data transmission time is 
\begin{talign}
T^{(at)}_{n,m} = \frac{x_{n,m}^{(s)}[\omega_b(1-\varphi_n^{(u)}-\varphi_n^{(t)}-\varphi_n^{(a)})d_n + d_n^{(l)}]}{r_{m_a,m_s}}.
\end{talign}
For the satellite server $m^{(s)}$, it completes the remaining training tasks, and related training time can be given as
\begin{talign}
T^{(sp)}_{n,m} = \frac{e_{m_s}x_{n,m}^{(s)}t_n\varphi_n^{(s)}d_n}{\gamma_{n,m}^{(s)}f_{m_s}},
\end{talign}
where $e_{m_s}$ is the training epochs of satellite server $m^{(s)}$.
\subsubsection{Energy consumption}
Next, we analyze the energy consumption in the PEFT edge training system. For the user $n$, it finishes local training work and the energy consumption is 
\begin{talign}
E^{(up)}_n = e_n\kappa_n t_n\varphi_n^{(u)}d_n(\gamma_n^{(u)}f_n)^2,
\end{talign}
where $\kappa_n$ is the GPU computational efficiency of user $n$, indicating how power consumption increases with faster computing speeds. The wireless transmission energy of user $n$ is
\begin{talign}
E^{(ut)}_{n,m} = \rho_n^{(u)}p_n \frac{x_{n,m}^{(t)}[\omega_b(1-\varphi_n^{(u)})d_n+ d_n^{(l)}]}{r_{n,m_t}}.
\end{talign}
For the terrestrial server $m^{(t)}$, it trains $\varphi_n^{(t)}d_n$ parameters and energy consumption of this training phase is 
\begin{talign}
E^{(tp)}_{n,m} = x_{n,m}^{(t)}e_{m_t}\kappa_{m_t}t_n\varphi_n^{(t)}d_n(\gamma_{n,m}^{(t)}f_{m_t})^2,
\end{talign}
where $\kappa_{m_t}$ is the GPU computational efficiency of terrestrial server $m_t$. 
The transmission energy consumption at the terrestrial server level is 
\begin{talign}
E^{(tt)}_{n,m} = \rho_{n,m}^{(t)}p_{m_t}\frac{x_{n,m}^{(a)}[\omega_b(1-\varphi_n^{(u)}-\varphi_n^{(t)})d_n + d_n^{(l)}]}{r_{m_t,m_a}}.
\end{talign}
For the aerial server $m^{(a)}$, its training energy consumption is
\begin{talign}
E^{(ap)}_{n,m} = x_{n,m}^{(a)}e_{m_a}\kappa_{m_a}t_n\varphi_n^{(a)}d_n(\gamma_{n,m}^{(a)}f_{m_a})^2,
\end{talign}
where $\kappa_{m_a}$ is the GPU computational efficiency of aerial server $m_a$. The transmission energy consumption of aerial server $m^{(a)}$ is 
\begin{talign}
E^{(at)}_{n,m} = \rho_{n,m}^{(a)}p_{m_a}\frac{x_{n,m}^{(s)}[\omega_b(1-\varphi_n^{(u)}-\varphi_n^{(t)}-\varphi_n^{(a)})d_n+ d_n^{(l)}]}{r_{m_a,m_s}}.
\end{talign}
The training energy consumption of satellite server $m^{(s)}$ is 
\begin{talign}
E^{(sp)}_{n,m} = x_{n,m}^{(s)}e_{m_s}\kappa_{m_s}t_n\varphi_n^{(s)}d_n(\gamma_{n,m}^{(s)}f_{m_s})^2.
\end{talign}

\section{Studied Optimization Problem} \label{sec.optimization_problem}
In this section, we present the studied optimization problem and we first define parameter training efficiency as follows:
\begin{definition}[Parameter Training Efficiency]
    Parameter training efficiency (PTE) $:= \frac{\text{training parameter size}}{\text{delay + energy}}$. The parameter training consumption of each level includes the parameter training consumption of the level and the wireless data consumption of the upper level. For example, we assume $\varphi_n^{(s)}d_n$ parameters are trained in the satellite server $m^{(s)}$. The cost of $\varphi_n^{(s)}d_n$ parameters includes the delay and energy consumption of training them and the data transmission delay and energy consumption from the aerial server $m^{(a)}$.
\end{definition}
Based on the definition of PTE, we give the PTEs of user $n$, terrestrial server $m^{(t)}$, aerial server $m^{(a)}$, and satellite server $m^{(s)}$ as follows:
\begin{talign}
&\frac{\varphi_n^{(u)}d_n}{cost_n^{(u)}} = \frac{\varphi_n^{(u)}d_n}{\omega_t T_n^{(up)} + \omega_e E_n^{(up)}},\\
&\frac{\varphi_n^{(t)}d_n}{cost_{n,m}^{(t)}} =\frac{\varphi_n^{(t)}d_n}{\omega_t (T_{n,m}^{(ut)} + T_{n,m}^{(tp)}) + \omega_e (E_{n,m}^{(ut)} + E_{n,m}^{(tp)})},\\
&\frac{\varphi_n^{(a)}d_n}{cost_{n,m}^{(a)}} =\frac{\varphi_n^{(a)}d_n}{\omega_t (T_{n,m}^{(tt)} + T_{n,m}^{(ap)}) + \omega_e (E_{n,m}^{(tt)} + E_{n,m}^{(ap)})},\\
&\frac{\varphi_n^{(s)}d_n}{cost_{n,m}^{(s)}} =\frac{\varphi_n^{(s)}d_n}{\omega_t (T_{n,m}^{(at)} + T_{n,m}^{(sp)}) + \omega_e (E_{n,m}^{(at)} + E_{n,m}^{(sp)})},
\end{talign}
where $\omega_t$ and $\omega_e$ are weight parameters of delay and energy terms, respectively.
Our studied optimization problem is to maximize the sum of PTE at all levels in SAGIN and it is given as follows: 
\begin{subequations}\label{prob1}
\begin{talign}
\mathbb{P}_{1}:&\max\limits_{\bm{x},\bm{\varphi},\bm{\gamma},\bm{\phi},\bm{\rho}}  \sum\limits_{n \in \mathcal{N}}\sum\limits_{m \in \mathcal{M}}\Big(\frac{c_{n,m}^{(t)}\varphi_n^{(t)}d_n}{cost_{n,m}^{(t)}} + \frac{c_{n,m}^{(a)}\varphi_n^{(a)}d_n}{cost_{n,m}^{(a)}} \nonumber \\
&\hspace{70pt}+ \frac{c_{n,m}^{(s)}\varphi_n^{(s)}d_n}{cost_{n,m}^{(s)}}\Big)+\sum\limits_{n \in \mathcal{N}}\frac{c_n^{(u)}\varphi_n^{(u)}d_n}{cost_n^{(u)}}
\tag{\ref{prob1}}\\
\text{s.t.} \quad 
& x_{n,m}^{(i)} \in \{0,1\}, \forall n \in \mathcal{N}, m \in \mathcal{M}^{(i)}, i \in \{t,a,s\},\label{x_range_constr} \\
& \sum_{m \in \mathcal{M}^{(i)}} x_{n,m}^{(i)} = 1, \forall n \in \mathcal{N}, i \in \{t,a,s\},\label{x_sum_constr} \\
& \varphi_n^{(i)} \in [0,1], \forall n \in \mathcal{N}, i \in \{u,t,a,s\},\label{varphi_range_constr}\\
&\varphi_n^{(u)} + \varphi_n^{(t)} + \varphi_n^{(a)} + \varphi_n^{(s)} = 1, \forall n \in \mathcal{N},\label{varphi_sum_constr2}\\
&\phi_{n,m}^{(i)} \in [0,1], \forall n \in \mathcal{N}, m \in \mathcal{M}^{(i)}, i \in \{t,a,s\},\label{phi_range_constr}\\
&\sum_{n\in \mathcal{N}} x_{n,m}^{(i)} \phi_{n,m}^{(i)} \leq 1, \forall m \in \mathcal{M}^{(i)}, i \in \{t,a,s\},\label{phi_sum_constr}\\
&\gamma_n^{(u)},\gamma_{n,m}^{(i)} \in [0,1], \forall n \in \mathcal{N}, m \in \mathcal{M}^{(i)}, i \in \{t,a,s\},\label{gamma_range_constr}\\
&\sum_{n\in \mathcal{N}} x_{n,m}^{(i)} \gamma_{n,m}^{(i)} \leq 1, \forall m \in \mathcal{M}^{(i)},i \in \{t,a,s\},\label{gamma_sum_constr}\\
&\rho_n^{(u)},\rho_{n,m}^{(i)} \in [0,1], \forall n \in \mathcal{N}, m \in \mathcal{M}^{(i)}, i \in \{t,a\},\label{rho_range_constr}\\
&\sum_{n\in \mathcal{N}} x_{n,m}^{(i)} \rho_{n,m}^{(i)} \leq 1, \forall m \in \mathcal{M}^{(i)}, i \in \{t,a\}\label{rho_sum_constr},
\end{talign}
\end{subequations}
where $c_n^{(u)}$ is the PTE preference of user $n$, $c_{n,m}^{(t)}$, $c_{n,m}^{(a)}$, and $c_{n,m}^{(s)}$ are the PTE preferences of terrestrial server $m^{(t)}$, aerial server $m^{(a)}$, and satellite server $m^{(s)}$ for user $n$'s training tasks. Constraint (\ref{x_range_constr}) means server $m$ is chosen for user $n$'s tasks or not. Constraint (\ref{x_sum_constr}) indicates that there is one and only one terrestrial/aerial/satellite server chosen for user $n$'s tasks. 
Constraint (\ref{varphi_range_constr}) represents the offloading ratio of training tasks from user $n$ to the selected server at each hierarchical layer. Constraint (\ref{varphi_sum_constr2}) ensures that the total training task of the user $n$ is partitioned among all four layers. Constraint (\ref{phi_range_constr}) represents the bandwidth allocation ratio for the user $n$ from the server $m$ at the layer $i$.
Constraint (\ref{phi_sum_constr}) represents the allocated bandwidth limit of each server. 
Constraint (\ref{gamma_range_constr}) represents the computing resource allocation ratio for user $n$'s task, either locally or from the server.
Constraint (\ref{gamma_sum_constr}) is the allocated computing resource limit of each server. 
Constraint (\ref{rho_range_constr}) is the transmission power allocation ratio for data sent from the user $n$ or from the server $m$ at the layer $i$.
Constraint (\ref{rho_sum_constr}) is the allocated transmission power limit of each server. For the sake of simplicity, we first ignore the mobility of HAPs and satellites. The discussion of the mobility-aware PEFT edge training under SAGIN networks will be presented in Section \ref{sec.mobility}.

\section{Proposed PARA algorithm for SAGIN}\label{sec.proposed_para_algorithm}
In this section, we present our proposed PARA algorithm to solve the very difficult sum of ratios Problem $\mathbb{P}_1$. This problem is known to be non-convex and NP-hard, posing significant difficulties in obtaining a tractable solution. To address this, we leverage an alternating optimization (AO) framework, which iteratively refines different subsets of variables. The theoretical foundation of our approach is summarized in the following theorem:
\begin{theorem}\label{theorem_solvep1}
    Problem $\mathbb{P}_{1}$ can be transformed into a solvable problem if we alternatively optimize $[\bm{x},\bm{\varphi}]$ and $[\bm{\phi},\bm{\rho},\bm{\gamma}]$.
\end{theorem}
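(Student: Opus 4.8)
The plan is to prove the statement by recasting $\mathbb{P}_1$ as a two-block problem and showing that, once one block is frozen, the other reduces to a subproblem for which an exact (approximation-free) solver exists. I would partition the variables into the \emph{association--offloading} block $[\bm{x},\bm{\varphi}]$ and the \emph{resource} block $[\bm{\phi},\bm{\rho},\bm{\gamma}]$, and run block coordinate ascent: fix one block, solve the other to at least a stationary point, and iterate. Since every update can only increase the objective and the feasible region defined by \eqref{x_range_constr}--\eqref{rho_sum_constr} is compact (all ratios are bounded on the boxes and the simplex \eqref{varphi_sum_constr2}), the objective sequence is monotone and bounded, hence convergent. The ``transformed into a solvable problem'' claim then reduces to establishing solvability of each frozen-block subproblem, which I treat in turn.

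First I would handle the subproblem in $[\bm{x},\bm{\varphi}]$ with $[\bm{\phi},\bm{\rho},\bm{\gamma}]$ held fixed. With the resource variables frozen, the rates $r_{n,m_t},r_{m_t,m_a},r_{m_a,m_s}$ are constants, so each $cost$ denominator becomes \emph{affine} in $\bm{\varphi}$: the training terms are linear in the corresponding $\varphi$, and the transmission terms are linear in the partial sums $1-\varphi_n^{(u)}-\cdots$. Thus, for a fixed association $\bm{x}$, the objective is a sum of linear-over-linear ratios of $\bm{\varphi}$ over the simplex \eqref{varphi_sum_constr2}, which I would linearize exactly with the sum-of-ratios FP transform (cf. \cite{shen2018fractional,10368052}) and solve as a QCQP. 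The binary association $\bm{x}$, constrained by the one-hot condition \eqref{x_sum_constr} together with the per-server caps \eqref{phi_sum_constr}, \eqref{gamma_sum_constr}, \eqref{rho_sum_constr}, I would encode as a weighted assignment/matching instance and resolve with graph-theoretic tools, yielding an exact combinatorial selection.

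Next, for the subproblem in $[\bm{\phi},\bm{\rho},\bm{\gamma}]$ with $[\bm{x},\bm{\varphi}]$ fixed, the numerators $c\,\varphi d_n$ are constants, so only the denominators depend on the variables. The computation ratios $\gamma$ enter each $cost$ as a convex combination of a $1/\gamma$ training-delay term and a $\gamma^2$ training-energy term, with inter-user coupling only through the linear cap \eqref{gamma_sum_constr}, and are therefore amenable to the FP reformulation. The genuinely hard part is the transmission contribution, where $\phi$ and $\rho$ appear inside the Shannon rate $r=\phi b\log_2\!\big(1+\tfrac{\rho p g}{\sigma^2\phi b}\big)$ and $r$ sits in a denominator through both the delay $\propto 1/r$ and the energy $\propto \rho p/r$. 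I would apply the quadratic/sum-of-ratios FP transform to pull each $1/cost$ out of its fraction, introduce auxiliary variables for the rate terms, and argue that the resulting per-iteration inner problem is convex (or an SDP/QCQP after an exact change of variables), so that no upper bound on the logarithm is invoked. The hard part will be exactly this FP step on the coupled $(\phi,\rho)$ rate: verifying joint concavity of the transformed objective over the feasible region and deriving closed-form auxiliary-variable updates, so that the alternating scheme provably returns a stationary point of $\mathbb{P}_1$.
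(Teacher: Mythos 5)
Your overall plan is the same as the paper's: decouple the sum of ratios with a fractional-programming transform, then alternate between the $[\bm{x},\bm{\varphi}]$ block and the $[\bm{\phi},\bm{\rho},\bm{\gamma}]$ block, using a quadratic-transform auxiliary variable for the $\tfrac{\text{power}}{\text{rate}}$ term and a matching step for the binary association. Two structural differences are worth noting. First, the paper applies the sum-of-ratios decoupling \emph{once}, at the outermost level (Lemmas \ref{lemma_p1top2} and \ref{lemma_p2top3}): it introduces global auxiliary variables $\bm{\psi}$, $\bm{\alpha}$, and the epigraph variables $\bm{T}$, obtains the parametric Problem $\mathbb{P}_3$, and only then splits into the two block subproblems, refreshing $\bm{\psi},\bm{\alpha}$ in the outer loop via the KKT identities $\psi=\tfrac{c\varphi d}{cost}$, $\alpha=\tfrac{1}{cost}$. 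You instead re-derive an FP transform inside each frozen-block subproblem; this still yields monotone ascent per block, but the stationarity argument for the overall scheme is cleaner in the paper's ordering because the ratio structure is eliminated before the blocks are ever separated. Second, and more substantively, your treatment of the association--offloading block nests a further alternation: you fix $\bm{x}$ to solve $\bm{\varphi}$ as linear-over-affine ratios, then separately resolve $\bm{x}$ by weighted matching. The paper instead keeps the bilinear couplings $x_{n,m}^{(i)}\varphi_n^{(j)}$ intact by stacking $(\bm{\varphi},\bm{x})$ into a single vector $\bm{Q}$, writing the subproblem as a QCQP with the constraint $x(x-1)=0$ (Lemma \ref{lemma_p7top8}), relaxing to an SDP (Lemma \ref{lemma_p8top9}), and only afterwards rounding with the Hungarian algorithm. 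Your nested split is essentially the decoupled strategy the paper uses as its GUCRO/AAUCO baselines and can stall at points the joint QCQP/SDP update would escape; if you want your route to match the paper's guarantee, you should either solve $(\bm{x},\bm{\varphi})$ jointly as they do or at least argue that the matching weights account for the re-optimized $\bm{\varphi}$ under each candidate assignment. The rest of your outline (concavity of the Shannon rate in $(\phi,\rho)$, the $\chi^2\varrho+\tfrac{1}{4\varsigma^2\varrho}$ surrogate, compactness giving a convergent monotone objective sequence) matches the paper's Lemma \ref{lemma_p4top5} and surrounding discussion.
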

\begin{proof}
    \textbf{Theorem \ref{theorem_solvep1}} is proven by the following \textbf{Lemma \ref{lemma_p1top2}}, \textbf{Lemma \ref{lemma_p2top3}}, \textbf{Theorem \ref{theorem_subproblem1}} in Section \ref{sec_subproblem1}, and \textbf{Theorem \ref{theorem_subproblem2}} in Section \ref{sec_subproblem2}.
\end{proof}
\subsection{Pre-transformations for Problem \texorpdfstring{$\mathbb{P}_1$}{}}
Problem $\mathbb{P}_1$ is a sum of multiple ratios problem, where each ratio is a complex non-convex or concave expression. Direct analysis is very difficult. Therefore, we consider adding the following auxiliary variables to simplify the Problem $\mathbb{P}_1$. 
\begin{lemma}\label{lemma_p1top2}
Define new auxiliary variables $\psi_n^{(u)}$, $\psi_{n,m}^{(t)}$, $\psi_{n,m}^{(a)}$, $\psi_{n,m}^{(s)}$, $T_n^{(u)}$, $T_{n,m}^{(t)}$, $T_{n,m}^{(s)}$, and $T_{n,m}^{(s)}$. Let $\bm{\psi^{(u)}}:=[\psi_n^{(u)}]|_{n\in\mathcal{N}}$, $\bm{\psi^{(i)}}:=[\psi_{n,m}^{(i)}]|_{n\in\mathcal{N},m\in\mathcal{M}}$, $i \in \{t,a,s\}$, $\bm{T^{(u)}}:=[T_n^{(u)}]|_{n\in\mathcal{N}}$, $\bm{T^{(i)}}:=[T_{n,m}^{(i)}]|_{n\in\mathcal{N},m\in\mathcal{M}}$, $i \in \{t,a,s\}$, $\bm{T}:=\{\bm{T^{(u)}},\bm{T^{(t)}},\bm{T^{(a)}},\bm{T^{(s)}}\}$, and $\bm{\psi}:=\{\bm{\psi^{(u)}},\bm{\psi^{(t)}},\bm{\psi^{(a)}},\bm{\psi^{(s)}}\}$. Besides, we define functions $\varpi^{(u)}_n$, $\varpi^{(t)}_{n,m}$, $\varpi^{(a)}_{n,m}$, and $\varpi^{(s)}_{n,m}$ as follows:
\begin{talign}
&\varpi^{(u)}_n(\varphi_n^{(u)},\gamma_n^{(u)},\psi_n^{(u)},T_n^{(u)})\nonumber \\
&:=\omega_t T^{(u)} + \omega_e e_n\kappa_n t_n\varphi_n^{(u)}d_n(\gamma_n^{(u)}f_n)^2 - \frac{c_n^{(u)}\varphi_n^{(u)}d_n}{\psi_n^{(u)}}, \label{defvarpiu}\\
&\varpi^{(t)}_{n,m}(x_{n,m}^{(t)},\varphi_n^{(u)},\varphi_n^{(t)},\phi_{n,m}^{(t)},\rho_n^{(u)},\gamma_{n,m}^{(t)},\psi_{n,m}^{(t)},T_{n,m}^{(t)})\nonumber \\
&:=\omega_t T^{(t)} + \omega_e (\rho_n^{(u)}p_n \frac{x_{n,m}^{(t)}[\omega_b(1-\varphi_n^{(u)})d_n+d_n^{(l)}]}{r_{n,m_t}} \nonumber \\
&+ x_{n,m}^{(t)}\kappa_{m_t} e_{m_t}t_n \varphi_n^{(t)}d_n(\gamma_{n,m}^{(t)}f_{m_t})^2) - \frac{c_{n,m}^{(t)}\varphi_n^{(t)}d_n}{\psi_{n,m}^{(t)}}, \label{defvarpit}\\
&\varpi^{(a)}_{n,m}(x_{n,m}^{(a)},\varphi_n^{(u)},\varphi_n^{(t)},\varphi_n^{(a)},\phi_{n,m}^{(a)},\rho_{n,m}^{(t)},\gamma_{n,m}^{(a)},\psi_{n,m}^{(a)},T_{n,m}^{(a)})\nonumber \\
&:=\omega_t T^{(a)} + \omega_e (\rho_{n,m}^{(t)}p_{m_t}\frac{x_{n,m}^{(a)}[\omega_b(1-\varphi_n^{(u)}-\varphi_n^{(t)})d_n+d_n^{(l)}]}{r_{m_t,m_a}} \nonumber \\
&+x_{n,m}^{(a)} e_{m_a}\kappa_{m_a}t_n \varphi_n^{(a)}d_n(\gamma_{n,m}^{(a)}f_{m_a})^2) - \frac{c_{n,m}^{(a)}\varphi_n^{(a)}d_n}{\psi_{n,m}^{(a)}}, \label{defvarpia}\\
&\varpi^{(s)}_{n,m}(x_{n,m}^{(s)},\!\varphi_n^{(u)}\!,\!\varphi_n^{(t)}\!,\!\varphi_n^{(a)}\!\!,\!\varphi_n^{(s)}\!,\!\phi_{n,m}^{(s)},\rho_{n,m}^{(a)},\gamma_{n,m}^{(s)},\psi_{n,m}^{(s)},T_{n,m}^{(s)})\nonumber \\
&:=\omega_t T^{(s)} + \omega_e (\rho_{n,m}^{(a)}p_{m_a}\frac{x_{n,m}^{(s)}[\omega_b(1-\varphi_n^{(u)}-\varphi_n^{(t)}-\varphi_n^{(a)})d_n+d_n^{(l)}]}{r_{m_a,m_s}} \nonumber \\
&+x_{n,m}^{(s)}e_{m_s}\kappa_{m_s}t_n\varphi_n^{(s)}d_n(\gamma_{n,m}^{(s)}f_{m_t})^2) - \frac{c_{n,m}^{(s)}\varphi_n^{(s)}d_n}{\psi_{n,m}^{(s)}}. \label{defvarpis}
\end{talign}
Then the sum of ratios Problem $\mathbb{P}_{1}$ can be transformed into a summation Problem $\mathbb{P}_{2}$:
\begin{subequations}\label{prob2}
\begin{talign}
\mathbb{P}_{2}:&\max\limits_{\bm{x},\bm{\varphi},\bm{\gamma},\bm{\phi},\bm{\rho},\bm{\psi},\bm{T}}  \sum\limits_{n \in \mathcal{N}}\!\sum\limits_{m \in \mathcal{M}} \!(\psi_{n,m}^{(t)} \!+\! \psi_{n,m}^{(a)} \!+\! \psi_{n,m}^{(s)}) \!+\! \sum\limits_{n \in \mathcal{N}}\psi_n^{(u)}
\tag{\ref{prob2}}\\
\text{s.t.} \quad 
& (\text{\ref{x_range_constr}})\text{-}(\text{\ref{rho_sum_constr}}) \nonumber\\
& \varpi^{(u)}_n \leq 0, \forall n \in \mathcal{N},\label{psi_u_constr}\\
& \varpi^{(i)}_{n,m} \leq 0, \forall n \in \mathcal{N}, \forall m \in \mathcal{M}, i \in \{t,a,s\}\label{psi_tas_constr}\\
&T_n^{(up)} \leq T_n^{(u)}, \forall n \in \mathcal{N}, \forall m \in \mathcal{M},\label{Tu_constr}\\
&T_{n,m}^{(ut)} + T_{n,m}^{(tp)} \leq T_{n,m}^{(t)}, \forall n \in \mathcal{N}, \forall m \in \mathcal{M},\label{Tt_constr} \\
&T_{n,m}^{(tt)} + T_{n,m}^{(ap)} \leq T_{n,m}^{(a)}, \forall n \in \mathcal{N}, \forall m \in \mathcal{M},\label{Ta_constr}\\
&T_{n,m}^{(at)} + T_{n,m}^{(sp)} \leq T_{n,m}^{(s)}, \forall n \in \mathcal{N}, \forall m \in \mathcal{M}. \label{Ts_constr}
\end{talign}
\end{subequations}
\end{lemma}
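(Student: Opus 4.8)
The plan is to prove that $\mathbb{P}_1$ and $\mathbb{P}_2$ are equivalent, in the sense that they attain the same optimal value and an optimizer of one induces an optimizer of the other. The transformation combines two reformulations that I would apply uniformly to all four ratio families $(u,t,a,s)$: a hypograph lift of each PTE ratio through the variables $\bm{\psi}$, and an epigraph lift of each delay sum through the variables $\bm{T}$. Throughout I would write a generic ratio as $\frac{A}{B}$, where $A$ is the nonnegative numerator $c\,\varphi d$ and $B$ is the cost $cost = \omega_t(\cdot)+\omega_e(\cdot)>0$.

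First I would introduce, for each ratio, a scalar $\psi \le \frac{A}{B}$ and show $\max \sum \frac{A}{B} = \max \sum \psi$ over the same feasible set augmented by these inequalities. The ``$\ge$'' direction is immediate because each $\psi$ is dominated by its ratio; for ``$\le$'' I would note that any feasible point of $\mathbb{P}_1$ extends to a feasible point of the lifted problem with equal objective by setting $\psi=\frac{A}{B}$, and that at any maximizer the bound is active, since otherwise $\psi$ could be raised, strictly improving $\sum\psi$ without violating feasibility. Using $A\ge 0$, $B>0$, and $\psi>0$ at the optimum, I would then clear denominators to rewrite $\psi \le \frac{A}{B}$ equivalently as $B \le \frac{A}{\psi}$, which is precisely the $\varpi \le 0$ form of \eqref{psi_u_constr}--\eqref{psi_tas_constr} once $B$ is split into its energy part (including the rate-dependent transmission energy, which is kept explicit) and its delay part.

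Next I would absorb the delay part of each cost into the epigraph variables: constraints \eqref{Tu_constr}--\eqref{Ts_constr} force $T_n^{(u)}, T_{n,m}^{(t)}, T_{n,m}^{(a)}, T_{n,m}^{(s)}$ to upper-bound the corresponding sums of actual transmission and training times, and in the $\varpi$ functions the true delays are replaced by these auxiliary variables, which enter linearly with the positive weight $\omega_t$. I would then argue tightness: a larger $T$ only enlarges the left-hand side of $\varpi\le 0$ and so shrinks the admissible range of $\psi$, hence every maximizer drives each $T$ down to the value of its delay sum, and substituting $\bm{T}$ for the delays loses nothing. Combining these two lifts yields a correspondence between optimizers and equality of optimal values, which is the claim of the lemma; the remaining constraints \eqref{x_range_constr}--\eqref{rho_sum_constr} are carried over verbatim.

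The step I expect to be the main obstacle is the rigorous justification that the lifted constraints are \emph{active} at optimality, since this is what makes $\mathbb{P}_2$ an exact reformulation rather than a mere relaxation. Concretely, I must (i) verify $B>0$ and $A\ge 0$ so that the denominator-clearing $\psi \le \frac{A}{B} \Leftrightarrow B \le \frac{A}{\psi}$ is valid and $\psi>0$ may be assumed, which in turn requires accounting for the association constraint \eqref{x_sum_constr} and the binary $\bm{x}$ so that each active cost term is well defined and strictly positive; and (ii) establish the monotonicity argument showing that the $\bm{\psi}$-bounds and the $\bm{T}$-bounds are simultaneously tight at any maximizer. Once these points are settled, the equivalence of $\mathbb{P}_1$ and $\mathbb{P}_2$ follows directly.
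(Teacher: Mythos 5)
Your proposal follows essentially the same route as the paper's proof: introduce hypograph variables $\bm{\psi}$ bounding each PTE ratio from below, clear denominators to obtain the $\varpi\le 0$ constraints, and introduce epigraph variables $\bm{T}$ for the delay sums, then replace the objective by $\sum\psi$. The only difference is that you explicitly supply the tightness-at-optimality argument (and flag the $B>0$, $\psi>0$ caveats needed for the denominator clearing) that the paper's proof leaves implicit, which strengthens rather than changes the argument.
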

\begin{proof}
    Refer to Appendix \ref{append_lemma_p1top2}.
\end{proof}
According to \textbf{Lemma \ref{lemma_p1top2}}, we can transform the sum of ratios Problem $\mathbb{P}_{1}$ to a summation Problem $\mathbb{P}_{2}$ by adding the extra auxiliary variables $\psi_n^{(u)}$, $\psi_{n,m}^{(t)}$, $\psi_{n,m}^{(a)}$, $\psi_{n,m}^{(s)}$, $T_n^{(u)}$, $T_{n,m}^{(t)}$, $T_{n,m}^{(a)}$, $T_{n,m}^{(s)}$, and new funtions $\varpi^{(u)}_n$, $\varpi^{(t)}_{n,m}$, $\varpi^{(a)}_{n,m}$, and $\varpi^{(s)}_{n,m}$. Thanks to $\psi_n^{(u)}$, $\psi_{n,m}^{(t)}$, $\psi_{n,m}^{(a)}$, $\psi_{n,m}^{(s)}$, we convert the sum of ratios of the objective function in Problem $\mathbb{P}_{1}$ to the sum of three variables and the sum of one variable. Besides, we can transfer the troublesome terms about the delay of the objective function in Problem $\mathbb{P}_{1}$ into the constraints (\ref{Tu_constr}), (\ref{Tt_constr}), (\ref{Ta_constr}), and (\ref{Ts_constr}) by introducing the variables $T_n^{(u)}$, $T_{n,m}^{(t)}$, $T_{n,m}^{(a)}$, $T_{n,m}^{(s)}$. However, the constraints (\ref{psi_u_constr}) and (\ref{psi_tas_constr}) are not convex and \mbox{Problem $\mathbb{P}_{2}$} is still hard to solve, and then we introduce the \textbf{Lemma \ref{lemma_p2top3}}.

\begin{lemma}\label{lemma_p2top3}
Define non-negative multipliers $\alpha^{(u)}_n$ and $\alpha^{(i)}_{n,m}$, $i\in\{t,a,s\}$. Let $\bm{\alpha^{(u)}}:=[\alpha^{(u)}_n]|_{n\in\mathcal{N}}$, $\bm{\alpha^{(i)}}:=[\alpha^{(i)}_{n,m}]|_{n\in\mathcal{N},m\in\mathcal{M}^{(i)}}$, and $\bm{\alpha}:=\{\bm{\alpha^{(u)}}, \bm{\alpha^{(i)}}\}$, $i\in\{t,a,s\}$. The Problem $\mathbb{P}_{2}$ can be transformed into $\mathbb{P}_{3}$:
\begin{subequations}\label{prob3}
\begin{talign}
\mathbb{P}_{3}:&\max\limits_{\bm{x},\bm{\varphi},\bm{\gamma},\bm{\phi},\bm{\rho},\bm{\psi},\bm{\alpha},\bm{T}}  \sum\limits_{n \in \mathcal{N}}\sum\limits_{m \in \mathcal{M}} \Big[\alpha_{n,m}^{(t)}(c_{n,m}^{(t)}\varphi_n^{(t)}d_n \nonumber \\
&-\psi_{n,m}^{(t)}  cost_{n,m}^{(t)}) +  \alpha_{n,m}^{(a)}(c_{n,m}^{(a)}\varphi_n^{(a)}d_n - \psi_{n,m}^{(a)}cost_{n,m}^{(a)}) \nonumber \\
&+  \alpha_{n,m}^{(s)} (c_{n,m}^{(s)}\varphi_n^{(s)}d_n - \psi_{n,m}^{(s)}cost_{n,m}^{(s)})\Big] \nonumber\\
&+ \sum\limits_{n \in \mathcal{N}} \alpha_n^{(u)}(c_n^{(u)}\varphi_n^{(u)}d_n - \psi_n^{(u)}cost_n^{(u)})
\tag{\ref{prob3}}\\
\text{s.t.} \quad 
& (\text{\ref{x_range_constr}})\text{-}(\text{\ref{rho_sum_constr}}), (\text{\ref{Tu_constr}})\text{-}(\text{\ref{Ts_constr}})\nonumber.
\end{talign}
\end{subequations}
At Karush–Kuhn–Tucker (KKT) points of Problem $\mathbb{P}_{3}$, we can obtain that
\begin{talign}
&\psi_n^{(u)} = \frac{c_n^{(u)}\varphi_n^{(u)}d_n}{cost_n^{(u)}},\label{eq_psi_u}\\ 
&\psi_{n,m}^{(i)} = \frac{c_{n,m}^{(i)}\varphi_n^{(i)}d_n}{cost_{n,m}^{(i)}}, i \in \{t,a,s\}\label{eq_psi_tas},\\
&\alpha_n^{(u)} = \frac{1}{cost_n^{(u)}},\label{eq_alpha_u}\\
&\alpha_{n,m}^{(i)} = \frac{1}{cost_{n,m}^{(i)}}, i \in \{t,a,s\}.\label{eq_alpha_tas}
\end{talign}
\end{lemma}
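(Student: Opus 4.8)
The plan is to read $\mathbb{P}_2$ as the standard epigraph (hypograph) form of the sum-of-ratios problem and to recognize the passage to $\mathbb{P}_3$ as a Dinkelbach-type Lagrangian (parametric) reformulation, in the spirit of \cite{10368052,10.1145/3565287.3610271}. Recall from \textbf{Lemma \ref{lemma_p1top2}} that each constraint $\varpi^{(u)}_n\le 0$ (resp. $\varpi^{(i)}_{n,m}\le 0$) is exactly $\psi_n^{(u)}\,cost_n^{(u)}\le c_n^{(u)}\varphi_n^{(u)}d_n$ (resp. $\psi_{n,m}^{(i)}\,cost_{n,m}^{(i)}\le c_{n,m}^{(i)}\varphi_n^{(i)}d_n$), together with the delay constraints (\ref{Tu_constr})--(\ref{Ts_constr}) that fix the $cost$ terms through the auxiliary $\bm{T}$. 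Since $\mathbb{P}_2$ maximizes $\sum\psi$ and each $\psi$ enters only its own ratio constraint with a positive coefficient $cost>0$, I would first argue that every such constraint is active at any optimum, so that the upper bound $\psi=c\varphi d/cost$ is attained; this already recovers the relations (\ref{eq_psi_u})--(\ref{eq_psi_tas}).

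To obtain the multiplier relations (\ref{eq_alpha_u})--(\ref{eq_alpha_tas}) and to justify $\mathbb{P}_3$ itself, I would dualize precisely the ratio constraints (\ref{psi_u_constr})--(\ref{psi_tas_constr}) with nonnegative multipliers $\bm{\alpha}$, keeping all remaining constraints (\ref{x_range_constr})--(\ref{rho_sum_constr}) and (\ref{Tu_constr})--(\ref{Ts_constr}) inside the feasible set. The resulting partial Lagrangian is $\sum \psi + \sum \alpha\big(c\varphi d-\psi\,cost\big)$, whose $\bm{\alpha},\bm{\psi}$-dependent non-constant part is exactly the objective of $\mathbb{P}_3$. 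Writing KKT stationarity with respect to each $\psi$, e.g. $\partial/\partial\psi_n^{(u)}\big[\psi_n^{(u)}-\alpha_n^{(u)}\psi_n^{(u)}cost_n^{(u)}\big]=1-\alpha_n^{(u)}cost_n^{(u)}=0$, and using $cost_n^{(u)}>0$, gives $\alpha_n^{(u)}=1/cost_n^{(u)}$; the identical computation at every server level yields (\ref{eq_alpha_u})--(\ref{eq_alpha_tas}). Substituting $\alpha=1/cost$ back shows $1-\alpha\,cost=0$, so the stray $\sum\psi$ contribution is annihilated and each $\mathbb{P}_3$ summand $\alpha(c\varphi d-\psi\,cost)$ equals $0$ at the KKT point, which is consistent with the activeness used to derive (\ref{eq_psi_u})--(\ref{eq_psi_tas}).

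Finally, I would close the equivalence loop: at a KKT point the $\mathbb{P}_3$ objective value is $0$, while the genuine PTE value recovered through $\sum\psi=\sum c\varphi d/cost$ equals the value of $\mathbb{P}_1$, so the stationary points of $\mathbb{P}_3$ correspond to those of $\mathbb{P}_2$ (hence $\mathbb{P}_1$) on the shared variables $\bm{x},\bm{\varphi},\bm{\gamma},\bm{\phi},\bm{\rho}$. This is exactly the decomposition that makes the alternating scheme well posed: for fixed $(\bm{x},\bm{\varphi},\dots)$ the formulas (\ref{eq_psi_u})--(\ref{eq_alpha_tas}) are the closed-form optimal $\bm{\psi},\bm{\alpha}$ updates, and for fixed $(\bm{\psi},\bm{\alpha})$ the remaining problem is the one handled in \textbf{Theorem \ref{theorem_subproblem1}} and \textbf{Theorem \ref{theorem_subproblem2}}.

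The step I expect to be the main obstacle is not the KKT algebra but the \emph{exactness} of the parametric relaxation: I must verify a constraint qualification for $\mathbb{P}_2$ so that its KKT conditions are necessary, rule out the degenerate branch $\alpha=0$ (which would arise from an inactive ratio constraint) via the activeness argument above, and confirm that every KKT point of $\mathbb{P}_3$ is feasible and stationary for $\mathbb{P}_2$ rather than a spurious point introduced by dualization. Establishing this correspondence simultaneously across all four SAGIN levels --- users, terrestrial, aerial, and satellite --- while the $cost$ terms stay coupled to $\bm{T}$ through (\ref{Tu_constr})--(\ref{Ts_constr}), is the delicate bookkeeping the proof must carry out.
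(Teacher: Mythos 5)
Your proposal follows essentially the same route as the paper's own proof: dualize the ratio constraints (\ref{psi_u_constr})--(\ref{psi_tas_constr}) with nonnegative multipliers $\bm{\alpha}$, read off $\alpha=1/cost$ from stationarity of the Lagrangian in each $\psi$, and recover $\psi=c\varphi d/cost$ from complementary slackness (which your activeness argument makes explicit), so that the surviving Lagrangian terms form the objective of $\mathbb{P}_3$. The exactness and constraint-qualification concerns you flag at the end are legitimate but are not resolved in the paper either, which simply cites \cite{10.1145/3565287.3610271} for the validity of the transformation.
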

\begin{proof}
    Refer to Appendix \ref{append_lemma_p2top3}.
\end{proof}
Based on \textbf{Lemma \ref{lemma_p2top3}}, we can split the ratio form of the objective function in Problem $\mathbb{P}_{1}$ and transform the non-convex constraints (\ref{psi_u_constr})-(\ref{psi_tas_constr}) in Problem $\mathbb{P}_{2}$ into the objective function in Problem $\mathbb{P}_{3}$ by introducing new auxiliary variables $\alpha^{(u)}_n$, $\alpha^{(t)}_{n,m}$, $\alpha^{(a)}_{n,m}$, $\alpha^{(s)}_{n,m}$. Besides, based on the analysis of the KKT conditions of Problem $\mathbb{P}_{3}$, we can obtain the relationships between auxiliary variables $[\alpha^{(u)}_n$, $\alpha^{(t)}_{n,m}$, $\alpha^{(a)}_{n,m}$, $\alpha^{(s)}_{n,m}$, $\psi_n^{(u)}$, $\psi^{(t)}_{n,m}$, $\psi^{(a)}_{n,m}$, $\psi_{n,m}^{(s)}]$ and original variables $[x_{n,m}^{(t)}$, $x_{n,m}^{(a)}$, $x_{n,m}^{(s)}$, $\varphi_n^{(u)}$, $\varphi_n^{(t)}$, $\varphi_n^{(a)}$, $\varphi_n^{(s)}$, $\gamma_n^{(u)}$, $\gamma_{n,m}^{(t)}$, $\gamma_{n,m}^{(a)}$, $\gamma_{n,m}^{(s)}$, $\phi_{n,m}^{(t)}$, $\phi_{n,m}^{(a)}$, $\phi_{n,m}^{(s)}$, $\rho_n^{(u)}$, $\rho_{n,m}^{(t)}$, $\rho_{n,m}^{(a)}$, $\rho_{n,m}^{(s)}$, $T^{(u)}_n$, $T^{(t)}_{n,m}$, $T^{(a)}_{n,m}$, $T^{(s)}_{n,m}]$ as Equations (\ref{eq_psi_u}), (\ref{eq_psi_tas}), (\ref{eq_alpha_u}), (\ref{eq_alpha_tas}). At the \mbox{$i$-th} iteration, we first fix $\bm{\alpha}^{(i-1)}$ and $\bm{\psi}^{(i-1)}$, and then optimize $\bm{x}^{(i)},\bm{\varphi}^{(i)},\bm{\phi}^{(i)},\bm{\gamma}^{(i)},\bm{\rho}^{(i)}, \bm{T}^{(i)}$. We then update $\bm{\alpha}^{(i)}$ and $\bm{\psi}^{(i)}$ according to their results. Repeat the above optimization steps until the objective function value of Problem $\mathbb{P}_{3}$ in the $i$-th and \mbox{$(i-1)$-th} iterations is less than an acceptable threshold, and we get a stationary point for Problem $\mathbb{P}_{3}$. Next, we analyze how to optimize $\bm{x},\bm{\varphi},\bm{\phi},\bm{\gamma},\bm{\rho},\bm{T}$ with the given $\bm{\psi}, \bm{\alpha}$. We consider decomposing Problem $\mathbb{P}_{3}$ into two sub-problems based on AO. They are \mbox{Sub-problem 1}: solve $\bm{\gamma}$, $\bm{\phi}$, $\bm{\rho}$, and $\bm{T}$ with fixed $\bm{x}$ and $\bm{\varphi}$; Sub-problem 2: solve $\bm{x}$, $\bm{\varphi}$, and $\bm{T}$ with fixed $\bm{\gamma}$, $\bm{\phi}$, and $\bm{\rho}$.

\subsection{Sub-problem 1: Solve \texorpdfstring{$\bm{\gamma},\bm{\phi},\bm{\rho}$}{} and \texorpdfstring{$\bm{T}$}{} with fixed \texorpdfstring{$\bm{x}$}{} and  \texorpdfstring{$\bm{\varphi}$}{}}\label{sec_subproblem1}
In this section, we analyze how to optimize $\bm{\gamma}$, $\bm{\phi}$, $\bm{\rho}$, and $\bm{T}$ with fixed $\bm{x}$ and $\bm{\varphi}$. If $\bm{x}$ and $\bm{\varphi}$ are given, Problem $\mathbb{P}_{3}$ will be a new Problem $\mathbb{P}_{4}$:
\begin{subequations}\label{prob4}
\begin{talign}
\mathbb{P}_{4}:&\max\limits_{\bm{\gamma},\bm{\phi},\bm{\rho},\bm{T}}  \sum\limits_{n \in \mathcal{N}}\sum\limits_{m \in \mathcal{M}} \alpha_{n,m}^{(t)}(c_{n,m}^{(t)}\varphi_n^{(t)}d_n - \psi_{n,m}^{(t)}cost_{n,m}^{(t)})+ \nonumber \\
&  \alpha_{n,m}^{(a)}(c_{n,m}^{(a)}\varphi_n^{(a)}d_n - \psi_{n,m}^{(a)}cost_{n,m}^{(a)}) +  \alpha_{n,m}^{(s)}(c_{n,m}^{(s)}\varphi_n^{(s)}d_n \nonumber \\
&- \psi_{n,m}^{(s)}cost_{n,m}^{(s)}) + \sum\limits_{n \in \mathcal{N}} \alpha_n^{(u)}(c_n^{(u)}\varphi_n^{(u)}d_n - \psi_n^{(u)}cost_n^{(u)})
\tag{\ref{prob4}}\\
\text{s.t.} \quad 
& (\text{\ref{phi_range_constr}})\text{-}(\text{\ref{rho_sum_constr}}), (\text{\ref{Tu_constr}})\text{-}(\text{\ref{Ts_constr}}) \nonumber.
\end{talign}
\end{subequations}
In the objective function of Problem $\mathbb{P}_{4}$, the terms $cost_{n,m}^{(t)}$, $cost_{n,m}^{(a)}$, and $cost_{n,m}^{(s)}$ are not convex due to the existence of $\frac{\text{power}}{\text{transmission data rate}}$. We will introduce the following theorem to show how to transform such non-convex terms into convex ones.
\begin{theorem}\label{theorem_subproblem1}
    Problem $\mathbb{P}_{4}$ can be transformed into a solvable concave optimization problem by a fractional programming (FP) technique.
\end{theorem}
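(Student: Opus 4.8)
The plan is to remove the non-convexity of $\mathbb{P}_4$ one source at a time and then apply a quadratic-transform FP step so that, for frozen auxiliary variables, the problem becomes a genuine concave-maximization (equivalently convex-minimization) program. First I would isolate where the obstruction actually lives. With $\bm{x},\bm{\varphi},\bm{\psi},\bm{\alpha}$ fixed, maximizing the objective of $\mathbb{P}_4$ is equivalent to minimizing the positive-weighted sum $\sum \alpha_{n,m}^{(i)}\psi_{n,m}^{(i)}\, cost_{n,m}^{(i)}$ up to an additive constant. The delay part of each $cost$ has already been pushed into the auxiliary variables $T^{(\cdot)}$, which enter linearly, and the computation-energy terms such as $x_{n,m}^{(t)}\kappa_{m_t}e_{m_t}t_n\varphi_n^{(t)}d_n(\gamma_{n,m}^{(t)}f_{m_t})^2$ are convex in $\gamma$. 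Hence the sole obstruction is the transmission-energy terms of the form $\frac{\text{power}}{\text{rate}}$, e.g. $\rho_n^{(u)}p_n\frac{x_{n,m}^{(t)}[\omega_b(1-\varphi_n^{(u)})d_n+d_n^{(l)}]}{r_{n,m_t}}$, together with the delay-defining constraints (\ref{Tu_constr})--(\ref{Ts_constr}), which contain $\frac{1}{r}$ and $\frac{1}{\gamma}$.

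Second, I would establish the convexity backbone. The rate $r_{n,m_t}=\phi_{n,m}^{(t)}b_{m_t}\log_2(1+\frac{\rho_n^{(u)}p_n g_{n,m_t}}{\sigma^2\phi_{n,m}^{(t)}b_{m_t}})$ is jointly concave in $(\phi_{n,m}^{(t)},\rho_n^{(u)})$, being the perspective of the concave map $\log_2(1+\cdot)$; the same argument applies to $r_{m_t,m_a}$ and $r_{m_a,m_s}$ in their own partition variables. Consequently $\frac{1}{r}$ and $\frac{1}{r^2}$ are convex, as a convex decreasing function composed with a concave positive one. Recognizing further that $T^{(tp)}_{n,m}=\frac{c}{\gamma_{n,m}^{(t)}}$ is convex in $\gamma$, each constraint (\ref{Tu_constr})--(\ref{Ts_constr}) reads convex-$\le$-linear, and the remaining constraints (\ref{phi_range_constr})--(\ref{rho_sum_constr}) are linear once $\bm{x}$ is fixed, so the feasible set is convex.

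Third, the heart of the proof is the FP step for the leftover ratio $\frac{A}{B}$, where $A$ is the affine nonnegative transmit-power-times-data numerator and $B=r$ is the concave rate. Following the quadratic transform of \cite{10368052}, I would replace each such ratio by $A^2 y+\frac{1}{4B^2 y}$ and introduce an auxiliary vector of the $y$'s. For any fixed $y>0$ the surrogate is convex, being the sum of the convex $A^2 y$ (square of an affine map times a positive scalar) and the convex $\frac{1}{4 r^2 y}$; thus the surrogate of $\mathbb{P}_4$ is a convex minimization over $(\bm{\gamma},\bm{\phi},\bm{\rho},\bm{T})$ on a convex set, hence solvable. I would then verify exactness: minimizing the surrogate over $y$ yields the closed form $y^\star=\frac{1}{2AB}$, and substituting back recovers exactly $\frac{A}{B}$, so the surrogate and the original share objective value and KKT points. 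The alternating scheme (closed-form update of $y$, then one convex solve) therefore improves monotonically and converges to a stationary point of $\mathbb{P}_4$.

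The main obstacle I expect is not the FP step itself but the joint convexity bookkeeping: proving the perspective-function concavity of every rate simultaneously in its own bandwidth and power partition variables, and then certifying that $\frac{1}{r^2}$ survives as convex, so that the transformed objective and all constraints are jointly convex in the full block $(\bm{\gamma},\bm{\phi},\bm{\rho},\bm{T})$ rather than merely coordinate-wise. Establishing this joint structure, rather than separability, is what makes the single convex solve per iteration legitimate and the claimed transformation valid.
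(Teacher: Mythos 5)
Your proposal is correct and follows essentially the same route as the paper's proof (Lemma \ref{lemma_p4top5}): the same quadratic-transform surrogate $A^2\varrho+\frac{1}{4B^2\varrho}$ from \cite{10368052} applied to each $\frac{\text{power}}{\text{rate}}$ term, with the same optimal multiplier $\varrho=\frac{1}{2AB}$ and the same alternating update. The only cosmetic difference is that you certify exactness via the closed-form minimizer of the surrogate in $y$ (AM--GM tightness), whereas the paper checks tangency by matching the partial derivatives of the original and surrogate cost functions at that point; both arguments establish the same majorization property.
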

\begin{proof}
   \textbf{Theorem \ref{theorem_subproblem1}} is proven by the following \mbox{\textbf{Lemma \ref{lemma_p4top5}}}.
\end{proof}
\begin{lemma}\label{lemma_p4top5}
Define new auxiliary variables $\varrho_{n,m}^{(t)}$, $\varrho_{n,m}^{(a)}$, $\varrho_{n,m}^{(s)}$, where 
\begin{talign}
&\varrho_{n,m}^{(t)} = \frac{1}{2\rho_n^{(u)}p_nx_{n,m}^{(t)}[\omega_b(1-\varphi_n^{(u)})d_n+d_n^{(l)}] r_{n,m_t}},\\
&\varrho_{n,m}^{(a)} = \frac{1}{2\rho_{n,m}^{(t)}p_{m_t}x_{n,m}^{(a)}[\omega_b(1-\varphi_n^{(u)}-\varphi_n^{(t)})d_n+d_n^{(l)}]r_{m_t,m_a}},\\
&\varrho_{n,m}^{(s)} \!=\!\frac{1}{2\rho_{n,m}^{(a)}p_{m_a}x_{n,m}^{(s)}[\omega_b(1-\varphi_n^{(u)}-\varphi_n^{(t)}-\varphi_n^{(a)})d_n+d_n^{(l)}]r_{m_a,m_s}}.
\end{talign}
Rewrite $cost_{n,m}^{(t)}$, $cost_{n,m}^{(a)}$, and $cost_{n,m}^{(s)}$ as new terms $\widetilde{cost}^{(t)}_{n,m}$, $\widetilde{cost}^{(a)}_{n,m}$, $\widetilde{cost}^{(s)}_{n,m}$ with $\varrho_{n,m}^{(t)}$, $\varrho_{n,m}^{(a)}$, $\varrho_{n,m}^{(s)}$, respectively. We define that
\begin{talign}
& \widetilde{cost}_{n,m}^{(t)}  \nonumber \\
&=\omega_t T_{n,m}^{(t)}+\! \omega_e \{(\rho_n^{(u)}p_n x_{n,m}^{(t)}[\omega_b(1\!-\!\varphi_n^{(u)})d_n\!+\!d_n^{(l)}])^2 \varrho_{n,m}^{(t)} \!\nonumber \\
&+\! \frac{1}{4r_{n,m_t}^2\varrho_{n,m}^{(t)}}\}+ \omega_e x_{n,m}^{(t)}e_{m_t}\kappa_{m_t}t_n\varphi_n^{(t)}d_n(\gamma_{n,m}^{(t)}f_{m_t})^2 ,\\
& \widetilde{cost}_{n,m}^{(a)}\!\! = \!\omega_t T_{n,m}^{(a)}\! +\! \omega_e \{(\rho_{n,m}^{(t)}p_{m_t}x_{n,m}^{(a)}[\omega_b(1-\varphi_n^{(u)}-\varphi_n^{(t)})\nonumber \\
&\cdot d_n+d_n^{(l)}])^2 \varrho_{n,m}^{(a)}\!+\! \frac{1}{4r_{m_t,m_a}^2\varrho_{n,m}^{(a)}}\} \!\nonumber \\
&+ \!\omega_e x_{n,m}^{(a)}e_{m_a}\kappa_{m_a}t_n\varphi_n^{(a)}d_n  f_{m_a}^2 (\gamma_{n,m}^{(a)})^2\!,\\
& \widetilde{cost}_{n,m}^{(s)} \!\!=\! \omega_t T_{n,m}^{(s)} \!\!+\!\! \omega_e \{(\rho_{n,m}^{(a)}p_{m_a}x_{n,m}^{(s)}[\omega_b(1\!\!-\!\!\varphi_n^{(u)}\!\!-\!\!\varphi_n^{(t)}\!\!-\!\!\varphi_n^{(a)})\nonumber \\
&\cdot d_n+d_n^{(l)}])^2 \varrho_{n,m}^{(s)}+ \frac{1}{4r_{m_a,m_s}^2\varrho_{n,m}^{(s)}}\}\nonumber \\
&+ \omega_e x_{n,m}^{(s)}e_{m_s}\kappa_{m_s}t_n\varphi_n^{(s)} d_n (\gamma_{n,m}^{(s)}f_{m_t})^2.
\end{talign}
Let $\bm{\varrho^{(i)}} := [\varrho^{(i)}_{n,m}|_{\forall n \in \mathcal{N},\forall m \in \mathcal{M^{(i)}}}]$, $i \in \{t,a,s\}$ and $\bm{\varrho} := \{\bm{\varrho^{(t)}},\bm{\varrho^{(a)}},\bm{\varrho^{(s)}}\}$. The Problem $\mathbb{P}_{4}$ can be transformed into the following Problem $\mathbb{P}_{5}$:
\begin{subequations}\label{prob5}
\begin{talign}
\mathbb{P}_{5}:&\max\limits_{\bm{\gamma},\bm{\phi},\bm{\rho},\bm{\varrho},\bm{T}}  \sum\limits_{n \in \mathcal{N}}\sum\limits_{m \in \mathcal{M}} \alpha_{n,m}^{(t)}(c_{n,m}^{(t)}\varphi_n^{(t)}d_n - \psi_{n,m}^{(t)}\widetilde{cost}_{n,m}^{(t)})+ \nonumber \\
&\alpha_{n,m}^{(a)}(c_{n,m}^{(a)}\varphi_n^{(a)}d_n - \psi_{n,m}^{(a)}\widetilde{cost}_{n,m}^{(a)}) +  \alpha_{n,m}^{(s)}(c_{n,m}^{(s)}\varphi_n^{(s)}d_n \nonumber \\
&- \psi_{n,m}^{(s)}\widetilde{cost}_{n,m}^{(s)}) + \sum\limits_{n \in \mathcal{N}} \alpha_n^{(u)}(c_n^{(u)}\varphi_n^{(u)}d_n - \psi_n^{(u)}cost_n^{(u)})
\tag{\ref{prob5}}\\
\text{s.t.} \quad 
& \text{(\ref{phi_range_constr})-(\ref{rho_sum_constr})}, \text{(\ref{Tu_constr})-(\ref{Ts_constr})}. \nonumber
\end{talign}
\end{subequations}
If we alternatively optimize $\bm{\varrho^{(t)}}$, $\bm{\varrho^{(a)}}$, $\bm{\varrho^{(s)}}$ and $\bm{\phi},\bm{\rho},\bm{\gamma},\bm{T}$, Problem $\mathbb{P}_{5}$ would be a concave problem. Besides, with the local optimum $\bm{\varrho}^{\bm{(t)}(\star)}$, $\bm{\varrho}^{\bm{(a)}(\star)}$, $\bm{\varrho}^{\bm{(s)}(\star)}$, we can find $\bm{\phi}^{(\star)},\bm{\rho}^{(\star)},\bm{\gamma}^{(\star)}, \bm{T}^{(\star)}$, which is a stationary point of \mbox{Problem $\mathbb{P}_{5}$}.
\end{lemma}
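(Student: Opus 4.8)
The plan is to isolate the only genuinely non-convex ingredient of $\mathbb{P}_4$—the transmission-energy ratios of the form $\frac{\text{power}\cdot\text{payload}}{\text{rate}}$ sitting inside $cost_{n,m}^{(t)}$, $cost_{n,m}^{(a)}$, $cost_{n,m}^{(s)}$—and remove it with the quadratic-type fractional-programming identity of \cite{10368052}. Writing a generic offending term as $\frac{A}{B}$, where $A$ collects transmit power times payload (e.g. $A=\rho_n^{(u)}p_n x_{n,m}^{(t)}[\omega_b(1-\varphi_n^{(u)})d_n+d_n^{(l)}]$ at the terrestrial link) and $B=r_{n,m_t}$ is the Shannon rate, I would invoke the elementary inequality $A^2 y + \frac{1}{4B^2 y}\ge \frac{A}{B}$ valid for every $y>0$, which follows from AM–GM and is tight exactly at $y=\frac{1}{2AB}$. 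Setting $y=\varrho_{n,m}^{(t)}$ and its level-$a$, level-$s$ analogues reproduces precisely the stated definitions of $\varrho_{n,m}^{(i)}$ and converts each $cost_{n,m}^{(i)}$ into $\widetilde{cost}_{n,m}^{(i)}$; the user-level cost $cost_n^{(u)}$ contains no power/rate ratio and is left unchanged. Because the bound is tight at the prescribed $\varrho$, partially maximizing the $\mathbb{P}_5$ objective over $\bm{\varrho}$ recovers the $\mathbb{P}_4$ objective exactly, so the two problems share optimizers in the original variables—this is the transformation $\mathbb{P}_4\to\mathbb{P}_5$.

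Next I would fix $\bm{\varrho}$ and show the objective is concave in $(\bm{\gamma},\bm{\phi},\bm{\rho},\bm{T})$. Each $\widetilde{cost}_{n,m}^{(i)}$ enters only through the term $-\alpha_{n,m}^{(i)}\psi_{n,m}^{(i)}\widetilde{cost}_{n,m}^{(i)}$ with nonnegative weights $\alpha,\psi\ge 0$, so it suffices that every $\widetilde{cost}_{n,m}^{(i)}$ be convex. I would split it into three pieces: the delay part $\omega_t T_{n,m}^{(i)}$ is linear; the computation-energy part is proportional to $(\gamma_{n,m}^{(i)})^2$, hence convex in $\gamma$; and in the transformed transmission-energy part the term $A^2\varrho_{n,m}^{(i)}$ is convex in $\rho$ since $A$ is linear in the transmit-power ratio and $\varrho>0$ is fixed. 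The one delicate piece is $\frac{1}{4r^2\varrho}$.

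The hard part will be establishing that $\frac{1}{r^2}$ is convex in $(\phi,\rho)$, where $r=\phi b\log_2\!\big(1+\tfrac{\rho p g}{\sigma^2\phi b}\big)$. I would first prove $r$ is jointly concave in $(\phi,\rho)$ by recognizing it as a positive constant multiple of the perspective function of the concave map $\rho\mapsto\log_2(1+c\rho)$ for an appropriate $c>0$, since the perspective preserves concavity for $\phi>0$. Given $r>0$ concave, $1/r$ is convex (the convex nonincreasing map $t\mapsto 1/t$ composed with a concave function), and because $1/r\ge 0$ its square $1/r^2$ is again convex (the square of a nonnegative convex function is convex). Hence $\frac{1}{4r^2\varrho}$ is convex, completing the convexity of each $\widetilde{cost}_{n,m}^{(i)}$ and thus the concavity of the $\mathbb{P}_5$ objective. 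I would also note the feasible set is convex once $\bm{x}$ is fixed: constraints (\ref{phi_range_constr})–(\ref{rho_sum_constr}) are box/linear, and the delay constraints (\ref{Tu_constr})–(\ref{Ts_constr}) are convex because each left-hand side is of the form $\frac{\text{const}}{\gamma}$ or $\frac{\text{const}}{r}$, both convex by the same perspective argument.

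Finally, I would assemble the alternating scheme. With $(\bm{\gamma},\bm{\phi},\bm{\rho},\bm{T})$ fixed, the optimal $\bm{\varrho}$ is available in closed form via $\varrho=\frac{1}{2AB}$, i.e. the stated expressions; with $\bm{\varrho}$ fixed, $\mathbb{P}_5$ is the concave program verified above and is solved to global optimality. Because each block update is exact and the tightness of the AM–GM step forces the $\mathbb{P}_5$ value to coincide with the $\mathbb{P}_4$ value after every $\bm{\varrho}$-update, the objective sequence is monotone and bounded, so any limit point $(\bm{\phi}^{(\star)},\bm{\rho}^{(\star)},\bm{\gamma}^{(\star)},\bm{T}^{(\star)})$ satisfies the first-order stationarity conditions of $\mathbb{P}_5$, which is the claimed stationary point.
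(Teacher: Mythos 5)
Your proposal is correct and follows the same decomposition as the paper: introduce the auxiliary variables $\bm{\varrho}$ via the quadratic transform $\frac{A}{B}\mapsto A^2\varrho+\frac{1}{4B^2\varrho}$, observe that the surrogate is tight at $\varrho=\frac{1}{2AB}$, and then alternate between the closed-form $\bm{\varrho}$-update and the resulting concave program in $(\bm{\gamma},\bm{\phi},\bm{\rho},\bm{T})$. Where you differ is in how the validity of the surrogate is justified. The paper computes the partial derivatives of the original term $\mathcal{F}$ and the transformed term $\mathcal{G}$ with respect to each of $T$, $\gamma$, $\phi$, $\rho$ and checks that they coincide (together with the function values) when $\varrho=\frac{1}{2\chi\varsigma}$, i.e., it establishes first-order tangency at the update point. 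You instead invoke the global AM--GM inequality $A^2y+\frac{1}{4B^2y}\ge\frac{A}{B}$ for all $y>0$, which gives the stronger statement that $\widetilde{cost}$ majorizes $cost$ everywhere and that partial maximization over $\bm{\varrho}$ recovers the $\mathbb{P}_4$ objective exactly; this is what actually underwrites the monotone-improvement argument for the alternating scheme, and the tangency the paper verifies is an immediate corollary. You also supply two verifications the paper only asserts: that each $\widetilde{cost}^{(i)}_{n,m}$ is convex for fixed $\bm{\varrho}$ (via $r$ being a perspective of $\rho\mapsto\log_2(1+c\rho)$, hence jointly concave, so $1/r$ and then $1/r^2$ are convex), and that the delay constraints define a convex set by the same perspective argument. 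Both routes reach the same conclusion; yours makes the global-bound property explicit, which the paper's derivative-matching computation leaves implicit.
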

\begin{proof}
    Refer to Appendix \ref{append_lemma_p4top5}.
\end{proof}

From \textbf{Lemma \ref{lemma_p4top5}}, it is obvious that the function $\mathcal{F}(\phi_{n,m}^{(t)},\rho_n^{(u)},\gamma_{n,m}^{(t)},\psi_n^{(t)},T_{n,m}^{(t)})$ is an rigorous and tight upper bound of the function $\mathcal{G}(\phi_{n,m}^{(t)},\rho_n^{(u)},\gamma_{n,m}^{(t)},\psi_n^{(t)},T_{n,m}^{(t)})$. These two functions are tangent to one point, and this tangent point depends on $\varrho_{n,m}$. 
Take $\varrho_{n,m}^{(t)}$ as an example. If we choose one feasible point $(\phi_{n,m}^{(t,0)}, \rho_n^{(u,0)}, \gamma_{n,m}^{(t,0)}, \psi_n^{(t,0)}, T_{n,m}^{(t,0)})$, set $\varrho_{n,m}^{(t,0)} = \frac{1}{2\chi(\rho_n^{(u,0)}) \varsigma (\phi_{n,m}^{(t,0)}, \rho_n^{(u,0)})}$, and then we would find that the functions $\mathcal{F}(\phi_{n,m}^{(t)},\rho_n^{(u)},\gamma_{n,m}^{(t)},\psi_n^{(t)},T_{n,m}^{(t)})$ and $\mathcal{G}(\phi_{n,m}^{(t)},\rho_n^{(u)},\gamma_{n,m}^{(t)},\psi_n^{(t)},T_{n,m}^{(t)})$ are tangent to the point $(\phi_{n,m}^{(t,0)}, \rho_n^{(u,0)}, \gamma_{n,m}^{(t,0)}, \psi_n^{(t,0)}, T_{n,m}^{(t,0)})$. With the progress of optimization, this feasible point would gradually approach a local minimum.

Now, if given $\bm{\varrho}$, Problem $\mathbb{P}_{5}$ is a convex optimization problem. Fix $\bm{\varrho}$, and then optimize other variables; fix other variables, and then optimize $\bm{\varrho}$. We can transform Problem $\mathbb{P}_{4}$ into a solvable concave problem Problem $\mathbb{P}_{5}$ with the help of $\bm{\varrho}$. During the $i$-th iteration, we initially hold $\bm{\varrho}^{\bm{(t)}(i-1)}$, $\bm{\varrho}^{\bm{(a)}(i-1)}$, $\bm{\varrho}^{\bm{(s)}(i-1)}$ constant and focus on optimizing $\bm{\phi}^{(i)}, \bm{\gamma}^{(i)}, \bm{\rho}^{(i)}, \bm{T}^{(i)}$. Once these values are determined, we update $\bm{\varrho}^{\bm{(t)}(i)}$, $\bm{\varrho}^{\bm{(a)}(i)}$, $\bm{\varrho}^{\bm{(s)}(i)}$ based on the obtained results. This optimization cycle is repeated until the difference in the objective function value of Problem $\mathbb{P}_{5}$ between the $i$-th and $(i-1)$-th iterations falls in a predefined threshold. Reaching this point signifies a solution for Problem $\mathbb{P}_{5}$, and consequently, for Problem $\mathbb{P}_{4}$. The whole procedure of solving sub-problem 1 is presented in Algorithm 1.
Next, we analyze how to optimize $\bm{x}$, $\bm{\varphi}$, and $\bm{T}$ with fixed $\bm{\phi}$, $\bm{\gamma}$, and $\bm{\rho}$.

\begin{algorithm}[t]
\caption{FP technique to solve Sub-problem 1.}
\label{algorithm_subproblem1}
For all $n \in \mathcal{N}, m \in \mathcal{M}$: Randomly set a one-hot vector for each user $n$ in $\bm{x^{(k)(0)}}$, $k \in \{t,a,s\}$,
$\varphi_n^{(k)(0)}=0.25$, $k \in \{u,t,a,s\}$, $\phi_{n,m}^{(k)(0)}=\frac{1}{N}$, $k \in \{t,a,s\}$, $\rho_n^{(u)(0)}=1$, $\rho_{n,m}^{(t)(0)}=\rho_{n,m}^{(a)(0)}=\frac{1}{N}$, $\gamma_n^{(u)(0)}=1$, $\gamma_{n,m}^{(k)(0)}=\frac{1}{N}$, $k \in \{t,a,s\}$;

Calculate $\bm{\alpha}^{(0)}, \bm{\psi}^{(0)}$ with initial settings;


Initialize $j = -1$; 

Calculate $\bm{\varrho}^{(i,0)}$ with $\bm{x}^{(i)}, \bm{\varphi}^{(i)}, \bm{\phi}^{(i)}, \bm{\rho}^{(i)}, \bm{\gamma}^{(i)}$;

Set $[\bm{\phi}^{(i,0)}, \bm{\rho}^{(i,0)}, \bm{\gamma}^{(i,0)}]\leftarrow$ $[\bm{\phi}^{(i)}, \bm{\rho}^{(i)}, \bm{\gamma}^{(i)}]$;

\Repeat{$\frac{V_{\mathbb{P}_{5}}(\bm{\phi}^{(i,j+1)}, \bm{\rho}^{(i,j+1)}, \bm{\gamma}^{(i,j+1)})}{V_{\mathbb{P}_{5}}(\bm{\phi}^{(i,j)}, \bm{\rho}^{(i,j)}, \bm{\gamma}^{(i,j)})}- 1 \leq \epsilon_1$, where $\epsilon_1$ is a small positive number}{
Let $j\leftarrow j+1$;

Obtain $[\bm{\phi}^{(i,j+1)}, \bm{\rho}^{(i,j+1)}, \bm{\gamma}^{(i,j+1)}, \bm{T}^{(i,j+1)}]$ by solving Problem $\mathbb{P}_{5}$ with $\bm{\varrho}^{(i,j)}$;

Update $\bm{\varrho}^{(i,j+1)}$ with $[\bm{\phi}^{(i,j+1)}, \bm{\rho}^{(i,j+1)}, \bm{\gamma}^{(i,j+1)}, \bm{T}^{(i,j+1)}]$;
}
Return $[\bm{\phi}^{(i,j+1)}, \bm{\rho}^{(i,j+1)}, \bm{\gamma}^{(i,j+1)}]$ as a solution to Problem $\mathbb{P}_{5}$;

Set \!$[\bm{\phi}^{(i+1)},\!\bm{\rho}^{(i+1)},\bm{\gamma}^{(i+1)}]$\!$\leftarrow$\!$[\bm{\phi}^{(i,j+1)},\bm{\rho}^{(i,j+1)},\bm{\gamma}^{(i,j+1)}]$;

Return $[\bm{\phi}^{(i+1)},\!\bm{\rho}^{(i+1)},\!\bm{\gamma}^{(i+1)}]$  at the $(i+1)$-th iteration as a solution $[\bm{\phi}^\star,\bm{\rho}^\star,\bm{\gamma}^\star]$ to Problem $\mathbb{P}_{3}$.
\end{algorithm}
\subsection{Sub-problem 2: Solve \texorpdfstring{$\bm{\varphi},\bm{x}$}{}, and \texorpdfstring{$T$}{} with fixed \texorpdfstring{$\bm{\gamma},\bm{\phi},\bm{\rho}$}{}}\label{sec_subproblem2}
Once $\bm{\gamma},\bm{\phi},\bm{\rho}$ are given, Problem $\mathbb{P}_{3}$ would be Problem $\mathbb{P}_{6}$:
\begin{subequations}\label{prob6}
\begin{talign}
\mathbb{P}_{6}:&\max\limits_{\bm{x},\bm{\varphi},\bm{T}}  \sum\limits_{n \in \mathcal{N}}\sum\limits_{m \in \mathcal{M}} \Big(\alpha_{n,m}^{(t)}(c_{n,m}^{(t)}\varphi_n^{(t)}d_n - \psi_{n,m}^{(t)}cost_{n,m}^{(t)}) \nonumber \\
&+\alpha_{n,m}^{(a)}(c_{n,m}^{(a)}\varphi_n^{(a)}d_n - \psi_{n,m}^{(a)}cost_{n,m}^{(a)})  \nonumber \\
&+  \alpha_{n,m}^{(s)}(c_{n,m}^{(s)}\varphi_n^{(s)}d_n- \psi_{n,m}^{(s)}cost_{n,m}^{(s)})\Big) \nonumber \\
&+ \sum\limits_{n \in \mathcal{N}} \alpha_n^{(u)}(c_n^{(u)}\varphi_n^{(u)}d_n - \psi_n^{(u)}cost_n^{(u)})
\tag{\ref{prob6}}\\
\text{s.t.} \quad 
& (\text{\ref{x_range_constr}})\text{-}(\text{\ref{varphi_sum_constr2}}), (\text{\ref{phi_sum_constr}}), (\text{\ref{gamma_sum_constr}}), (\text{\ref{rho_sum_constr}}), (\text{\ref{Tu_constr}})\text{-}(\text{\ref{Ts_constr}}) \nonumber.
\end{talign}
\end{subequations}
Problem $\mathbb{P}_{6}$ is still an extremely complex optimization where constraints have a lot of non-convex and non-concave variable expressions with some discrete variables and continuous variables coupled together. We will then divide the complex optimization into a solvable convex optimization step by step. Let's first consider the discrete variables $\bm{x}$ in the constraint (\ref{x_range_constr}). Because of the presence of the discrete variables $\bm{x}$, Problem $\mathbb{P}_{6}$ is a mixed integer nonlinear programming. To remove the complexity caused by this discrete variable and facilitate subsequent analysis, we convert the constraint (\ref{x_range_constr}) into several new constraints:
\begin{talign}
x_{n,m}^{(i)}(x_{n,m}^{(i)} - 1) = 0, i \in \{t,a,s\}.
\end{talign}
Above new constraints can also make $x_{n,m}^{(t)}$ (or $x_{n,m}^{(a)}$ or $x_{n,m}^{(s)}$) equal $0$ or $1$. Thus, Problem $\mathbb{P}_{3}$ can be transformed into the following Problem $\mathbb{P}_{7}$:
\begin{subequations}\label{prob7}
\begin{talign}
\mathbb{P}_{7}:&\max\limits_{\bm{x},\bm{\varphi},\bm{T}}  \sum\limits_{n \in \mathcal{N}}\sum\limits_{m \in \mathcal{M}} \Big(\alpha_{n,m}^{(t)}(c_{n,m}^{(t)}\varphi_n^{(t)}d_n - \psi_{n,m}^{(t)}cost_{n,m}^{(t)}) \nonumber \\
&+\alpha_{n,m}^{(a)}(c_{n,m}^{(a)}\varphi_n^{(a)}d_n - \psi_{n,m}^{(a)}cost_{n,m}^{(a)})  \nonumber \\
&+  \alpha_{n,m}^{(s)}(c_{n,m}^{(s)}\varphi_n^{(s)}d_n- \psi_{n,m}^{(s)}cost_{n,m}^{(s)})\Big) \nonumber \\
&+ \sum\limits_{n \in \mathcal{N}} \alpha_n^{(u)}(c_n^{(u)}\varphi_n^{(u)}d_n - \psi_n^{(u)}cost_n^{(u)})
\tag{\ref{prob7}}\\
\text{s.t.} \quad 
& x_{n,m}^{(i)}(x_{n,m}^{(i)} \!-\! 1) \!=\! 0, \forall n \in \mathcal{N},\forall m \in \mathcal{M}^{(t)},\forall i \in \{t,a,s\},\label{x_range_constr_new}\\
& (\text{\ref{x_sum_constr}})\text{-}(\text{\ref{varphi_sum_constr2}}), (\text{\ref{phi_sum_constr}}), (\text{\ref{gamma_sum_constr}}), (\text{\ref{rho_sum_constr}}), (\text{\ref{Tu_constr}})\text{-}(\text{\ref{Ts_constr}}) \nonumber.
\end{talign}
\end{subequations}
The reformulated Problem $\mathbb{P}_{7}$ contains at most two coupled continuous variables in each term, i.e., $\bm{x}$ and $\bm{\varphi}$. These variables appear as bilinear products, e.g., $x_{n,m}^{(i)} \varphi_n^{(i)}$, in the objective function. Although bilinear terms are generally non-convex, the fact that both variables are bounded within $[0,1]$ allows us to apply convex relaxation techniques, e.g., QCQP and semidefinite relaxation (SDR), to transform the \mbox{Problem $\mathbb{P}_{7}$} into a solvable convex optimization, i.e., \mbox{\textbf{Theorem \ref{theorem_subproblem2}}}.

\begin{theorem}\label{theorem_subproblem2}
    Problem $\mathbb{P}_{7}$ can be transformed into a solvable convex optimization problem by using QCQP and SDR techniques.
\end{theorem}
\begin{proof}
    \textbf{Theorem \ref{theorem_subproblem2}} is proven by the following \mbox{\textbf{Lemma \ref{lemma_p7top8}}} and \mbox{\textbf{Lemma \ref{lemma_p8top9}}}.
\end{proof}
\begin{lemma}\label{lemma_p7top8}
Problem $\mathbb{P}_{7}$ can be transformed into a standard QCQP Problem $\mathbb{P}_{8}$:
\begin{subequations}\label{prob8}
\begin{talign}
\mathbb{P}_{8}:&\min\limits_{\bm{x},\bm{\varphi},\bm{T}} -\bm{Q}^\intercal \bm{P}_0 \bm{Q} - \bm{W}_0^\intercal \bm{Q} - T^{(u)} - T^{(t)} - T^{(a)} - T^{(s)}
\tag{\ref{prob8}}\\
\text{s.t.} \quad
&\text{diag}(\bm{e}_{M^{(i)}}^\intercal\bm{Q})(\text{diag}(\bm{e}_{M^{(i)}}^\intercal\bm{Q})- \bm{I})= 0, \forall i \in \{t,a,s\},\label{x_range_constr_qcqp}\\
&\text{diag}(\bm{e}_{\overline{1_i},\overline{M^{(i)}_i}}^\intercal \bm{e}_{M^{(i)}}^\intercal\bm{Q}) = \bm{I},\forall i \in \{t,a,s\},\label{x_sum_constr_qcqp}\\
&\text{diag}(\bm{e}_{\varphi_i}^\intercal\bm{Q}) \leq \bm{I},\forall i \in \{u,t,a,s\},\label{varphi_range_constr_qcqp}\\
&\text{diag}\big((\bm{e}_{\varphi_u}^\intercal+\bm{e}_{\varphi_t}^\intercal+\bm{e}_{\varphi_a}^\intercal+\bm{e}_{\varphi_s}^\intercal)\bm{Q}\big) = \bm{I},\label{varphi_sum_constr_qcqp}\\
&\bm{\phi^{(i)}}\bm{e}_{M^{(i)}}^\intercal\bm{Q}-1 \leq 0,\forall i \in \{t,a,s\},\label{phi_sum_constr_qcqp}\\
&\bm{\gamma^{(i)}}\bm{e}_{M^{(i)}}^\intercal\bm{Q}-1 \leq 0,\forall i \in \{t,a,s\},\label{gamma_sum_constr_qcqp}\\
&\bm{\rho^{(i)}}\bm{e}_{M^{(i)}}^\intercal\bm{Q}-1 \leq 0,\forall i \in \{t,a\},\label{rho_sum_constr_qcqp}\\
&{\bm{P^{(T_u)}}}^\intercal \bm{Q} \leq T^{(u)},\label{Tu_constr_qcqp}\\
&\bm{Q}^\intercal \bm{P_1^{(T_i)}} \bm{Q} + {\bm{P_2^{(T_i)}}}^\intercal \bm{Q} \leq T^{(i)},\forall i \in \{t,a,s\}.\label{Ttas_constr_qcqp}
\end{talign}
\end{subequations}
\end{lemma}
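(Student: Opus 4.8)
The plan is to collect every decision variable of Problem $\mathbb{P}_7$ into a single long vector $\bm{Q}$ and then verify that the objective and each constraint are at most quadratic in $\bm{Q}$, which is exactly the definition of a QCQP. Because $\bm{\gamma},\bm{\phi},\bm{\rho}$ are held fixed in this sub-problem, the transmission rates $r_{n,m_t},r_{m_t,m_a},r_{m_a,m_s}$ and the computing factors $(\gamma f)^2$ are all constants; consequently every ``$cost$'' term loses its fractional $\tfrac{\text{power}}{\text{rate}}$ structure and reduces to a constant-coefficient expression in $\bm{x}$ and $\bm{\varphi}$ only. First I would fix an ordering of the entries of $\bm{x^{(t)}},\bm{x^{(a)}},\bm{x^{(s)}}$ and $\bm{\varphi^{(u)}},\bm{\varphi^{(t)}},\bm{\varphi^{(a)}},\bm{\varphi^{(s)}}$, stack them (with an appended constant component to carry the affine pieces) into $\bm{Q}$, and define the Boolean selection matrices $\bm{e}_{M^{(i)}}$, $\bm{e}_{\varphi_i}$, $\bm{e}_{\overline{1_i},\overline{M^{(i)}_i}}$ so that each $\bm{e}^\intercal\bm{Q}$ extracts precisely the block of variables that a given constraint acts on.

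The next step is to rewrite the objective. The utility pieces $\alpha_{n,m}^{(i)} c_{n,m}^{(i)} \varphi_n^{(i)} d_n$ are linear in $\bm{\varphi}$ and hence enter $\bm{W}_0$. In each $cost$ term the training-energy part $x_{n,m}^{(i)} e\kappa t_n \varphi_n^{(i)} d_n (\gamma f)^2$ is a single bilinear $x\varphi$ monomial, while the transmission-energy part, after substituting the now-constant $\rho p/r$, expands the factor $[\omega_b(1-\varphi)d_n+d_n^{(l)}]$ into a constant plus a term linear in $\bm{\varphi}$; multiplying by $x_{n,m}^{(i)}$ therefore yields one monomial linear in $\bm{x}$ and one bilinear $x\varphi$ monomial. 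Collecting all bilinear monomials (symmetrised so that each $x\varphi$ coefficient is split evenly between the $(x,\varphi)$ and $(\varphi,x)$ entries) gives the symmetric matrix $\bm{P}_0$, and the remaining linear monomials augment $\bm{W}_0$. The auxiliary delays enter $cost$ only through the $\omega_t T$ terms, which after summation over $n,m$ (with the fixed weights $\alpha\psi\omega_t$ absorbed into a rescaling of the $T$ variables) appear as the separate linear block $-T^{(u)}-T^{(t)}-T^{(a)}-T^{(s)}$; converting $\max$ to $\min$ then yields the stated objective.

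Then I would translate the constraints one at a time. The relaxed integrality constraint $x_{n,m}^{(i)}(x_{n,m}^{(i)}-1)=0$ of (\ref{x_range_constr_new}) becomes the diagonal quadratic equality (\ref{x_range_constr_qcqp}); the selection constraint (\ref{x_sum_constr}) and the offloading identities (\ref{varphi_range_constr})--(\ref{varphi_sum_constr2}) are linear and map directly to (\ref{x_sum_constr_qcqp})--(\ref{varphi_sum_constr_qcqp}). The resource caps $\sum_n x_{n,m}^{(i)}\phi_{n,m}^{(i)}\le 1$ and the analogues for $\gamma$ and $\rho$ become linear in $\bm{x}$ once the fixed ratios are plugged in, giving (\ref{phi_sum_constr_qcqp})--(\ref{rho_sum_constr_qcqp}). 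Finally the delay constraints (\ref{Tu_constr})--(\ref{Ts_constr}) are handled exactly as the energy terms were: $T_n^{(up)}$ is linear in $\varphi_n^{(u)}$, producing the affine constraint (\ref{Tu_constr_qcqp}), whereas each server delay $T^{(ut)}+T^{(tp)}$ and its aerial/satellite counterparts are bilinear in $(\bm{x},\bm{\varphi})$, producing the quadratic constraints (\ref{Ttas_constr_qcqp}) with symmetric matrix $\bm{P_1^{(T_i)}}$ and linear part $\bm{P_2^{(T_i)}}$.

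I expect the main obstacle to be the bookkeeping that builds $\bm{P}_0$ and the $\bm{P_1^{(T_i)}}$ correctly: every transmission term contributes \emph{simultaneously} a linear and a bilinear piece through the $\omega_b(1-\varphi)d_n+d_n^{(l)}$ factor, so one must carefully decide which coefficients land in the quadratic matrix and which in the linear vector, and then symmetrise the matrices so that the quadratic forms are well defined under the chosen ordering of $\bm{Q}$. A secondary subtlety is that the resulting $\bm{P}_0$ and $\bm{P_1^{(T_i)}}$ are in general indefinite, so $\mathbb{P}_8$ is a non-convex QCQP; the present claim is only that $\mathbb{P}_7$ is \emph{expressible} as a standard QCQP, with the convexification deferred to the SDP relaxation of \textbf{Lemma \ref{lemma_p8top9}}.
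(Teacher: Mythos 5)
Your proposal is correct and follows essentially the same route as the paper's proof: with $\bm{\gamma},\bm{\phi},\bm{\rho}$ fixed, every cost term is expanded into constant, linear-in-$\bm{x}$, linear-in-$\bm{\varphi}$, and bilinear $x\varphi$ monomials whose coefficients populate $\bm{W}_0$ and $\bm{P}_0$ (and $\bm{P_1^{(T_i)}},\bm{P_2^{(T_i)}}$ for the delay constraints), the aggregated $T^{(i)}$ absorb the fixed weights $\alpha\psi\omega_t$, and the remaining constraints map directly to the stated linear and diagonal-quadratic forms before flipping max to min. The only cosmetic difference is that you append a constant component to $\bm{Q}$ at this stage, whereas the paper keeps the affine pieces as the linear form $\bm{W}_0^\intercal\bm{Q}$ and defers homogenization to the SDP lifting of the next lemma.
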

\begin{proof}
    Refer to Appendix \ref{append_lemma_p7top8}.
\end{proof}
In Problem $\mathbb{P}_8$, the newly introduced matrix $\bm{Q}$ is constructed based on the variables $\bm{x}$ and $\bm{\varphi}$ so that all quadratic terms in $\mathbb{P}_7$ can be compactly represented in a unified matrix form. The other new terms in the objective function and constraints refer to auxiliary or parameter-dependent matrices introduced to facilitate the transformation, while preserving the equivalence with the original formulation. These matrices mainly serve to collect constant coefficients and coupling terms so that the resulting expressions become standard quadratic forms with respect to $\bm{Q}$. For detailed construction and definitions, please refer to Appendix~\ref{append_lemma_p7top8}.

Unfortunately, Problem $\mathbb{P}_{8}$ is still non-convex. Then, we will use the SDR method to transform this QCQP problem into a semidefinite programming (SDP) problem that can be efficiently handled by off-the-shelf solvers. To achieve this goal, we introduce a new matrix variable: $\bm{S}:=(\bm{Q}^\intercal,1)^\intercal(\bm{Q}^\intercal,1)$. This lifted matrix captures the outer product structure of the optimization variable $\bm{Q}$ and enables the reformulation of various quadratic constraints and objective terms into equivalent linear matrix trace expressions, as shown in the following Lemma. In the subsequent step, we will relax the implicit rank-one constraint on $\bm{S}$, which yields a convex SDP relaxation of $\mathbb{P}_8$ and provides a tractable surrogate for the original non-convex problem.

\begin{lemma}\label{lemma_p8top9}
QCQP Problem $\mathbb{P}_{8}$ can be finally transformed into a solvable SDR Problem $\mathbb{P}_{9}$:
\begin{subequations}\label{prob9}
\begin{talign}
\mathbb{P}_{9}: &\min\limits_{\bm{S},T^{(u)},T^{(t)},T^{(a)},T^{(s)}}\quad  \text{Tr}(\bm{P}_1 \bm{S})\tag{\ref{prob9}}\\
\text{s.t.} \quad & \text{Tr}(\bm{P}_2 \bm{S})=0, \label{x_constr1_sdr}\\       & \text{Tr}(\bm{P}_3 \bm{S})=0, \label{x_constr2_sdr}\\
         & \text{Tr}(\bm{P}_4 \bm{S})\leq0, \label{varphi_constr_sdr}\\
         & \text{Tr}(\bm{P}_5 \bm{S})=0, \label{varphi_sum_sdr}\\
         & \text{Tr}(\bm{P}_6 \bm{S})\leq0, \label{x_phi_constr_sdr}\\
         & \text{Tr}(\bm{P}_7 \bm{S})\leq0, \label{x_gamma_constr_sdr}\\
         & \text{Tr}(\bm{P}_8 \bm{S})\leq0, \label{x_rho_constr_sdr}\\
         & \text{Tr}(\bm{P}_9 \bm{S})\leq T^{(u)}, \label{Tu_constr_sdr}\\
         & \text{Tr}(\bm{P}_{10} \bm{S})\leq T^{(i)}, \forall i \in \{t,a,s\},\label{Ttas_constr_sdr}\\
         & \bm{S}\succeq0, \label{S_constr_sdr}
\end{talign}
\end{subequations}
where $\text{Tr}(\cdot)$ means the trace of a matrix.
\end{lemma}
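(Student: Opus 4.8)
The plan is to apply the standard semidefinite relaxation (SDR) lifting to the QCQP $\mathbb{P}_8$ that was established in Lemma \ref{lemma_p7top8}. The starting observation is that every term appearing in the objective and constraints of $\mathbb{P}_8$ is either a quadratic form $\bm{Q}^\intercal \bm{A} \bm{Q}$ or an affine form $\bm{b}^\intercal \bm{Q}$ in the decision vector $\bm{Q}$, with the delay variables $T^{(u)}, T^{(t)}, T^{(a)}, T^{(s)}$ entering only affinely. Using the elementary identity $\bm{Q}^\intercal \bm{A} \bm{Q} = \mathrm{Tr}(\bm{A}\,\bm{Q}\bm{Q}^\intercal)$, each such term becomes linear once we introduce the lifted matrix variable $\bm{S} := \bm{Q}\bm{Q}^\intercal$, augmented by a homogenizing entry equal to $1$ so that the affine parts are absorbed into the same trace expression. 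First I would write out this augmentation explicitly and collect the objective $-\bm{Q}^\intercal \bm{P}_0 \bm{Q} - \bm{W}_0^\intercal \bm{Q} - T^{(u)} - T^{(t)} - T^{(a)} - T^{(s)}$ into a single coefficient matrix $\bm{P}_1$, so that the objective reads $\mathrm{Tr}(\bm{P}_1 \bm{S})$.

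Next I would translate the constraints of $\mathbb{P}_8$ one by one into trace form, which is precisely what defines the matrices $\bm{P}_2,\dots,\bm{P}_{10}$. Concretely, the binary constraint (\ref{x_range_constr_qcqp}) and the user-association sum constraint (\ref{x_sum_constr_qcqp}) yield the equalities (\ref{x_constr1_sdr}) and (\ref{x_constr2_sdr}); the offloading-ratio bounds (\ref{varphi_range_constr_qcqp}) and their sum (\ref{varphi_sum_constr_qcqp}) give (\ref{varphi_constr_sdr}) and (\ref{varphi_sum_sdr}); the bandwidth, computing, and power budget constraints (\ref{phi_sum_constr_qcqp})--(\ref{rho_sum_constr_qcqp}) become (\ref{x_phi_constr_sdr})--(\ref{x_rho_constr_sdr}); and the delay-defining constraints (\ref{Tu_constr_qcqp})--(\ref{Ttas_constr_qcqp}) become (\ref{Tu_constr_sdr})--(\ref{Ttas_constr_sdr}), with the $T$ variables retained on the right-hand sides. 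Each $\bm{P}_k$ is simply the symmetrized coefficient matrix of the corresponding quadratic or affine expression, so this step is mechanical once the augmentation convention is fixed.

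With the objective and all constraints expressed through $\bm{S}$, the resulting problem is exactly equivalent to $\mathbb{P}_9$ together with the two extra conditions $\bm{S}\succeq 0$ and $\mathrm{rank}(\bm{S})=1$: positive semidefiniteness and rank one are precisely the conditions under which a symmetric $\bm{S}$ factors as $\bm{Q}\bm{Q}^\intercal$ for a genuine vector $\bm{Q}$. The semidefinite relaxation then consists of dropping the nonconvex rank-one condition and keeping only $\bm{S}\succeq 0$, i.e.\ constraint (\ref{S_constr_sdr}). This turns the feasible set into a spectrahedron and makes $\mathbb{P}_9$ a convex SDP, hence solvable by interior-point methods, which is the claim of the lemma.

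The main obstacle is not the lifting itself but the relaxation gap. On the bookkeeping side, the $\bm{P}_k$ must be assembled so that the cross terms between $\bm{x}$ and $\bm{\varphi}$ appearing in $cost_{n,m}^{(t)}, cost_{n,m}^{(a)}, cost_{n,m}^{(s)}$ (with $\bm{\gamma},\bm{\phi},\bm{\rho}$ now fixed) land in the correct off-diagonal blocks; carrying this out consistently across all four network levels is tedious but routine. The genuinely delicate point is that discarding the rank-one constraint enlarges the feasible set, so an optimizer $\bm{S}^\star$ of $\mathbb{P}_9$ need not be rank one, and in particular the recovered $\bm{x}$ need not be binary. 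I would therefore have to argue that a feasible, rank-one solution is recoverable, either by verifying that the SDR is tight for this structured instance or by applying an eigenvalue-decomposition / Gaussian-randomization rounding step to extract $\bm{Q}$ from $\bm{S}^\star$ and then projecting onto (\ref{x_range_constr_qcqp})--(\ref{x_sum_constr_qcqp}); establishing that this rounding preserves feasibility and near-optimality is where the real effort lies.
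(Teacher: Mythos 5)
Your proposal is correct and follows essentially the same route as the paper: introduce the homogenized lifted variable $\bm{S}:=(\bm{Q}^\intercal,1)^\intercal(\bm{Q}^\intercal,1)$, rewrite the objective and each constraint of $\mathbb{P}_8$ as a trace against a block matrix $\bm{P}_k$ built from the quadratic coefficient matrix and half the affine coefficient vector, and relax by dropping the rank-one condition while keeping $\bm{S}\succeq 0$. The relaxation-gap and rounding concern you raise is legitimate but is handled by the paper outside this lemma (via normalization of $x_{n,m}$ and the Hungarian algorithm after solving $\mathbb{P}_9$), so it is not part of what this lemma is asked to establish.
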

\begin{proof}
    Refer to Appendix \ref{append_lemma_p8top9}.
\end{proof}

In the Problem $\mathbb{P}_{9}$, each constraint and objective term in the Problem $\mathbb{P}_{8}$ is rewritten as $\text{Tr}(\bm{P}_i \bm{S})$ for $i \in \{1,2,\cdots,10\}$, where $\bm{P}_i$ is a predefined symmetric matrix derived from the corresponding constraint. For more details, please refer to Appendix \ref{append_lemma_p8top9}.
Now, Problem $\mathbb{P}_{8}$ is finally transformed into a solvable SDR Problem $\mathbb{P}_{9}$. Standard convex solvers can efficiently solve the SDR Problem $\mathbb{P}_{9}$ in polynomial time, providing a continuous version of $\vect{Q}$. However, this version often only serves as the lower bound for the ideal solution and may not satisfy the $\text{rank}(\mathbf{S})=1$ constraint. To rectify this, we apply rounding techniques. The final $NM$ components of $\vect{Q}$, represented by $x_{n,m}$ for every $n \in \mathcal{N}$ and $m \in \mathcal{M}$, reflect the partial connection of users to servers. If the sum $\sum_{m \in \mathcal{M}} x_{n,m}$ exceeds 1 for any user, we normalize $x_{n,m}$ by dividing it by the absolute sum. The Hungarian algorithm \cite{kuhn1955hungarian}, augmented with zero vectors, is used to identify the optimal matching, denoted as $\mathcal{X}$. Within this matching, we set $x_{n,m}$ to 1 if nodes $n$ and $m$ are paired, and 0 otherwise, labeling this as $\bm{x}^\star$. We set the results of $\bm{\varphi}$ in $\vect{Q}$ as $\bm{\varphi}^\star$. The whole procedure of solving sub-problem 2 is presented in Algorithm 2.
\begin{algorithm}[t]
\caption{QCQP method to solve Sub-problem 2.}
\label{algorithm_subproblem2}
For all $n \in \mathcal{N}, m \in \mathcal{M}$: Randomly set a one-hot vector for each user $n$ in $\bm{x^{(k)(0)}}$, $k \in \{t,a,s\}$,
$\varphi_n^{(k)(0)}=0.25$, $k \in \{u,t,a,s\}$, $\bm{\phi}^{(0)}$, $\bm{\rho}^{(0)}$, and $\bm{\gamma}^{(0)}$ obtained by Algorithm \ref{algorithm_subproblem1};

Calculate $\bm{\alpha}^{(0)}, \bm{\psi}^{(0)}$ with initial settings;


Initialize $j = -1$; 

Set $[\bm{x}^{(i,0)}, \bm{\varphi}^{(i,0)}] \leftarrow [\bm{x}^{(i)}, \bm{\varphi}^{(i)}]$;

Initialize $[\bm{P}_1^{(i,0)}$, $\bm{P}_2^{(i,0)}$, $\bm{P}_3^{(i,0)}$, $\bm{P}_4^{(i,0)}$, $\bm{P}_5^{(i,0)}$, $\bm{P}_6^{(i,0)}$, $\bm{P}_7^{(i,0)}$, $\bm{P}_8^{(i,0)}$, $\bm{P}_9^{(i,0)}$, $\bm{P}_{10}^{(i,0)}]$$\leftarrow$ $[\bm{P}_1^{(i)}$, $\bm{P}_2^{(i)}$, $\bm{P}_3^{(i)}$, $\bm{P}_4^{(i)}$, $\bm{P}_5^{(i)}$, $\bm{P}_6^{(i)}$, $\bm{P}_7^{(i)}$, $\bm{P}_8^{(i)}$, $\bm{P}_9^{(i)}$, $\bm{P}_{10}^{(i)}]$;

\Repeat{$\frac{V_{\mathbb{P}_{9}}(\bm{x}^{(i,j+1)}, \bm{\varphi}^{(i,j+1)}}{V_{\mathbb{P}_{9}}(\bm{x}^{(i,j)}, \bm{\varphi}^{(i,j)})}- 1 \leq \epsilon_2$, where $\epsilon_2$ is a small positive number}{
Let $j\leftarrow j+1$;

Obtain $[\bm{x}^{(i,j+1)}, \bm{\varphi}^{(i,j+1)}]$ of continuous values by solving Problem $\mathbb{P}_{9}$;

Update $[\bm{P}_1^{(i,j+1)}$, $\bm{P}_2^{(i,j+1)}$, $\bm{P}_3^{(i,j+1)}$, $\bm{P}_4^{(i,j+1)}$, $\bm{P}_5^{(i,j+1)}$, $\bm{P}_6^{(i,j+1)}$, $\bm{P}_7^{(i,j+1)}$, $\bm{P}_8^{(i,j+1)}$, $\bm{P}_9^{(i,j+1)}$, $\bm{P}_{10}^{(i,j+1)}]$ with $[\bm{x}^{(i,j+1)}, \bm{\varphi}^{(i,j+1)}]$;
}

Return $[\bm{x}^{(i,j+1)}, \bm{\varphi}^{(i,j+1)}]$ as a solution to the SDR Problem $\mathbb{P}_{9}$;

If the sum $\sum_{m \in \mathcal{M}} x_{n,m}$ exceeds 1 for any user, we normalize $x_{n,m}$ by dividing it by the absolute sum. Use the Hungarian algorithm augmented with zero vectors to identify the optimal matching, denoted as $\mathcal{X}$. Within this matching, we set $x_{n,m}$ to 1 if nodes $n$ and $m$ are paired and 0 otherwise. Denote that integer association results as $\bm{x}_\star^{(i,j+1)}$.

Set $[\bm{x}^{(i+1)}, \bm{\varphi}^{(i+1)}]\leftarrow$ $[\bm{x}_\star^{(i,j+1)}, \bm{\varphi}^{(i,j+1)}]$;

Return $[\bm{x}^{(i+1)}, \bm{\varphi}^{(i+1)}$ at the $(i+1)$-th iteration as a solution $[\bm{x}^\star, \bm{\varphi}^\star]$ to Problem $\mathbb{P}_{3}$.
\end{algorithm}
\begin{figure}[htbp]
\begin{talign}
    \small\textnormal{\textbf{Theorem \ref{theorem_solvep1}}} \Leftarrow 
    \begin{cases}
        \textnormal{\textbf{Lemma \ref{lemma_p1top2}}}\\
        \textnormal{\textbf{Lemma \ref{lemma_p2top3}}}\\
        \textnormal{\textbf{Theorem \ref{theorem_subproblem1}}} \Leftarrow \textnormal{\textbf{Lemma \ref{lemma_p4top5}}}\\
        \textnormal{\textbf{Theorem \ref{theorem_subproblem2}}} \Leftarrow 
        \begin{cases}
            \textnormal{\textbf{Lemma \ref{lemma_p7top8}}}\\
            \textnormal{\textbf{Lemma \ref{lemma_p8top9}}} \nonumber
        \end{cases}
    \end{cases}\\
    \small\textnormal{\text{$\mathbb{P}_{1}$}} \small\stackrel{\textnormal{\textbf{Lemma \ref{lemma_p1top2}}}}\Longrightarrow \textnormal{\text{$\mathbb{P}_{2}$}} \small\stackrel{\textnormal{\textbf{Lemma \ref{lemma_p2top3}}}}\Longrightarrow \textnormal{\text{$\mathbb{P}_{3}$}} \small\stackrel{\textnormal{\textbf{AO}}}\rightarrow
    \begin{cases}
        \small\textnormal{\text{$\mathbb{P}_{4}$}} \small\stackrel{\textnormal{\textbf{Lemma \ref{lemma_p4top5}}}}\Longrightarrow \textnormal{\text{$\mathbb{P}_{5}$}}\\
        \small\textnormal{\text{$\mathbb{P}_{7}$}} \small\stackrel{\textnormal{\textbf{Lemma \ref{lemma_p7top8}}}}\Longrightarrow \textnormal{\text{$\mathbb{P}_{8}$}} \small\stackrel{\textnormal{\textbf{Lemma \ref{lemma_p8top9}}}}\Longrightarrow \textnormal{\text{$\mathbb{P}_{9}$}}\nonumber
    \end{cases}
\end{talign}
    \caption{A graph of the transformation relationship between Problems, Theorems, and Lemmas.}
    \label{fig:relat_prob_theorem_lemma}
\end{figure}
\subsection{Whole procedure of proposed PARA algorithm}
Let the objective function value of Problem $\mathbb{P}_{i}$ be $V_{\mathbb{P}_{i}}$. 
Here we summarize the overall flow of the optimization algorithm. At the $i$-th iteration, we first initialize $\bm{\alpha}^{(i-1)}, \bm{\psi}^{(i-1)}$ with $\bm{x}^{(i-1)}$, $\bm{\varphi}^{(i-1)}$, $\bm{\phi}^{(i-1)}$, $\bm{\rho}^{(i-1)}$, $\bm{\gamma}^{(i-1)}$. Then, we fix $\bm{\alpha}, \bm{\psi}$ as $\bm{\alpha}^{(i-1)}, \bm{\psi}^{(i-1)}$ and optimize $\bm{x}$, $\bm{\varphi}$, $\bm{\phi}$, $\bm{\rho}$, $\bm{\gamma}$. For the optimization of $\bm{x}$, $\bm{\varphi}$, $\bm{\phi}$, $\bm{\rho}$, $\bm{\gamma}$, we use the alternative optimization technique. 

In the first step, we fix $\bm{x}$, $\bm{\varphi}$ as $\bm{x}^{(i-1)}$, $\bm{\varphi}^{(i-1)}$ and optimize $\bm{\phi}$, $\bm{\rho}$, $\bm{\gamma}$. At this optimization step, we also introduce an auxiliary variable $\varrho^{(t)}_{n,m}$, $\varrho^{(a)}_{n,m}$, $\varrho^{(s)}_{n,m}$ to transform Problem $\mathbb{P}_{4}$ into a solvable concave problem $\mathbb{P}_{5}$. At the $j$-th inner iteration, we initialize $\bm{\varrho}^{\bm{(t)}(i-1,j-1)}$, $\bm{\varrho}^{\bm{(a)}(i-1,j-1)}$, $\bm{\varrho}^{\bm{(s)}(i-1,j-1)}$ with $\bm{x}^{(i-1)}$, $\bm{\varphi}^{(i-1)}$, $\bm{\phi}^{(i-1,j-1)}$, $\bm{\rho}^{(i-1,j-1)}$, $\bm{\gamma}^{(i-1,j-1)}$. We fix $\bm{\varrho}^{\bm{(t)}}$, $\bm{\varrho}^{\bm{(a)}}$, $\bm{\varrho}^{\bm{(s)}}$ as $\bm{\varrho}^{\bm{(t)}(i-1,j-1)}$, $\bm{\varrho}^{\bm{(a)}(i-1,j-1)}$, $\bm{\varrho}^{\bm{(s)}(i-1,j-1)}$ and optimize $\bm{\phi}$, $\bm{\rho}$, $\bm{\gamma}$. Then we obtain the optimization results $\bm{\phi}^{(i-1,j)}$, $\bm{\rho}^{(i-1,j)}$, $\bm{\gamma}^{(i-1,j)}$ and update $\bm{\varrho}^{\bm{(s)}(i-1,j)}$ with these results. This optimization cycle is repeated until the difference in the objective function value of Problem $\mathbb{P}_{5}$ between the $j$-th and $(j-1)$-th iterations falls below a predefined threshold. We set the results of this alternative optimization step as $\bm{\phi}^{(i)}$, $\bm{\rho}^{(i)}$, $\bm{\gamma}^{(i)}$.

In the second step, we fix the $\bm{\phi}$, $\bm{\rho}$, $\bm{\gamma}$ as $\bm{\phi}^{(i)}$, $\bm{\rho}^{(i)}$, $\bm{\gamma}^{(i)}$ and optimize $\bm{x}$, $\bm{\varphi}$. Then we first obtain $\bm{\varphi}^{(i)}$ and the continuous solution of $\bm{x}$ by solving Problem $\mathbb{P}_{10}$. Next, we use the Hungarian algorithm to obtain the discrete solution of $\bm{x}$ and denote it as $\bm{x}^{(i)}$.  Until now, we have obtained $\bm{x}^{(i)}$, $\bm{\varphi}^{(i)}$, $\bm{\phi}^{(i)}$, $\bm{\rho}^{(i)}$, $\bm{\gamma}^{(i)}$. Update $\bm{\alpha}^{(i)}, \bm{\psi}^{(i)}$ with those results.

Repeat these two optimization steps until the difference in the objective function value of Problem $\mathbb{P}_{3}$ between the $i$-th and $(i-1)$-th iterations falls in a predefined threshold. Then, we set the optimization results as $\bm{x}^\star$, $\bm{\varphi}^\star$, $\bm{\phi}^\star$, $\bm{\rho}^\star$, $\bm{\gamma}^\star$. The whole procedure of the PARA algorithm is presented in Algorithm 3.
\begin{algorithm}[t]
\caption{Whole procedure of proposed PARA algorithm in SAGIN.}
\label{algorithm}
Initialize $i \leftarrow -1$ and for all $n \in \mathcal{N}, m \in \mathcal{M}$: Randomly set a one-hot vector for each user $n$ in $\bm{x^{(k)(0)}}$, $k \in \{t,a,s\}$,
$\varphi_n^{(k)(0)}=0.25$, $k \in \{u,t,a,s\}$, $\phi_{n,m}^{(k)(0)}=\frac{1}{N}$, $k \in \{t,a,s\}$, $\rho_n^{(u)(0)}=1$, $\rho_{n,m}^{(t)(0)}=\rho_{n,m}^{(a)(0)}=\frac{1}{N}$, $\gamma_n^{(u)(0)}=1$, $\gamma_{n,m}^{(k)(0)}=\frac{1}{N}$, $k \in \{t,a,s\}$;

Calculate $\bm{\alpha}^{(0)}, \bm{\psi}^{(0)}$ with initial settings;

\Repeat{$\frac{V_{\mathbb{P}_{3}}(\bm{x}^{(i+1)}, \bm{\varphi}^{(i+1)}, \bm{\phi}^{(i+1)}, \bm{\rho}^{(i+1)}, \bm{\gamma}^{(i+1)})}{V_{\mathbb{P}_{3}}(\bm{x}^{(i)}, \bm{\varphi}^{(i)}, \bm{\phi}^{(i)}, \bm{\rho}^{(i)}, \bm{\gamma}^{(i)})}- 1 \leq \epsilon_3$, where $\epsilon_3$ is a small positive number}{
Let $i\leftarrow i+1$;

Obtain $[\bm{\phi}^{(i+1)}, \bm{\rho}^{(i+1)}, \bm{\gamma}^{(i+1)}]$ as a solution to Problem $\mathbb{P}_{5}$ by \textbf{Algorithm \ref{algorithm_subproblem1}};

Obtain $[\bm{x}^{(i+1)}, \bm{\varphi}^{(i+1)}]$ as a solution to Problem $\mathbb{P}_{8}$ by \textbf{Algorithm \ref{algorithm_subproblem2}};

Update $[\bm{\alpha}^{(i+1)}, \bm{\psi}^{(i+1)}]$ with $[\bm{x}^{(i+1)}, \bm{\varphi}^{(i+1)}, \bm{\phi}^{(i+1)}, \bm{\rho}^{(i+1)}, \bm{\gamma}^{(i+1)}]$;
}
Return $[\bm{x}^{(i+1)}, \bm{\varphi}^{(i+1)}, \bm{\phi}^{(i+1)}, \bm{\rho}^{(i+1)}, \bm{\gamma}^{(i+1)}]$ as a solution $[\bm{x}^\star, \bm{\varphi}^\star, \bm{\phi}^\star, \bm{\rho}^\star, \bm{\gamma}^\star]$ to Problem $\mathbb{P}_{3}$.
\end{algorithm}

\subsection{Mobility-aware PEFT edge training for SAGIN networks}\label{sec.mobility}
To evaluate PARA’s robustness under realistic mobility, we adopt a mobility-aware model that divides training into discrete time slots. In each slot, the network is quasi-static, and server and user positions are fixed within the slot but may change across slots due to aerial and satellite mobility. \cite{sun2024multicast, zhang2021space}. 

\subsubsection{Server and user settings}
For server mobility, we assume that each aerial or satellite server has a service duration limit $T^{(a)}_{stay}$ or $T^{(s)}_{stay}$, denoting the number of time slots it remains in the coverage area. When the aerial or satellite servers become unavailable, we denote the corresponding out-of-coverage durations as $T^{(a)}_{out}$ or $T^{(s)}_{out}$. During the service duration of one aerial or satellite server, the related channel gain is unchanged until it is offline. Fig. \ref{fig.service_duration} illustrates the service and out-of-coverage timelines for aerial and satellite servers.
Before leaving, a server checks if it can complete its tasks in time; if not, it signals its lower-level entity for reassignment. In the next slot, a replacement server takes over, and PARA is re-executed to reallocate resources for the remaining workload.

Each user has multiple sequential training tasks, offloaded one at a time. In each round, all current tasks must be completed before users proceed to the next. If a previously assigned server becomes unavailable, the task is reassigned to a newly available server of the same type. To ensure seamless handover, servers are assumed to have sufficient storage to buffer user data and intermediate states. Signaling delays and coordination overheads are ignored for simplicity, and the number of servers per layer remains constant over time.
\begin{figure}[t]
\vspace{-0.1cm}
\centering
\includegraphics[width=0.49\textwidth]{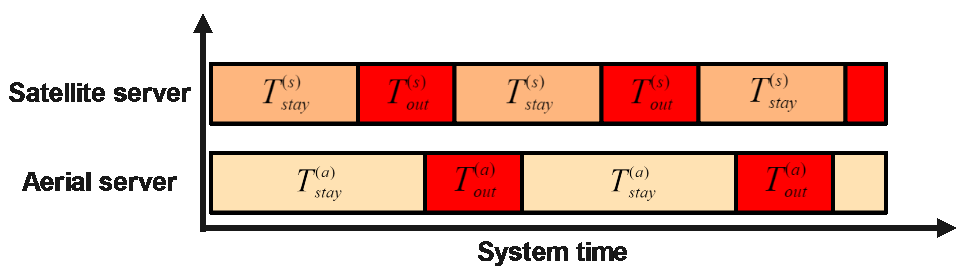}
\caption{Timeline of service and out-of-coverage durations for aerial and satellite servers.}\vspace{-5pt}
\label{fig.service_duration}
\end{figure}
\begin{figure}[t]
\vspace{-0.1cm}
\centering
\includegraphics[width=0.45\textwidth]{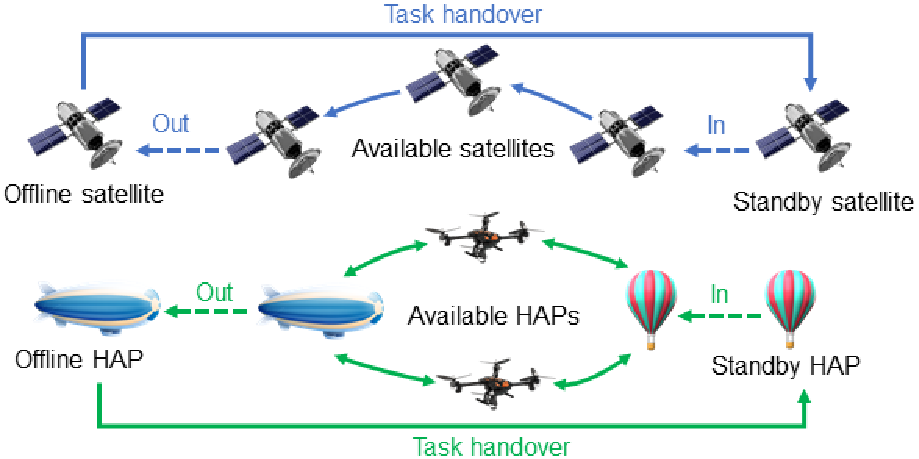}
\caption{Server availability in mobility-aware SAGIN networks.}\vspace{-5pt}
\label{fig.mobility_system}
\end{figure}
\subsubsection{PARA algorithm under mobility-aware SAGIN networks}
In the mobility-aware SAGIN networks, the PARA algorithm proceeds as follows:
\begin{itemize}
    \item \textit{Initialization:} At the beginning, the Problem $\mathbb{P}_{1}$ solved by the PARA algorithm. The obtained solutions are stored as the baseline resource configuration.
    \item \textit{Task execution loop:} This phase includes multiple task rounds. In each round, the system checks whether the assigned aerial and satellite servers remain within coverage for the duration of the offloading and processing. If a user cannot complete their task due to server unavailability (mobility-induced), we set this user to silent and release the related resources. Then we use the PARA algorithm to reallocate the released resources (e.g., bandwidth, power) among the remaining users. If any tasks remain unfinished after reallocation, the system performs a retry loop: It continues redistributing resources until all users complete their tasks or servers become available again.
\end{itemize}
In the next section, we will present the novelty and some potential applications of our proposed algorithm.

\subsection{Novelty and applications of our proposed algorithm}
In this paper, we address maximizing the combined PTE of users and servers in a SAGIN system, using the PARA algorithm. This algorithm optimizes user-server association, and work offloading ratio together, as well as jointly optimizes communication and computational resources like bandwidth, transmission power, and computing allocations for both users and servers. Unlike previous methods that treat communication and computational resources separately, our approach integrates them into a unified optimization problem, leading to better solutions than traditional alternating optimization methods. Additionally, the PARA algorithm's application extends beyond PTE maximization; it's also suitable for solving energy efficiency and various utility-cost ratio problems. For non-concave utility functions, we use successive convex approximation (SCA) \cite{razaviyayn2014successive} to enable PARA's application in mobile edge computing for user connection and resource allocation in wireless scenarios.

The proposed network architecture is designed for emerging mobile AIGC scenarios, e.g., personalized LLM fine-tuning, real-time semantic image generation, and adaptive multimodal learning in mobile edge environments. These applications demand low-latency, privacy-preserving, and resource-aware training, which SAGIN is well-suited to support by leveraging the complementary strengths of terrestrial, aerial, and satellite servers.

For example, consider a remote agricultural monitoring system where autonomous drones collect field data and use generative AI models to generate crop health summaries or alerts. These models must be fine-tuned on-site for different regions or crop types. Since terrestrial edge connectivity may be limited, the drone offloads part of the model tuning to an aerial platform (e.g., HAP) and then to a satellite node. This setup ensures model updates are completed efficiently without relying on distant cloud data centers, while maintaining low latency and minimizing power use.

The PARA algorithm is intended to be executed at a centralized terrestrial controller (e.g., terrestrial base station), which has global knowledge of user demands and SAGIN network states. The controller computes optimal associations and resource schedules and distributes them to involved nodes, making the framework practically deployable and efficient in dynamic multi-layered edge computing systems.

\subsection{Complexity analysis}\label{sec.complexity_analysis}
In this section, we analyze the complexity of the proposed PARA algorithm. In Algorithm \ref{algorithm_subproblem1}, there are $3N+3NM+NM^{(t)}+NM^{(a)}$ variables and $2N+3NM+2M+NM^{(t)}+NM^{(a)}+M^{(t)}+M^{(a)}$ constraints. Note that $M = M^{(t)}+M^{(a)}+M^{(s)}$. The worst-case complexity of it is $\mathcal{O}((N^{3.5}+M^{3.5}+N^{3.5}M^{3.5})\log(\frac{1}{\epsilon_1}))$ with a given solution accuracy $\epsilon_1 > 0$. In Algorithm \ref{algorithm_subproblem2}, there are $NM+4N+4$ variables and $5N+2NM+2M+M^{(t)}+M^{(a)}+4$ constraints. The worst-case complexity of it is $\mathcal{O}((N^{3.5}+M^{3.5}+N^{3.5}M^{3.5})\log(\frac{1}{\epsilon_2}))$ with a given solution accuracy $\epsilon_2 > 0$. The complexity of the Hungarian algorithm is $\mathcal{O}(N^3M^3)$. To summarize, if Algorithm \ref{algorithm} takes $\mathcal{I}$ iterations, the whole complexity is $\mathcal{O}\big(\mathcal{I}(N^{3.5}+M^{3.5}+N^{3.5}M^{3.5})\log(\frac{1}{\epsilon_3})\big)$ with a given solution accuracy $\epsilon_3 > 0$ \cite{wang2022intelligent}.

\section{Numerical Results}\label{sec.numerical_results}
In this section, we present the default settings and numerical results.
\subsection{Default settings}
We first consider a SAGIN topology of 20 users, three terrestrial servers, three aerial servers, and two LEO servers. This moderate-scale setting models a representative regional SAGIN segment and follows common practice in existing SAGIN resource-allocation studies\cite{gao2024cost,tang2022blockchain}.
The path loss between the user $n$ and server $m$ is modeled as $128.1+37.6 \log_{10}d_{ut}$, where $d_{ut}$ denotes the  Euclidean distance between the user and terrestrial server and $d_{ut}$ is no more than 1 km. The path loss between a terrestrial server and an aerial server is $116.7+15 \log_{10}\frac{d_{ta}}{2.6\times10^3}$\cite{li2020enabling}, where $d_{ta}$ is the distance between them. The path loss between an aerial server and an LEO satellite is the same as between a terrestrial server and an aerial server\cite{li2020enabling}. We set $d_{as}$ as 550 km, which is the same setting as Starlink LEO networks. $d_{ta}$ is 20 km. To match practical systems, we set the variable $T^{(s)}$ that is no more than seven minutes to keep the constant link between the aerial server and the LEO server. Gaussian noise power spectral density $\sigma^2$ is $-174$ dBm. The total bandwidth for each server is 10 MHz. The maximum transmit power of mobile users is 2 W. The maximum transmit power of servers is 20 W. We assume the GPU resource utilization is $0.55$ for users and servers. The maximum GPU computation speed of mobile users is $19.58$ TFLOPs with four GTX 1080 GPUs and that of servers is $1372.8$ TFLOPs with eight A100 GPUs. The computational efficiency of mobile users and servers ($\kappa_n$ and $\kappa_m$) is $10^{-38}$. We refer to the adapter parameter sizes in \cite{zhang2023llama} and \cite{gao2023llama}. The training parameter sizes of mobile users are randomly selected from $[1.2, 14]$ M. To achieve this, pseudorandom values are generated, which follow a standard uniform distribution over the open interval $(0,1)$. These pseudorandom values are then scaled to the range of $[1.2, 14]$ M to determine the specific adapter parameter sizes for each mobile user. The token data sizes of users are randomly selected from $[10, 50]$ Mbits and $d_n^{(l)}$ is almost double that. we consider the ``float32'' method to represent the floating-point number and  $\omega_b$ is 32. User and server training epochs $e_n$ and $e_m$ are both one. Delay and energy weights are set as $0.5$, and we reduce the value of the energy by a factor of 1000 so that the energy and delay are in the same order. PTE preferences of users and servers $c_n$ and $c_{n,m}$ are set as one. The Mosek tool in MATLAB is used to conduct the simulations. The hardware configuration is NVIDIA GeForce RTX 2080.
\begin{figure}[t]
\centering
\includegraphics[width=0.4\textwidth]{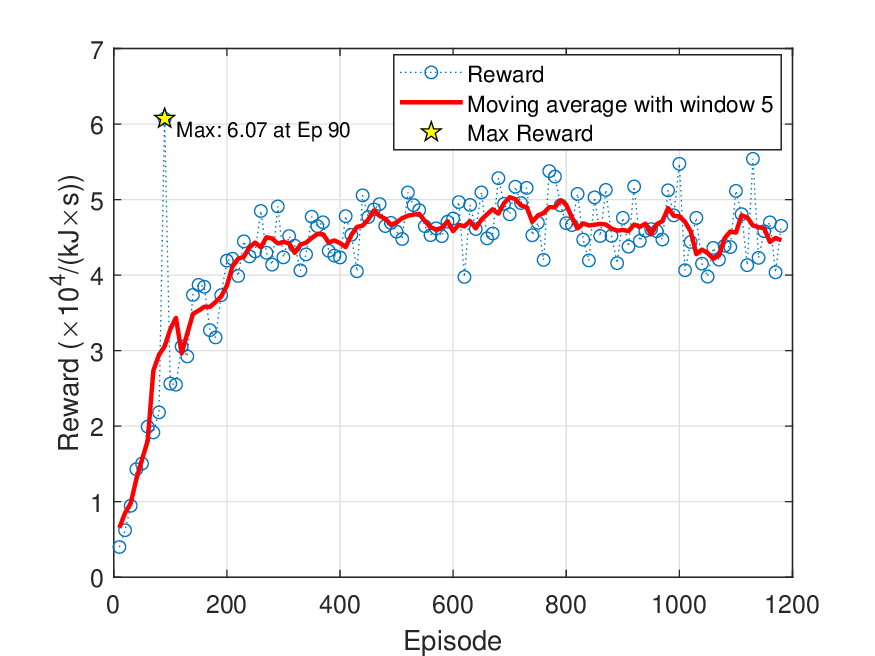}
\caption{Reward convergence of the PPO method during the training phase.}
\label{fig.ppo_reward}
\end{figure}
\begin{figure*}[t]
\vspace{-0.3cm}
\subfigure[Performance comparisons with baselines.]{\includegraphics[width=.33\textwidth]{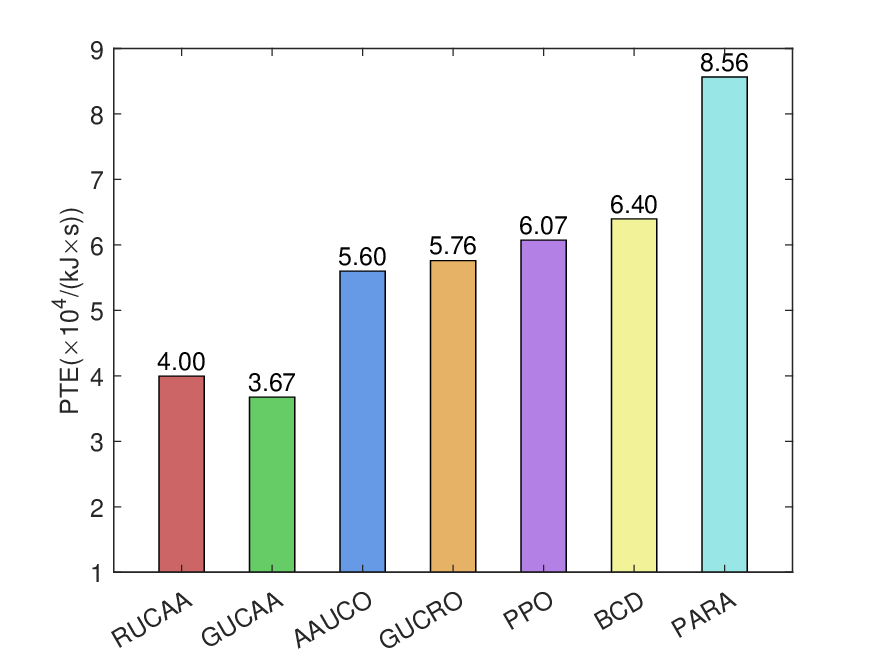}\label{fig.comparison_baselines}}
\subfigure[Performance comparisons under different bandwidth.]{\includegraphics[width=.33\textwidth]{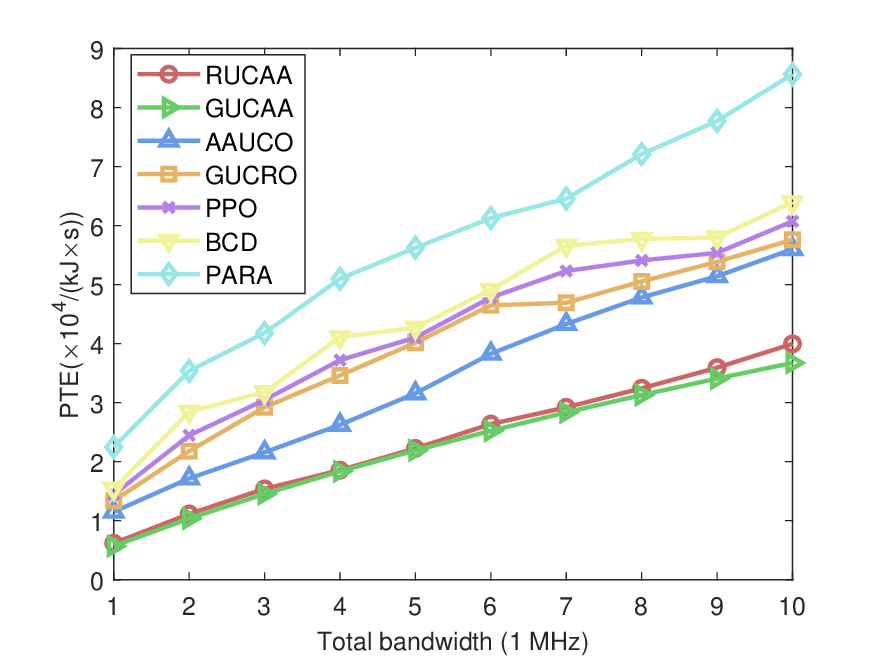}\label{fig.comparison_b}}
\subfigure[Performance comparisons under different user computing speeds.]{\includegraphics[width=.33\textwidth]{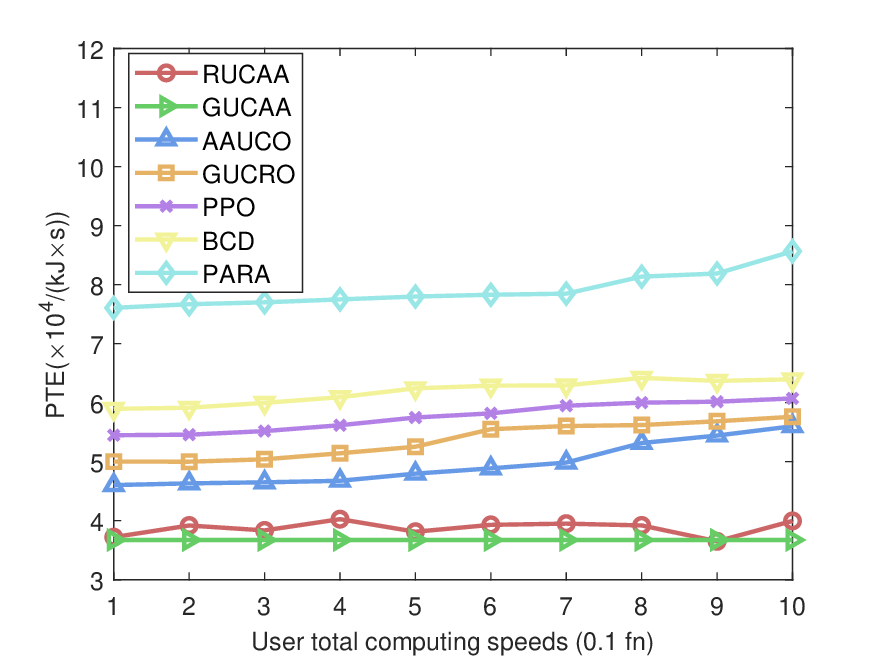}\label{fig.comparison_fu}} \vspace{-10pt}
\caption{Performance comparisons with baselines and under different bandwidth and user computing speeds.}
\end{figure*}
\begin{figure*}[t]
\vspace{-0.3cm}
\subfigure[Performance comparisons under different server computing speeds.]{\includegraphics[width=.33\textwidth]{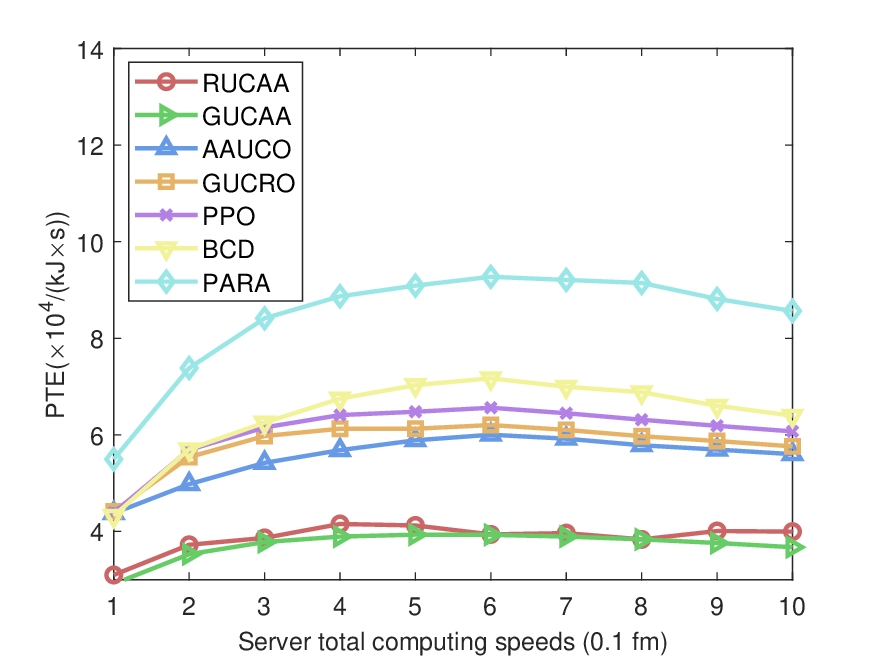}\label{fig.comparison_fs}}
\subfigure[Performance comparisons under different user transmit powers.]{\includegraphics[width=.33\textwidth]{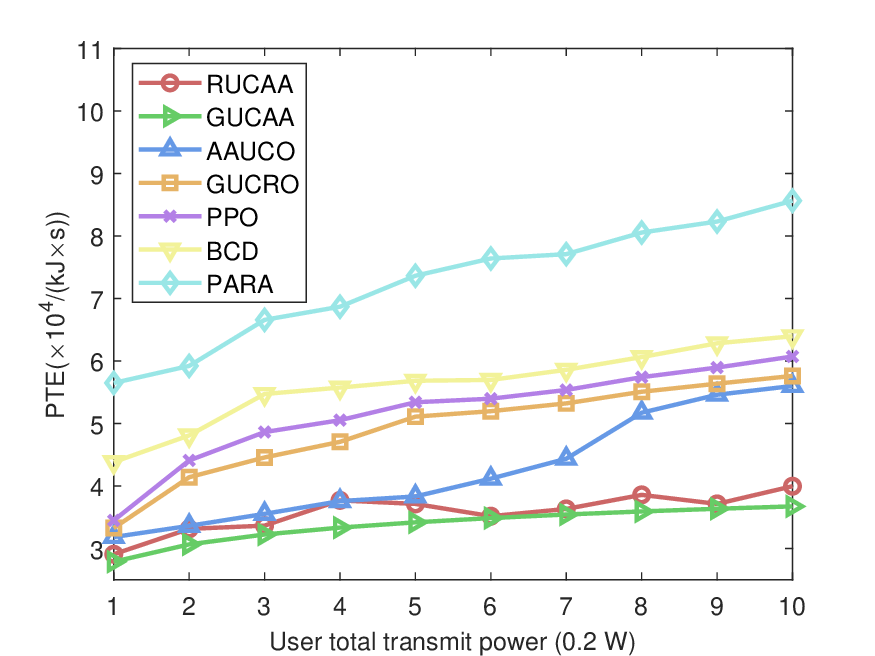}\label{fig.comparison_pu}}
\subfigure[Performance comparisons under different server transmit powers.]{\includegraphics[width=.33\textwidth]{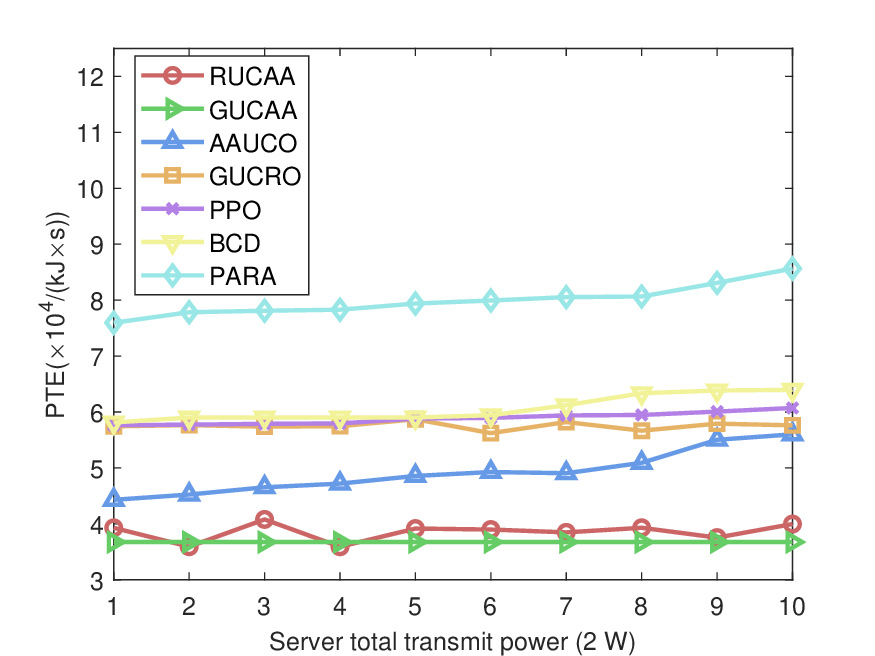}\label{fig.comparison_ps}} \vspace{-10pt}
\caption{Performance comparisons under different server computing speeds, user transmit powers, and server transmit powers.} 
\end{figure*}

\subsection{Performance comparison with other baselines}
We choose the following baselines in \cite{qian2024user}: RUCAA (random user connection with average resource allocation), GUCAA (greedy user connection with average resource allocation), AAUCO (average resource allocation with user connection optimization), and GUCRO (greedy user connection with resource allocation optimization).
Note that user connection optimization and resource allocation refer to the QCQP and FP methods in Sections \ref{sec_subproblem2} and \ref{sec_subproblem1}, respectively. We also choose the block coordinate descent (BCD) optimization method, which iteratively improves the solution by solving the problem along one variable at a time, as a baseline \cite{tseng2001convergence}. The popular reinforcement learning (RL) is also considered in our comparable simulations. In the RL baseline, we adopt a fixed user association identical to that obtained in the BCD method, while employing a PPO framework \cite{yu2022surprising} to optimize the remaining variables. The state, reward, and action are defined by the channel gain matrix, the objective value in Problem $\mathbb{P}_{1}$, and the variable set comprising $\boldsymbol{\phi}$, $\boldsymbol{\rho}$, $\boldsymbol{\varphi}$, and $\boldsymbol{\gamma}$, respectively. In principle, a more comprehensive RL-based baseline would also include user association $\boldsymbol{x}$ as part of the action space. However, this is impractical due to the exponentially large discrete action space introduced by the one-hot constraint on $\boldsymbol{x}$. Therefore, the proposed PPO method is designed as a pragmatic compromise, enabling meaningful comparison against traditional RL methods while avoiding intractable action space complexity.

Fig.~\ref{fig.ppo_reward} illustrates the reward convergence behavior of the PPO-based resource allocation method. The blue circles represent the raw reward values sampled every 10 training episodes, while the red curve denotes the smoothed reward trajectory computed using a moving average with a window size of 5. The training process exhibits a rapid initial increase in reward, demonstrating effective learning in the early stages. The maximum reward, marked by a gold star, is achieved at Episode 90 with a value of approximately 6.07 $(\times 10^4/(kJ \times s))$. After reaching this peak, the reward stabilizes with moderate fluctuations, indicating convergence of the PPO policy. This convergence behavior validates the efficacy and stability of the RL-based baseline PPO in optimizing resource allocation under the defined environment.

In Fig. \ref{fig.comparison_baselines}, we present the simulation results with other baselines. In the comparative analysis of user connection and resource allocation strategies, the proposed PARA method emerges as the most effective. Unlike the RUCAA and GUCAA methods, which either randomly connect users or employ a greedy approach without fully optimizing resource distribution, or the AAUCO and GUCRO strategies that optimize either user connection or resource allocation but not both, PARA integrates all aspects of network optimization into a combinative framework. By leveraging a holistic optimization approach, PARA significantly outperforms the conventional strategies, including the BCD method, which only optimizes variables in a block-wise manner, and the PPO method, which does not jointly optimize all variables.

\subsection{Performance comparison of different communication and computation resources}
\subsubsection{Bandwidth}The simulation data in Fig. \ref{fig.comparison_b} reveals a clear trend across different user association and resource allocation strategies as the total bandwidth of each level increases from 1 MHz to 10 MHz. The PARA method consistently outperforms the other approaches, demonstrating significant gains, especially as bandwidth increases. Notably, while AAUCO, RUCAA, and GUCAA show comparable performance with relatively modest improvements as bandwidth expands, BCD, PPO, and GUCRO exhibit more pronounced growth, suggesting that resource allocation optimization plays a key role in leveraging additional bandwidth effectively.

\subsubsection{User computing speed}In Fig. \ref{fig.comparison_fu}, the PTE performance impact of varying computational resource allocations (from $0.1f_n$ to $f_n$) is reflected. The PARA method consistently demonstrates superior performance as computational resources increase, with its performance metric peaking at approximately 8.56 $(\times 10^4/(kJ \times s))$. Interestingly, while the performance of RUCAA shows variations with changes in user computing speeds, indicating sensitivity to resource allocation, AAUCO, GUCRO, PPO, and BCD exhibit a more stable increase in performance, with BCD showing significant improvement towards higher resource allocations. Notably, GUCAA's performance remains relatively constant, suggesting that its greedy user connection strategy may not effectively leverage additional computational resources compared to the other methods.
\begin{figure*}[t]
\vspace{-0.3cm}
\subfigure[Performance comparisons under different user and server configurations.]{\includegraphics[width=.33\textwidth]{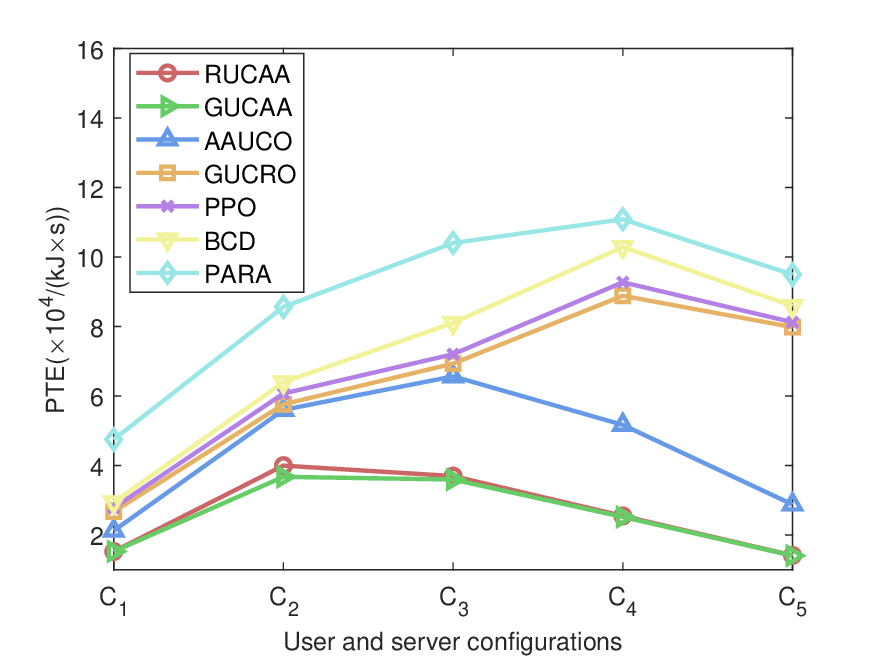}\label{fig.comparison_user_server_configs}}
\subfigure[Performance comparisons under energy and delay weight parameters.]{\includegraphics[width=.33\textwidth]{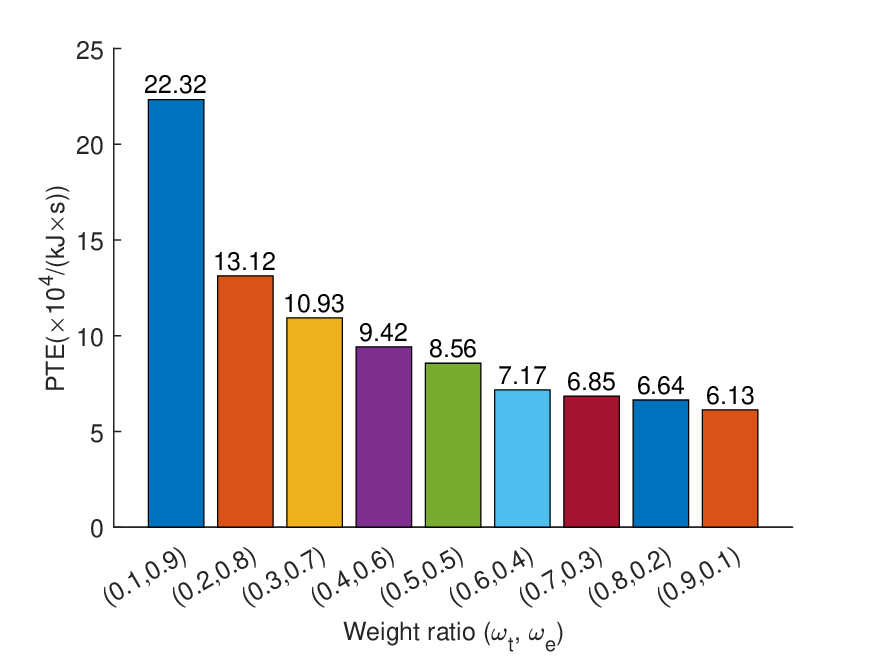}\label{fig.comparison_omega}}
\subfigure[Performance comparisons under different PTE preferences.]{\includegraphics[width=.33\textwidth]{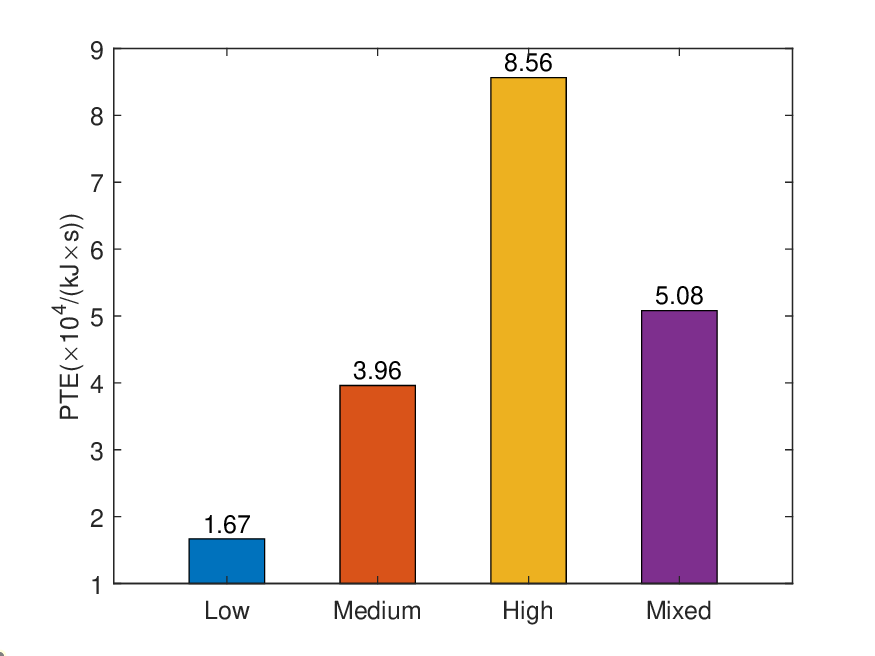}\label{fig.comparison_preferences}} \vspace{-10pt}
\caption{Performance comparisons under different user and server configurations, energy and delay weight parameters, and PTE preferences.} 
\end{figure*}
\begin{figure}[t] 
\vspace{-0.3cm}
\subfigure[Task completion ratio.]{\includegraphics[width=.24\textwidth]{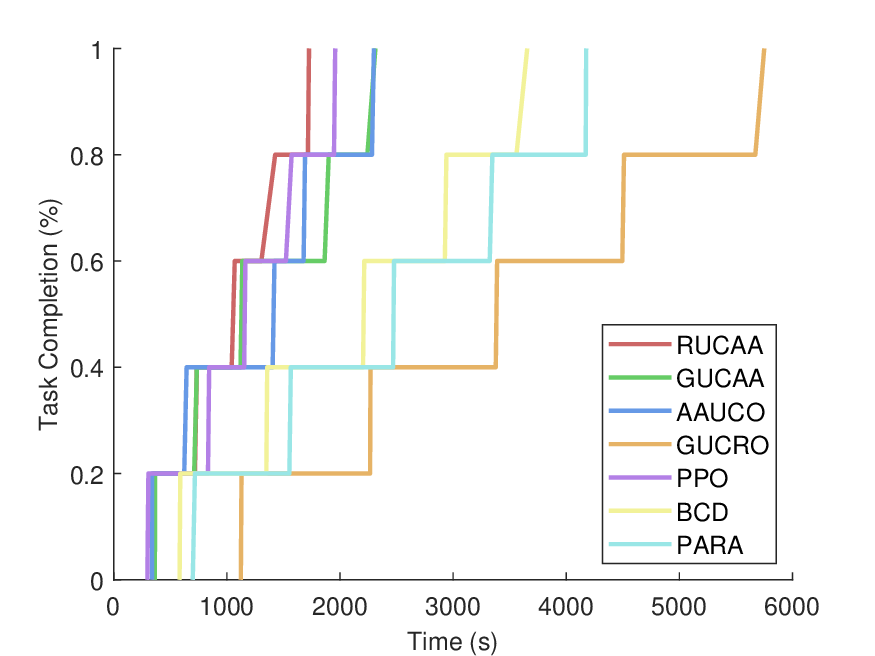}\label{fig.mobility_taskcompletion}}
\subfigure[PTE performance comparison.]{\includegraphics[width=.24\textwidth]{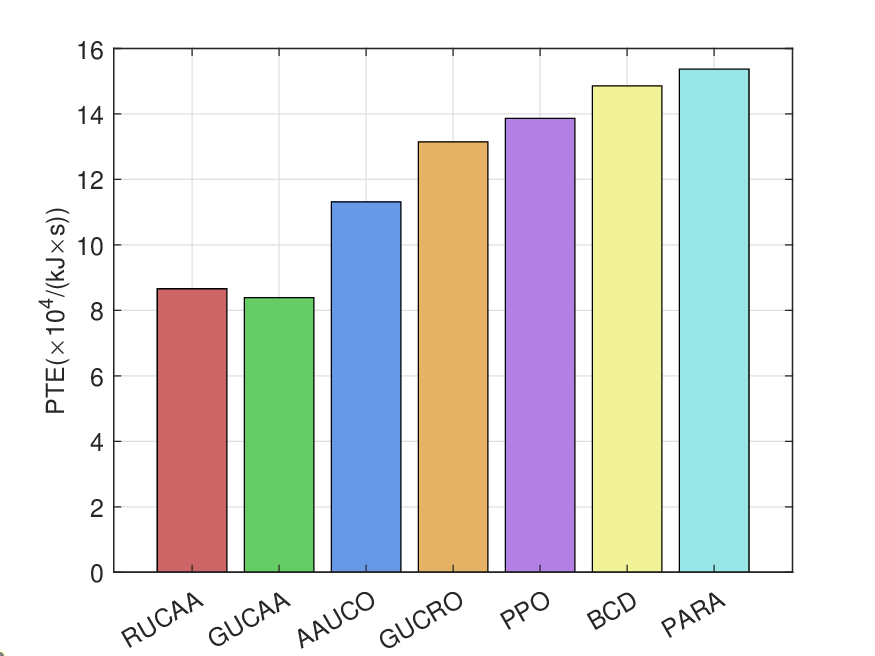}\label{fig.mobility_pte}} 
\subfigure[System delay.]{\includegraphics[width=.24\textwidth]{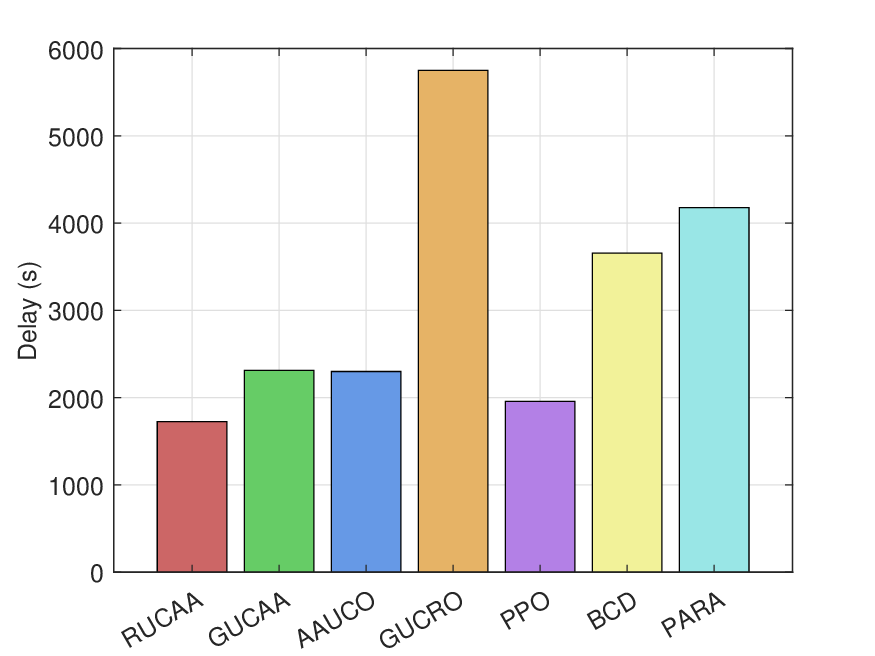}\label{fig.mobility_delay}}
\subfigure[Energy consumption.]{\includegraphics[width=.24\textwidth]{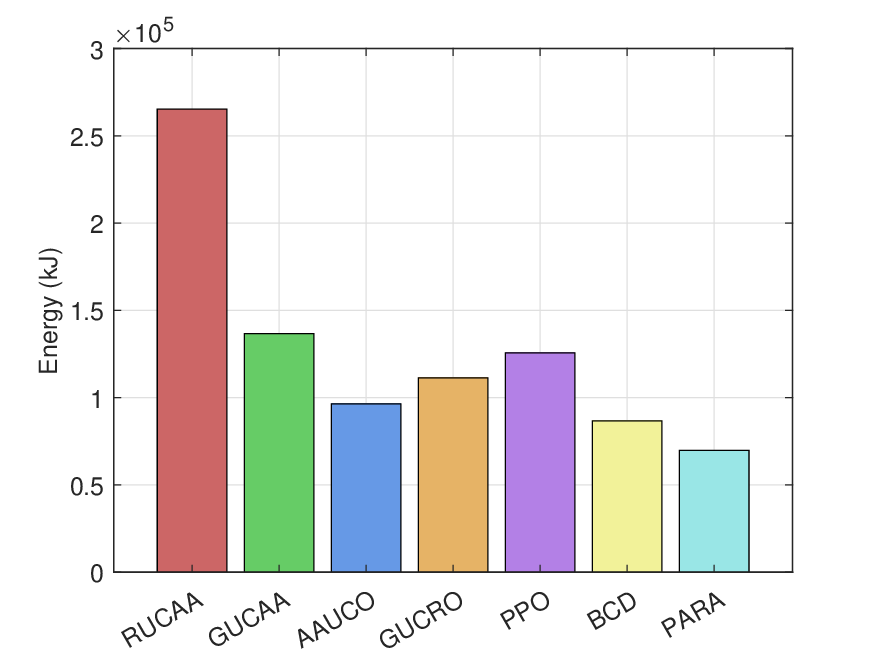}\label{fig.mobility_energy}}\vspace{-10pt}
\caption{Performance comparisons under dynamic SAGIN topology.} 
\label{fig.mobility}
\end{figure}
\subsubsection{Server computing speed}In Fig. \ref{fig.comparison_fs}, we present the impact of increasing server computational resources (from $0.1f_m$ to $f_m$). The PARA method distinctly outshines the other strategies, demonstrating a robust increase in performance as server resources are augmented, peaking at an impressive 9.27 $(\times 10^4/(kJ \times s))$ before a slight decline as resources continue to increase. This suggests an optimal range for resource allocation beyond which additional resources do not translate into proportional performance gains, possibly due to inefficiencies or saturation in resource utilization. The same thing happens with other baselines.

\subsubsection{User transmit power}The simulation results shown in Fig. \ref{fig.comparison_pu} highlight how increasing user transmission power (from 0.2 W to 2 W) boosts performance. The PARA method consistently improves as power increases, reaching its best performance at 2 W. This shows that PARA effectively uses extra power to boost network performance by joint optimization of user association and resource allocation. On the other hand, the RUCAA and GUCAA methods see only modest improvements with more power, hinting that they might not be making the most of the extra power for better performance. AAUCO and GUCRO also get better with more power, but not as quickly as PARA, with GUCRO especially benefiting at the higher power settings, showing its strength in using more power for optimizing resources. The PPO and BCD methods, strong comparison points, also improve significantly at higher power levels, but don't reach the performance levels of PARA.

\subsubsection{Server transmit power}In Fig. \ref{fig.comparison_ps}, the influence of progressively increasing server transmission power from 2 W to 20 W across different optimization strategies is studied, with the PARA method outshining others by effectively leveraging higher power to significantly enhance performance. While RUCAA's performance fluctuates, suggesting a complex relationship between transmission power and its random connection strategy, GUCAA remains notably stable, indicating its insensitivity to changes in server transmit power. In contrast, AAUCO demonstrates an upward trend, benefiting from the power increase, yet GUCRO exhibits some variability, reflecting the challenges in optimally utilizing additional power. PPO and BCD show consistent improvements, particularly at higher power levels.

\subsection{Performance comparison of heterogeneous settings}
\subsubsection{Different user and server configurations}
We consider five user and server configurations in Table \ref{tab:configurations}. In Fig. \ref{fig.comparison_user_server_configs}, the simulation data across different user and server configurations reveals a distinct pattern in performance across various optimization strategies. As the number of users and servers increases, the PARA method consistently outperforms other baseline strategies, showcasing its superior capability to adapt and optimize resource allocation, user connection, and offloading ratios effectively. Notably, while RUCAA and GUCAA exhibit modest performance, likely due to their simpler allocation and connection strategies, AAUCO and GUCRO show significant improvements, suggesting the effectiveness of user connection optimization and resource optimization, respectively. However, GUCRO, PPO, and BCD, which focus on resource optimization, optimizing variables without user association, and a baseline comparison, respectively, also demonstrate substantial gains in larger configurations, indicating their potential to handle increased complexity.
\begin{table}[ht]
\centering
\caption{User and server configurations.}
\begin{tabular}{ccccc}
\toprule
Configuration & $N$ & $M_t$ & $M_a$ & $M_s$ \\
\midrule
$C_1$ & 10 & 2 & 2 & 2 \\
$C_2$ & 20 & 3 & 3 & 2 \\
$C_3$ & 40 & 4 & 4 & 3 \\
$C_4$ & 80 & 8 & 5 & 4 \\
$C_5$ & 160 & 16 & 8 & 5 \\
\bottomrule
\end{tabular}
\label{tab:configurations}
\end{table}
\subsubsection{Delay and energy weights}In Fig. \ref{fig.comparison_omega}, the impact of varying weights for delay and energy consumption ($\omega_t$ and $\omega_e$) on the system PTE performance is analyzed. As the weight shifts from prioritizing energy efficiency towards a more balanced consideration with delay, there's a notable decrease in the PTE performance, from 22.32 $(\times 10^4/(kJ \times s))$ when the emphasis is heavily on energy efficiency (0.1, 0.9) to 6.13 $(\times 10^4/(kJ \times s))$ when the delay is prioritized (0.9, 0.1). This trend indicates a trade-off between delay and energy efficiency, where focusing solely on minimizing energy consumption leads to higher PTE performance, which gradually diminishes as the emphasis shifts towards reducing delay.
\subsubsection{PTE preference}We consider four preference parameter setting cases: 1) low preference: set $c_n$ and $c_{n,m}$ as $0.2$; 2) medium preference: set $c_n$ and $c_{n,m}$ as $0.5$; 3) high preference: set $c_n$ and $c_{n,m}$ as $1$; 4) mixed preference: set $c_n$ and $c_{n,m}$ as $a$, where $a$ is a random value uniformly taken from [0, 1].  
Figure~\ref{fig.comparison_preferences} shows PARA’s performance under different user preference settings. High preferences lead to the best performance, highlighting the importance of aligning resource allocation with user needs. Low preferences result in the weakest outcomes, while medium and mixed settings yield moderate improvements.
These variations underscore the critical role of understanding and integrating user preferences into optimization processes to enhance system effectiveness within the PARA framework.

\subsection{Performance under mobility-aware SAGIN networks}
We set $T^{(a)}_{stay}$, $T^{(s)}_{stay}$, $T^{(a)}_{out}$, and $T^{(s)}_{out}$ as 600 s, 420 s, 100 s, 100 s, respectively. Each user has five training tasks for edge offloading. Fig. \ref{fig.mobility_taskcompletion} illustrates the task completion ratio over time across all methods. The PPO and RUCAA methods demonstrate rapid and stable task execution, indicating effective initial allocation and low overhead in dynamic conditions. The GUCAA and AAUCO algorithms show consistent but slower progress, suggesting more conservative or static resource strategies. In contrast, the PARA, GUCRO, and BCD methods experience prolonged delays in later stages, as also reflected in Fig. \ref{fig.mobility_delay}. However, the PARA method obtains the lowest energy consumption in Fig. \ref{fig.mobility_energy} and in Fig. \ref{fig.mobility_pte}, it also obtains the best PTE performance, after which is the BCD method.
More discussion about weight settings and additional results for dynamic SAGIN with different weight settings are provided in Appendix \ref{sec.weight_setting} and \ref{sec.more_results_under_mobility}, respectively.

\section{Conclusion and Future Work}\label{sec.conclusion}
In conclusion, our work focuses on the optimization of SAGIN for maximizing parameter training efficiency. The introduction of a new metric, PTE, for assessing data processing efficiency, coupled with the proposed PARA technique. We study the joint optimization of user association, offloading ratios, and communication and computational resource allocations across SAGIN's layered architecture sets it apart from existing methodologies. Theoretical proofs and simulation results demonstrate the effectiveness of the proposed optimization technique, presenting a stationary point solution to the sum of the ratios optimization problem.

While our focus has been on theoretical modeling and algorithm development, we recognize the importance of addressing real-world deployment challenges. These include synchronization delay and the practical implementation of distributed fine-tuning at scale. In future work, we plan to extend our framework to incorporate these aspects and validate it in more practical systems and scenarios.

\bibliographystyle{IEEEtran}
\bibliography{ref}

\begin{IEEEbiography}
[{\includegraphics[width=1in,height=1.25in,clip,keepaspectratio]{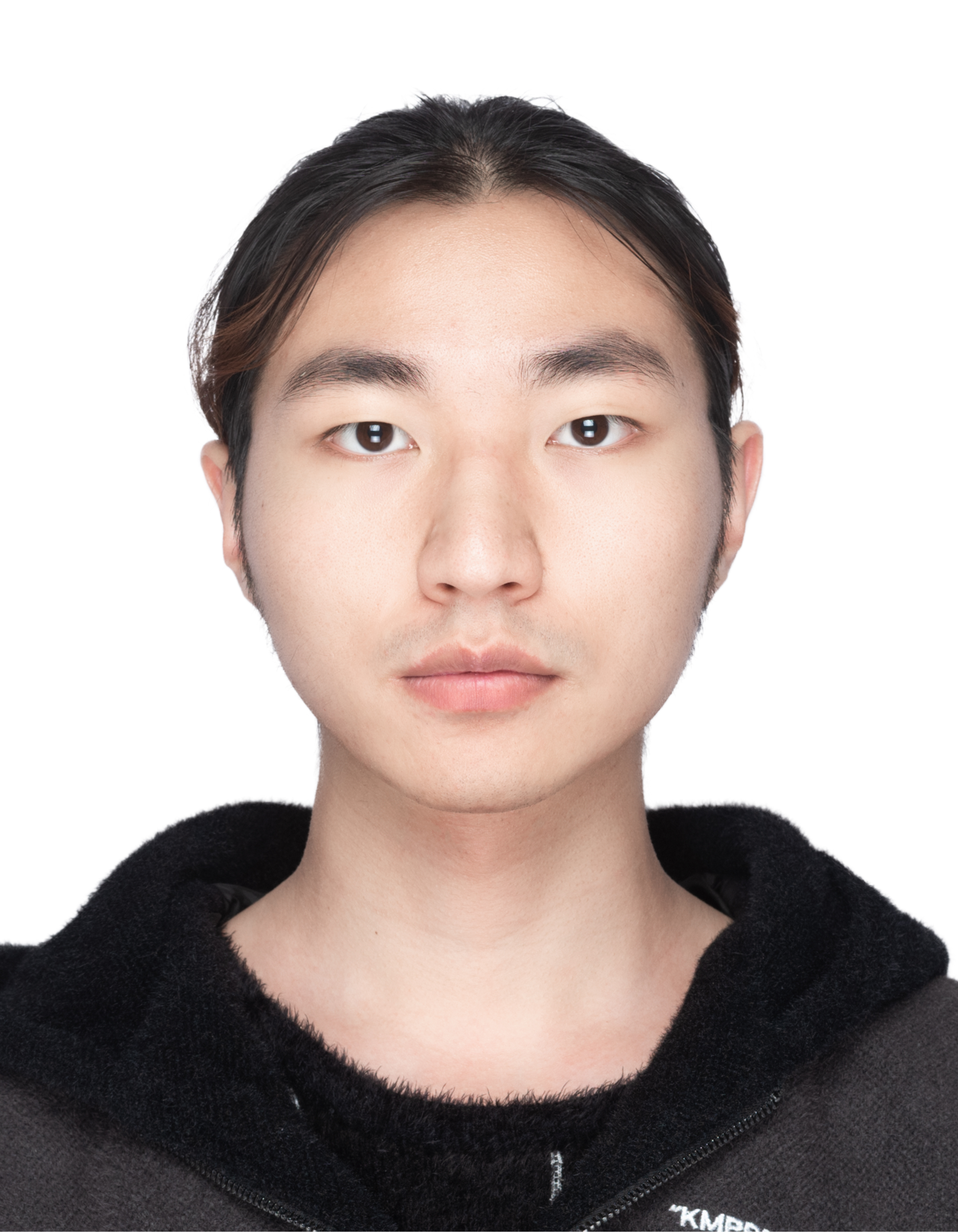}}]
{Liangxin Qian}
(Graduate Student Member, IEEE) received bachelor's and master's degrees in communication engineering from the University of Electronic Science and Technology of China, Chengdu, China, in 2019 and 2022, respectively. He is currently working toward his Ph.D. at the College of Computing and Data Science (CCDS), Nanyang Technological University, Singapore. His research interests include bilevel optimization, mobile edge computing, and secure communications.
\end{IEEEbiography}

\begin{IEEEbiography}
[{\includegraphics[width=1in,height=1.25in,clip,keepaspectratio]{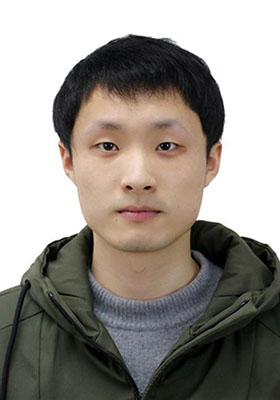}}]
{Peiyuan Si}
(Graduate Student Member, IEEE) received bachelor's and master's degrees in communication engineering from Zhejiang University of Technology of China, Zhejiang, China, in 2018 and 2021, respectively. He is currently working toward his Ph.D. at the College of Computing and Data Science, Nanyang Technological University, Singapore. His research interests include semantic communication, unmanned aerial vehicles, and reinforcement learning.
\end{IEEEbiography}

\begin{IEEEbiography}
[{\includegraphics[width=1in,height=1.25in,clip,keepaspectratio]{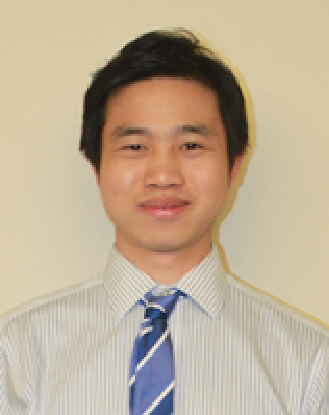}}]
{Jun Zhao} 
(S'10-M'15) received the bachelor’s degree from Shanghai Jiao Tong University, China, in July 2010, and the Ph.D. degree in electrical and computer engineering from Carnegie Mellon University (CMU) in May 2015. He is currently an Assistant Professor with the College of Computing and Data Science (CCDS), Nanyang Technological University (NTU), Singapore.
\end{IEEEbiography}

\begin{IEEEbiography}
[{\includegraphics[width=1in,height=1.25in,clip,keepaspectratio]{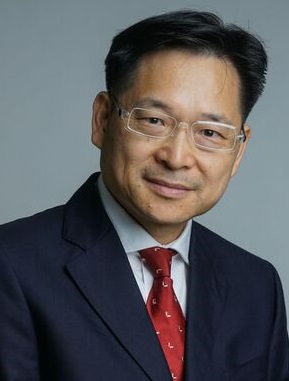}}]
{Kwok-Yan Lam}
(Senior Member, IEEE) received the B.Sc. degree from the University of London, London, U.K., in 1987, and the Ph.D. degree from the University of Cambridge, Cambridge, U.K., in 1990. He is currently an Associate Vice President (Strategy and Partnerships) and a Professor with the College of Computing and Data Science, Nanyang Technological University (NTU), Singapore.
\end{IEEEbiography}

\clearpage
\begin{appendices}
\section{Proof of \textbf{Lemma \ref{lemma_p1top2}}}\label{append_lemma_p1top2}
\begin{proof}
We first define new auxiliary variables $\psi_n^{(u)}$, $\psi_{n,m}^{(t)}$, $\psi_{n,m}^{(a)}$, and $\psi_{n,m}^{(s)}$. Let
\begin{talign}
    &\psi_n^{(u)} \leq \frac{c_n^{(u)}\varphi_n^{(u)}d_n}{cost_n^{(u)}},\\
    &\psi_{n,m}^{(i)} \leq \frac{c_{n,m}^{(i)}\varphi_n^{(i)}d_n}{cost_{n,m}^{(i)}}, i \in \{t,a,s\}.
\end{talign}
Next, we analyze the new constraints by substituting in expressions of $cost_n^{(u)}$, $cost_{n,m}^{(t)}$, $cost_{n,m}^{(a)}$, and $cost_{n,m}^{(s)}$:
\begin{talign}
&cost_n^{(u)} = \omega_t T_n^{(up)} + \omega_e E_n^{(up)}, cost_n^{(u)} \leq \frac{c_n^{(u)}\varphi_n^{(u)}d_n}{\psi_n^{(u)}}, \nonumber \\ 
&\Rightarrow \omega_t T_n^{(up)} + \omega_e E_n^{(up)} \leq \frac{c_n^{(u)}\varphi_n^{(u)}d_n}{\psi_n^{(u)}},\nonumber \\
&\Rightarrow \omega_t T_n^{(up)} + \omega_e e_n t_n\kappa_n\varphi_n^{(u)}d_n(\gamma_n^{(u)})^2 f_n^2 - \frac{c_n^{(u)}\varphi_n^{(u)}d_n}{\psi_n^{(u)}}\leq 0,\\
&cost_{n,m}^{(t)} = \omega_t (T_{n,m}^{(ut)} + T_{n,m}^{(tp)}) + \omega_e (E_{n,m}^{(ut)} + E_{n,m}^{(tp)}), \nonumber \\ &cost_{n,m}^{(t)} \leq \frac{c_{n,m}^{(t)}\varphi_n^{(t)}d_n}{\psi_{n,m}^{(t)}}, \nonumber \\
&\Rightarrow\omega_t (T_{n,m}^{(ut)} + T_{n,m}^{(tp)}) + \omega_e (E_{n,m}^{(ut)} + E_{n,m}^{(tp)}) \leq \frac{c_{n,m}^{(t)}\varphi_n^{(t)}d_n}{\psi_{n,m}^{(t)}}, \nonumber \\
&\Rightarrow\omega_t (T_{n,m}^{(ut)} + T_{n,m}^{(tp)}) + \omega_e (\rho_n^{(u)}p_n \frac{x_{n,m}^{(t)}[\omega_b(1-\varphi_n^{(u)})d_n+d_n^{(l)}]}{r_{n,m_t}},  \nonumber \\&+x_{n,m}^{(t)}\kappa_{m_t}e_{m_t}t_n\varphi_n^{(t)}d_n(\gamma_{n,m}^{(t)}f_{m_t})^2) - \frac{c_{n,m}^{(t)}\varphi_n^{(t)}d_n}{\psi_{n,m}^{(t)}}\leq 0,\\
&cost_{n,m}^{(a)} = \omega_t (T_{n,m}^{(tt)} + T_{n,m}^{(ap)}) + \omega_e (E_{n,m}^{(tt)} + E_{n,m}^{(ap)}), \nonumber \\
&cost_{n,m}^{(a)} \leq \frac{c_{n,m}^{(a)}\varphi_n^{(a)}d_n}{\psi_{n,m}^{(a)}}, \nonumber \\
&\Rightarrow\omega_t (T_{n,m}^{(tt)} + T_{n,m}^{(ap)}) + \omega_e (E_{n,m}^{(tt)} + E_{n,m}^{(ap)}) \leq \frac{c_{n,m}^{(a)}\varphi_n^{(a)}d_n}{\psi_{n,m}^{(a)}}, \nonumber \\
&\Rightarrow\omega_t (T_{n,m}^{(tt)} \!+\! T_{n,m}^{(ap)}) \!+\! \omega_e (\rho_{n,m}^{(t)}p_{m_t}\frac{x_{n,m}^{(a)}[\omega_b(1\!-\!\varphi_n^{(u)}\!-\!\varphi_n^{(t)})d_n \!+\! d_n^{(l)}]}{r_{m_t,m_a}},  \nonumber \\    &+x_{n,m}^{(a)}e_{m_a}\kappa_{m_a}t_n\varphi_n^{(a)}d_n(\gamma_{n,m}^{(a)}f_{m_a})^2) \!-\! \frac{c_{n,m}^{(a)}\varphi_n^{(a)}d_n}{\psi_{n,m}^{(a)}} \leq 0,\\
&cost_{n,m}^{(s)} = \omega_t (T_{n,m}^{(at)} + T_{n,m}^{(sp)}) + \omega_e (E_{n,m}^{(at)} + E_{n,m}^{(sp)}), \nonumber \\
&cost_{n,m}^{(s)} \leq \frac{c_{n,m}^{(s)}\varphi_n^{(s)}d_n}{\psi_{n,m}^{(s)}} \nonumber \\
&\Rightarrow\omega_t (T_{n,m}^{(at)} + T_{n,m}^{(sp)}) + \omega_e (E_{n,m}^{(at)} + E_{n,m}^{(sp)})\leq \frac{c_{n,m}^{(s)}\varphi_n^{(s)}d_n}{\psi_{n,m}^{(s)}}, \nonumber \\
&\Rightarrow\omega_t (T_{n,m}^{(at)} + T_{n,m}^{(sp)})+\omega_e (\frac{[\omega_b(1-\varphi_n^{(u)}-\varphi_n^{(t)}-\varphi_n^{(a)})d_n + d_n^{(l)}]}{r_{m_a,m_s}}x_{n,m}^{(s)} \nonumber \\
&\cdot \rho_{n,m}^{(a)}p_{m_a}+x_{n,m}^{(s)}e_{m_s}\kappa_{m_s}t_n\varphi_n^{(s)}d_n(\gamma_{n,m}^{(s)}f_{m_s})^2)  \nonumber \\
&- \frac{d_n}{\psi_{n,m}^{(s)}} c_{n,m}^{(s)}\varphi_n^{(s)} \leq 0.
\end{talign}
Here, we introduce auxiliary variables $T^{(u)}_n$, $T^{(t)}_{n,m}$, $T^{(a)}_{n,m}$, $T^{(s)}_{n,m}$ to replace the delay formulas $T_n^{(up)}$, $T_{n,m}^{(ut)} + T_{n,m}^{(tp)}$, $T_{n,m}^{(tt)} + T_{n,m}^{(ap)}$, and $T_{n,m}^{(at)} + T_{n,m}^{(sp)}$, respectively. Therefore, we can obtain the following new constraints:
\begin{talign}
&\omega_t T_n^{(u)} \!+\! \omega_e e_n\kappa_n\varphi_n^{(u)}t_n d_n(\gamma_n^{(u)}f_n)^2 \!-\! \frac{c_n^{(u)}\varphi_n^{(u)}d_n}{\psi_n^{(u)}} \!\leq\! 0, \nonumber \\
&\hspace{180pt}\forall n \in \mathcal{N},\label{eq.cost_u_in_proof}
\end{talign}
\begin{talign}
&\omega_t T_{n,m}^{(t)} + \omega_e \big(\rho_n^{(u)}p_n \frac{x_{n,m}^{(t)}[\omega_b(1-\varphi_n^{(u)})d_n + d_n^{(l)}]}{r_{n,m_t}} \nonumber \\
&+ x_{n,m}^{(t)}\kappa_{m_t}\varphi_n^{(t)} e_{m_t}t_n d_n(\gamma_{n,m}^{(t)}f_{m_t})^2\big) - \frac{c_{n,m}^{(t)}\varphi_n^{(t)}d_n}{\psi_{n,m}^{(t)}}\leq 0, \nonumber \\
&\hspace{140pt}\forall n \in \mathcal{N}, \forall m \in \mathcal{M},\label{eq.cost_t_in_proof}
\end{talign}
\begin{talign}
&\omega_t T_{n,m}^{(a)} + \omega_e (\rho_{n,m}^{(t)}p_{m_t}\frac{x_{n,m}^{(a)}[\omega_b(1-\varphi_n^{(u)}-\varphi_n^{(t)})d_n+d_n^{(l)}]}{r_{m_t,m_a}} \nonumber \\
&+ x_{n,m}^{(a)} e_{m_a}\kappa_{m_a}\varphi_n^{(a)}t_n d_n(\gamma_{n,m}^{(a)}f_{m_a})^2) - \frac{c_{n,m}^{(a)}\varphi_n^{(a)}d_n}{\psi_{n,m}^{(a)}} \leq 0,  \nonumber \\
&\hspace{140pt}\forall n \in \mathcal{N}, \forall m \in \mathcal{M}, \label{eq.cost_a_in_proof}
\end{talign}
\begin{talign}
&\omega_t T_{n,m}^{(s)} \!+\! \omega_e (\rho_{n,m}^{(a)}p_{m_a}\frac{x_{n,m}^{(s)}[\omega_b(1-\varphi_n^{(u)}-\varphi_n^{(t)}-\varphi_n^{(a)})d_n + d_n^{(l)}]}{r_{m_a,m_s}} \nonumber \\
&+\!x_{n,m}^{(s)} e_{m_s}\kappa_{m_s}\varphi_n^{(s)}t_n d_n(\gamma_{n,m}^{(s)}f_{m_t})^2) - \frac{c_{n,m}^{(s)}\varphi_n^{(s)}d_n}{\psi_{n,m}^{(s)}} \leq 0, \nonumber \\
&\hspace{140pt}\forall n \in \mathcal{N}, \forall m \in \mathcal{M},\label{eq.cost_s_in_proof}\\
&T_n^{(up)} \leq T_n^{(u)}, \forall n \in \mathcal{N},\\
&T_{n,m}^{(ut)} + T_{n,m}^{(tp)} \leq T_{n,m}^{(t)}, \forall n \in \mathcal{N}, \forall m \in \mathcal{M},\\
&T_{n,m}^{(tt)} + T_{n,m}^{(ap)} \leq T_{n,m}^{(a)}, \forall n \in \mathcal{N}, \forall m \in \mathcal{M},\\
&T_{n,m}^{(at)} + T_{n,m}^{(sp)} \leq T_{n,m}^{(s)}, \forall n \in \mathcal{N}, \forall m \in \mathcal{M}.
\end{talign}
We define $\bm{T^{(u)}} = [T_n^{(u)}]|_{n \in \mathcal{N}}$ and $\bm{\psi^{(u)}}:=[\psi^{(u)}_n]|_{n \in \mathcal{N}}$.
For $i \in \{t,a,s\}$, let $\bm{T^{(i)}} = [T^{(i)}_{n,m}]|_{n \in \mathcal{N},m \in \mathcal{M}^{(i)}}$, $\bm{\psi^{(i)}}:=[\psi^{(i)}_{n,m}]|_{n \in \mathcal{N}, m \in \mathcal{M}^{(i)}}$, $\bm{T}:=\{\bm{T^{(u)}},\bm{T^{(t)}},\bm{T^{(a)}},\bm{T^{(s)}}\}$, and $\bm{\psi}:=\{\bm{\psi^{(u)}},\bm{\psi^{(t)}},\bm{\psi^{(a)}},\bm{\psi^{(s)}}\}$.
To express the new constraints on the optimization problem clearer to read, we define functions $\varpi^{(u)}_n$, $\varpi^{(t)}_{n,m}$, $\varpi^{(a)}_{n,m}$, and $\varpi^{(s)}_{n,m}$ according to Equations~(\ref{defvarpiu})~(\ref{defvarpit})~(\ref{defvarpia})~(\ref{defvarpis}) given in the statement of Lemma \ref{lemma_p1top2}.
Therefore, the constraints (\ref{eq.cost_u_in_proof}), (\ref{eq.cost_t_in_proof}), (\ref{eq.cost_a_in_proof}), (\ref{eq.cost_s_in_proof}) would be $\varpi^{(u)}_n \leq 0$, $\varpi^{(t)}_{n,m} \leq 0$, $\varpi^{(a)}_{n,m} \leq 0$, and $\varpi^{(s)}_{n,m} \leq 0$, respectively.
Based on the above discussion, the Problem $\mathbb{P}_1$ can be transformed into the Problem $\mathbb{P}_{2}$.

\textbf{Lemma \ref{lemma_p1top2}} is proven.
\end{proof}

\section{Proof of \textbf{Lemma \ref{lemma_p2top3}}}\label{append_lemma_p2top3}
\begin{proof}
We analyze part of the KKT condition of Problem $\mathbb{P}_{2}$ to facilitate our subsequent discussion. Given the non-negative multipliers $\alpha_n^{(u)}$, $\alpha_{n,m}^{(t)}$, $\alpha_{n,m}^{(a)}$, and $\alpha_{n,m}^{(s)}$ in \textbf{Lemma 2},
the Lagrangian function is given as follows:
\begin{talign}
&L_{\mathbb{P}_{2}}(\bm{x},\bm{\varphi},\bm{\gamma},\bm{\phi},\bm{\rho},\bm{\psi},\bm{T},\bm{\alpha})=\nonumber \\
&- \sum_{n \in \mathcal{N}}\psi_n^{(u)}-\sum_{n \in \mathcal{N}}\sum_{m \in \mathcal{M}} \psi_{n,m}^{(t)} + \psi_{n,m}^{(a)} + \psi_{n,m}^{(s)} \nonumber \\
&+\sum_{n\in \mathcal{N}}\alpha_n^{(u)}\cdot[\psi_n^{(u)}cost_n^{(u)}-c_n^{(u)}\varphi_n^{(u)}d_n] \nonumber \\
&+\sum_{n\in \mathcal{N},m\in \mathcal{M}^{(t)}}\alpha_{n,m}^{(t)}\cdot(\psi_{n,m}^{(t)}cost_{n,m}^{(t)}-c_{n,m}^{(t)}\varphi_n^{(t)}d_n) \nonumber \\
&+\sum_{n\in \mathcal{N},m\in \mathcal{M}^{(a)}}\alpha_{n,m}^{(a)}\cdot(\psi_{n,m}^{(a)}cost_{n,m}^{(a)}-c_{n,m}^{(a)}\varphi_n^{(a)}d_n) \nonumber \\
&+\sum_{n\in \mathcal{N},m\in \mathcal{M}^{(s)}}\alpha_{n,m}^{(s)}\cdot(\psi_{n,m}^{(s)}cost_{n,m}^{(s)}-c_{n,m}^{(s)}\varphi_n^{(s)}d_n) \nonumber \\
&+ \hat{L}_{\mathbb{P}_{2}},
\end{talign}
where $\hat{L}_{\mathbb{P}_{2}}$ is the remaining Lagrangian terms that we don’t
care about. Next, we analyze some stationarity and complementary slackness properties of $L_{\mathbb{P}_{2}}$.
\newline
\textbf{Stationarity:}
\begin{talign}
    &\frac{\partial L_{\mathbb{P}_{2}}}{\partial \psi_n^{(u)}} = -1 + \alpha^{(u)}_n cost_n^{(u)}=0, \forall n \in \mathcal{N},\\
    &\frac{\partial L_{\mathbb{P}_{2}}}{\partial \psi_{n,m}^{(t)}} = -1 + \alpha^{(t)}_{n,m} cost_{n,m}^{(t)}=0, \forall n \in \mathcal{N}, m \in \mathcal{M},\\
    &\frac{\partial L_{\mathbb{P}_{2}}}{\partial \psi_{n,m}^{(a)}} = -1 + \alpha^{(a)}_{n,m} cost_{n,m}^{(a)}=0, \forall n \in \mathcal{N}, m \in \mathcal{M},\\
    &\frac{\partial L_{\mathbb{P}_{2}}}{\partial \psi_{n,m}^{(s)}} = -1 + \alpha^{(s)}_{n,m} cost_{n,m}^{(s)}=0, \forall n \in \mathcal{N}, m \in \mathcal{M}.
\end{talign}
\textbf{Complementary slackness:}
\begin{talign}
    &\alpha_n^{(u)}\cdot[\psi_n^{(u)}cost_n^{(u)}-c_n^{(u)}\varphi_n^{(u)}d_n]=0,\forall n \in \mathcal{N},\\
    &\alpha_{n,m}^{(t)}\cdot(\psi_{n,m}^{(t)}cost_{n,m}^{(t)}-c_{n,m}^{(t)}\varphi_n^{(t)}d_n)=0,\forall n \in \mathcal{N}, m \in \mathcal{M},\\
    &\alpha_{n,m}^{(a)}\cdot(\psi_{n,m}^{(a)}cost_{n,m}^{(a)}-c_{n,m}^{(a)}\varphi_n^{(a)}d_n)=0,\forall n \in \mathcal{N}, m \in \mathcal{M},\\
    &\alpha_{n,m}^{(s)}\cdot(\psi_{n,m}^{(s)}cost_{n,m}^{(s)}-c_{n,m}^{(s)}\varphi_n^{(s)}d_n)=0,\forall n \in \mathcal{N}, m \in \mathcal{M}.
\end{talign}
Therefore, for KKT points of Problem $\mathbb{P}_{2}$, we can obtain the conclusions, i.e., Eq. (\ref{eq_psi_u}), (\ref{eq_psi_tas}), (\ref{eq_alpha_u}), and (\ref{eq_alpha_tas}).
Based on the above discussion, Problem $\mathbb{P}_{2}$ can be transformed into a new Problem $\mathbb{P}_{3}$ \cite{10.1145/3565287.3610271}.

\textbf{Lemma \ref{lemma_p2top3}} is proven.
\end{proof}

\section{Proof of \textbf{Lemma \ref{lemma_p4top5}}}\label{append_lemma_p4top5}
\begin{proof}
In the term $\frac{\text{power}}{\text{transmission data rate}}$ included in ``$cost$'', the ``power'' part is an affine function of $\rho$, and the ``transmission data rate'' part is a joint concave function of $\phi$ and $\rho$. Therefore, this term is actually $\frac{\text{affine function}}{\text{concave function}}$, which is general non-convex and NP-hard. Since there are other polynomial functions in ``$cost$'', the technique proposed in \cite{jong2012efficient} can't be applied in this case. Thanks to the recent findings in \cite{10368052}, we can efficiently transform this ``$cost$'' term into a convex term. We will present how to solve it.

To convexify the expressions $cost_{n,m}^{(t)}$, $cost_{n,m}^{(a)}$, and $cost_{n,m}^{(s)}$, we introduce auxiliary variables $\varrho_{n,m}^{(t)}$, $\varrho_{n,m}^{(a)}$, and $\varrho_{n,m}^{(s)}$ as shown in \textbf{Lemma 3}.
These original cost terms are then reformulated into their convex equivalents $\widetilde{cost}_{n,m}^{(t)}$, $\widetilde{cost}_{n,m}^{(a)}$, and $\widetilde{cost}_{n,m}^{(s)}$, whose expressions are also provided in \mbox{\textbf{Lemma 3}}.
Those new terms are all convex when we fix $\varrho_{n,m}^{(t)}$, $\varrho_{n,m}^{(a)}$, and $\varrho_{n,m}^{(s)}$. Let 
\begin{talign}
&\chi(\rho_n^{(u)}) = \rho_n^{(u)}p_n x_{n,m}^{(t)}[\omega_b(1-\varphi_n^{(u)})d_n+d_n^{(l)}],\\
&\varsigma (\phi_{n,m}^{(t)}, \rho_n^{(u)}) = r_{n,m_t},
\end{talign}
where $r_{n,m_t} = \phi_{n,m}^{(t)}b_{n,m_t} \log_2 (1 + \frac{\rho_n^{(u)} p_n g_{n,m_t}}{\sigma^2 \phi_{n,m}^{(t)}b_{n,m_t}})$. It's easy to know that $\chi(\rho_n^{(u)})$ is convex of $\rho_n^{(u)}$ and $\varsigma (\phi_{n,m}^{(t)}, \rho_n^{(u)})$ is jointly concave of $(\phi_{n,m}^{(t)}, \rho_n^{(u)})$. Let
\begin{talign}
&\mathcal{F}(\phi_{n,m}^{(t)}, \rho_n^{(u)}, \gamma_{n,m}^{(t)}, \psi_n^{(t)}, T_{n,m}^{(t)})  \nonumber \\
&= \omega_t T_{n,m}^{(t)} \nonumber \\
&+ \omega_e (\frac{\chi(\rho_n^{(u)})}{\varsigma (\phi_{n,m}^{(t)}, \rho_n^{(u)})} +x_{n,m}^{(t)}e_{m_t}\kappa_{m_t}t_n\varphi_n^{(t)}d_n (\gamma_{n,m}^{(t)}f_{m_t})^2).
\end{talign}
Let 
\begin{talign}
&\mathcal{G}(\phi_{n,m}^{(t)}, \rho_n^{(u)}, \gamma_{n,m}^{(t)}, \psi_n^{(t)}, T_{n,m}^{(t)})  \nonumber \\
&= \omega_t T_{n,m}^{(t)} + \omega_e \{\chi(\rho_n^{(u)})^2\varrho_{n,m}^{(t)} + \frac{1}{4\varsigma (\phi_{n,m}^{(t)}, \rho_n^{(u)})^2\varrho_{n,m}^{(t)}}\nonumber \\
&+ x_{n,m}^{(t)}e_{m_t}\kappa_{m_t}t_n \varphi_n^{(t)}d_n(\gamma_{n,m}^{(t)}f_{m_t})^2\}.
\end{talign}
The partial derivative of $T_{n,m}^{(t)}$ is
\begin{talign}
\frac{\partial(\mathcal{F}(\phi_{n,m}^{(t)}, \rho_n^{(u)}, \gamma_{n,m}^{(t)}, \psi_n^{(t)}, T_{n,m}^{(t)}))}{\partial T^{(t)}} =\omega_t, \\
\frac{\partial(\mathcal{G}(\phi_{n,m}^{(t)}, \rho_n^{(u)}, \gamma_{n,m}^{(t)}, \psi_n^{(t)}, T_{n,m}^{(t)}))}{\partial T^{(t)}} =\omega_t.
\end{talign}
The partial derivative of $\gamma_{n,m}^{(t)}$ is given as
\begin{talign}
&\frac{\partial(\mathcal{F}(\phi_{n,m}^{(t)}, \rho_n^{(u)}, \gamma_{n,m}^{(t)}, \psi_n^{(t)}, T_{n,m}^{(t)}))}{\partial \gamma_{n,m}^{(t)}} \nonumber\\
&= 2\omega_ex_{n,m}^{(t)}e_{m_t}\kappa_{m_t}t_n \varphi_n^{(t)}d_n f_{m_t}^2\gamma_{n,m}^{(t)},\\
&\frac{\partial(\mathcal{G}(\phi_{n,m}^{(t)}, \rho_n^{(u)}, \gamma_{n,m}^{(t)}, \psi_n^{(t)}, T_{n,m}^{(t)}))}{\partial \gamma_{n,m}^{(t)}} \nonumber \\
&= 2\omega_e x_{n,m}^{(t)}e_{m_t}\kappa_{m_t}t_n \varphi_n^{(t)}d_n f_{m_t}^2\gamma_{n,m}^{(t)}.
\end{talign}
We get the partial derivative of $\phi_{n,m}^{(t)}$ as
\begin{talign}
&\frac{\partial(\mathcal{F}(\phi_{n,m}^{(t)}, \rho_n^{(u)}, \gamma_{n,m}^{(t)}, \psi_n^{(t)}, T_{n,m}^{(t)}))}{\partial \phi_{n,m}^{(t)}} = -\frac{\omega_e\chi(\rho_n^{(u)})}{\varsigma (\phi_{n,m}^{(t)}, \rho_n^{(u)})^2}\frac{\partial\varsigma (\phi_{n,m}^{(t)}, \rho_n^{(u)})}{\partial \phi_{n,m}^{(t)}},
\end{talign}
\begin{talign}
&\frac{\partial(\mathcal{G}(\phi_{n,m}^{(t)}, \rho_n^{(u)}, \gamma_{n,m}^{(t)}, \psi_n^{(t)}, T_{n,m}^{(t)}))}{\partial \phi_{n,m}^{(t)}}  \nonumber \\
&=-\frac{\omega_e}{2\varrho_{n,m_t}^{(t)}\varsigma (\phi_{n,m}^{(t)}, \rho_n^{(u)})^3}\frac{\partial \varsigma (\phi_{n,m}^{(t)}, \rho_n^{(u)})}{\partial \phi_{n,m}^{(t)}}.
\end{talign}
When $\varrho_{n,m_t}^{(t)} = \frac{1}{2\chi(\rho_n^{(u)}) \varsigma (\phi_{n,m}^{(t)}, \rho_n^{(u)})}$, we can obtain the following conclusion that
\begin{talign}
&\frac{\partial(\mathcal{F}(\phi_{n,m}^{(t)}, \rho_n^{(u)}, \gamma_{n,m}^{(t)}, \psi_n^{(t)}, T_{n,m}^{(t)}))}{\partial \phi_{n,m}^{(t)}} \nonumber \\
&= \frac{\partial(\mathcal{G}(\phi_{n,m}^{(t)}, \rho_n^{(u)}, \gamma_{n,m}^{(t)}, \psi_n^{(t)}, T_{n,m}^{(t)}))}{\partial \phi_{n,m}^{(t)}}.
\end{talign}
The partial derivative of $\rho_n^{(u)}$ is
\begin{talign}
    &\frac{\partial(\mathcal{F}(\phi_{n,m}^{(t)}, \rho_n^{(u)}, \gamma_{n,m}^{(t)}, \psi_n^{(t)}, T_{n,m}^{(t)}))}{\partial \rho_n^{(u)}} \nonumber \\
    &= \omega_e \frac{\frac{\partial \chi(\rho_n^{(u)})}{\partial \rho_n^{(u)}}\varsigma (\phi_{n,m}^{(t)}, \rho_n^{(u)}) - \chi(\rho_n^{(u)})\frac{\partial \varsigma (\phi_{n,m}^{(t)}, \rho_n^{(u)})}{\partial \rho_n^{(u)}}}{\varsigma (\phi_{n,m}^{(t)}, \rho_n^{(u)})^2},\\
    &\frac{\partial(\mathcal{G}(\phi_{n,m}^{(t)}, \rho_n^{(u)}, \gamma_{n,m}^{(t)}, \psi_n^{(t)}, T_{n,m}^{(t)}))}{\partial \rho_n^{(u)}} \nonumber \\
    &= \omega_e (2\varrho_{n,m_t}^{(t)}\chi(\rho_n^{(u)})\frac{\partial \chi(\rho_n^{(u)})}{\partial \rho_n^{(u)}} \nonumber \\
    &\quad - \frac{1}{2\varrho_{n,m_t}^{(t)} \varsigma (\phi_{n,m}^{(t)}, \rho_n^{(u)})^3} \frac{\partial \varsigma (\phi_{n,m}^{(t)}, \rho_n^{(u)})}{\partial \rho_n^{(u)}}).
\end{talign}
When $\varrho_{n,m}^{(t)} = \frac{1}{2\chi(\rho_n^{(u)}) \varsigma (\phi_{n,m}^{(t)}, \rho_n^{(u)})}$, we know
\begin{talign}
&\frac{\partial(\mathcal{F}(\phi_{n,m}^{(t)}, \rho_n^{(u)}, \gamma_{n,m}^{(t)}, \psi_n^{(t)}, T_{n,m}^{(t)}))}{\partial \rho_n^{(u)}}\nonumber \\
&= \frac{\partial(\mathcal{G}(\phi_{n,m}^{(t)}, \rho_n^{(u)}, \gamma_{n,m}^{(t)}, \psi_n^{(t)}, T_{n,m}^{(t)}))}{\partial \rho_n^{(u)}}.
\end{talign}
Based on the above discussion, we can obtain that
\begin{talign}
&\frac{\partial(\mathcal{F}(\phi_{n,m}^{(t)},\rho_n^{(u)},\gamma_{n,m}^{(t)},\psi_n^{(t)},T_{n,m}^{(t)}))}{\partial (\phi_{n,m}^{(t)},\rho_n^{(u)},\gamma_{n,m}^{(t)},\psi_n^{(t)},T_{n,m}^{(t)})} \nonumber \\
&= \frac{\partial(\mathcal{G}(\phi_{n,m}^{(t)},\rho_n^{(u)},\gamma_{n,m}^{(t)},\psi_n^{(t)},T_{n,m}^{(t)}))}{\partial (\phi_{n,m}^{(t)},\rho_n^{(u)},\gamma_{n,m}^{(t)},\psi_n^{(t)},T_{n,m}^{(t)})},
\end{talign}
\begin{talign}
&\mathcal{F}(\phi_{n,m}^{(t)},\rho_n^{(u)},\gamma_{n,m}^{(t)},\psi_n^{(t)},T_{n,m}^{(t)}) \nonumber \\
&= \mathcal{G}(\phi_{n,m}^{(t)},\rho_n^{(u)},\gamma_{n,m}^{(t)},\psi_n^{(t)},T_{n,m}^{(t)}).
\end{talign}
The equivalence of the remaining two pairs of terms, $cost_{n,m}^{(a)}$ and 
$\widetilde{cost}_{n,m}^{(a)}$, $cost_{n,m}^{(s)}$ and 
$\widetilde{cost}_{n,m}^{(s)}$, can be proved by the same steps, which are not detailed here. Let $\bm{\varrho}:=\{\varrho_{n,m}^{(t)}$, $\varrho_{n,m}^{(a)}$, and $\varrho_{n,m}^{(s)}\}$. Based on the above discussion, the Problem $\mathbb{P}_{4}$ is equivalent to the Problem $\mathbb{P}_{5}$.

\textbf{Lemma \ref{lemma_p4top5}} is proven.
\end{proof}

\section{Proof of \textbf{Lemma \ref{lemma_p7top8}}}\label{append_lemma_p7top8}
\begin{proof}
To transform Problem $\mathbb{P}_{7}$ to Problem $\mathbb{P}_{8}$, we first analyze the term $\alpha_{n,m}^{(t)}(c_{n,m}^{(t)}\varphi_n^{(t)}d_n - \psi_{n,m}^{(t)}cost_{n,m}^{(t)})$ as follow:
\begin{talign}
&\alpha_{n,m}^{(t)}(c_{n,m}^{(t)}\varphi_n^{(t)}d_n - \psi_{n,m}^{(t)}cost_{n,m}^{(t)}) \nonumber \\
&= \alpha_{n,m}^{(t)}\{c_{n,m}^{(t)}\varphi_n^{(t)}d_n - \psi_{n,m}^{(t)}[\omega_t T_{n,m}^{(t)} + \omega_e (\frac{[\omega_b(1-\varphi_n^{(u)})d_n + d_n^{(l)}]}{r_{n,m_t}}  \nonumber \\
&\cdot \rho_n^{(u)}p_n x_{n,m}^{(t)}+x_{n,m}^{(t)}e_{m_t}\kappa_{m_t}t_n\varphi_n^{(t)}d_n(\gamma_{n,m}^{(t)})^2f_{m_t}^2)]\}\nonumber \\
&=\alpha_{n,m}^{(t)}c_{n,m}^{(t)}\varphi_n^{(t)}d_n - \alpha_{n,m}^{(t)}\psi_{n,m}^{(t)}\omega_t T_{n,m}^{(t)} - \alpha_{n,m}^{(t)}\varphi_n^{(t)} \omega_e \rho_n^{(u)}p_n \nonumber \\
&\cdot x_{n,m}^{(t)}\frac{[\omega_b(1-\varphi_n^{(u)})d_n+d_n^{(l)}]}{r_{n,m_t}} - \alpha_{n,m}^{(t)}\varphi_n^{(t)} \omega_e x_{n,m}^{(t)}e_{m_t}\kappa_{m_t}t_n\varphi_n^{(t)}d_n\nonumber \\
&\cdot (\gamma_{n,m}^{(t)})^2 f_{m_t}^2\nonumber \\
&=- \alpha_{n,m}^{(t)}\psi_{n,m}^{(t)}\omega_t T_{n,m}^{(t)} + \alpha_{n,m}^{(t)}d_n c_{n,m}^{(t)}\varphi_n^{(t)} \!-\!\big( \alpha_{n,m}^{(t)}\psi_{n,m}^{(t)} \omega_e \rho_n^{(u)}\nonumber \\
&\cdot\frac{p_n(\omega_b d_n+d_n^{(l)})}{r_{n,m_t}} 
\big) x_{n,m}^{(t)}\!+\! \alpha_{n,m}^{(t)} \psi_{n,m}^{(t)}\omega_e \rho_n^{(u)}p_n \frac{\omega_b d_n}{r_{n,m_t}} x_{n,m}^{(t)}\varphi_n^{(u)}\nonumber \\
&-\alpha_{n,m}^{(t)}\psi_{n,m}^{(t)} \omega_e e_{m_t}\kappa_{m_t} t_n d_n(\gamma_{n,m}^{(t)})^2 f_{m_t}^2 x_{n,m}^{(t)} \varphi_n^{(t)}.
\end{talign}
To make the expression clearer and easier to understand, we define the following auxiliary variables:
\begin{talign}
&A_{n,m}^{(tt)} := -\alpha_{n,m}^{(t)}\psi_{n,m}^{(t)}\omega_e e_{m_t}\kappa_{m_t} t_n d_n(\gamma_{n,m}^{(t)})^2f_{m_t}^2,\\
&A_{n,m}^{(tu)}:=\alpha_{n,m}^{(t)}\psi_{n,m}^{(t)} \omega_e \rho_n^{(u)}p_n \frac{\omega_b d_n}{r_{n,m_t}},\\
&B_{n,m}^{(t)}:=- \alpha_{n,m}^{(t)}\psi_{n,m}^{(t)} \omega_e \rho_n^{(u)}p_n\frac{\omega_b d_n+d_n^{(l)}}{r_{n,m_t}},\\ 
&C_{n,m}^{(t)}:=\alpha_{n,m}^{(t)}c_{n,m}^{(t)}d_n,\\
&D_{n,m}^{(t)}:=- \alpha_{n,m}^{(t)}\psi_{n,m}^{(t)}\omega_t.
\end{talign}
Based on the predefined auxiliary variables, we can rewrite the term $\alpha_{n,m}^{(t)}(c_{n,m}^{(t)}\varphi_n^{(t)}d_n - \psi_{n,m}^{(t)}cost_{n,m}^{(t)})$ more clearly as
\begin{talign}
&\alpha_{n,m}^{(t)}(c_{n,m}^{(t)}\varphi_n^{(t)}d_n - \psi_{n,m}^{(t)}cost_{n,m}^{(t)})\nonumber \\
&=A_{n,m}^{(tt)}x_{n,m}^{(t)} \varphi_n^{(t)} + A_{n,m}^{(tu)} x_{n,m}^{(t)}\varphi_n^{(u)} + B_{n,m}^{(t)}x_{n,m}^{(t)} + C_{n,m}^{(t)}\varphi_n^{(t)} \nonumber \\
&+ D_{n,m}^{(t)} T_{n,m}^{(t)}.\\
&\alpha_{n,m}^{(a)}(c_{n,m}^{(a)}\varphi_n^{(a)}d_n - \psi_{n,m}^{(a)}cost_{n,m}^{(a)}) \nonumber \\
&=- \alpha_{n,m}^{(a)} \psi_{n,m}^{(a)} \omega_t T_{n,m}^{(a)} + \alpha_{n,m}^{(a)}c_{n,m}^{(a)}d_n \varphi_n^{(a)}  - \alpha_{n,m}^{(a)} \psi_{n,m}^{(a)} \omega_e \nonumber \\
&\cdot \!\rho_{n,m}^{(t)}p_{m_t}\frac{\omega_b d_n+d_n^{(l)}}{r_{m_t,m_a}} 
x_{n,m}^{(a)} +\! \alpha_{n,m}^{(a)} \psi_{n,m}^{(a)} \omega_e \rho_{n,m}^{(t)}p_{m_t} \frac{\omega_b d_n}{r_{m_t,m_a}}x_{n,m}^{(a)} \nonumber \\
&\cdot \varphi_n^{(u)} \!+\! \alpha_{n,m}^{(a)} \psi_{n,m}^{(a)} \omega_e \rho_{n,m}^{(t)} p_{m_t} \frac{\omega_b d_n}{r_{m_t,m_a}}x_{n,m}^{(a)} \varphi_n^{(t)}\! -\! \alpha_{n,m}^{(a)} \psi_{n,m}^{(a)} \omega_e \nonumber \\
&\cdot \kappa_{m_a} e_{m_a}t_n d_n(\gamma_{n,m}^{(a)})^2f_{m_a}^2 x_{n,m}^{(a)}\varphi_n^{(a)}.
\end{talign}
For the term $\alpha_{n,m}^{(a)}(c_{n,m}^{(a)}\varphi_n^{(a)}d_n - \psi_{n,m}^{(a)}cost_{n,m}^{(a)})$, we also define the following auxiliary variables:
\begin{talign}
&A_{n,m}^{(aa)}:=- \alpha_{n,m}^{(a)} \psi_{n,m}^{(a)} \omega_e e_{m_a}\kappa_{m_a}t_n d_n(\gamma_{n,m}^{(a)})^2f_{m_a}^2,\\
&A_{n,m}^{(au)}:=\alpha_{n,m}^{(a)} \psi_{n,m}^{(a)} \omega_e \rho_{n,m}^{(t)}p_{m_t} \frac{\omega_b d_n}{r_{m_t,m_a}},\\
&A_{n,m}^{(at)}:=\alpha_{n,m}^{(a)} \psi_{n,m}^{(a)} \omega_e \rho_{n,m}^{(t)}p_{m_t} \frac{\omega_b d_n}{r_{m_t,m_a}},\\
&B_{n,m}^{(a)}:=- \alpha_{n,m}^{(a)} \psi_{n,m}^{(a)} \omega_e \rho_{n,m}^{(t)}p_{m_t}\frac{\omega_b d_n+d_n^{(l)}}{r_{m_t,m_a}},\\
&C_{n,m}^{(a)}:= \alpha_{n,m}^{(a)}c_{n,m}^{(a)}d_n,\\
&D_{n,m}^{(a)}:=- \alpha_{n,m}^{(a)} \psi_{n,m}^{(a)} \omega_t.
\end{talign}
Therefore, the term $\alpha_{n,m}^{(a)}(c_{n,m}^{(a)}\varphi_n^{(a)}d_n - \psi_{n,m}^{(a)}cost_{n,m}^{(a)})$ can be rewrite as
\begin{talign}
&\alpha_{n,m}^{(a)}(c_{n,m}^{(a)}\varphi_n^{(a)}d_n - \psi_{n,m}^{(a)}cost_{n,m}^{(a)})\nonumber \\
&=A_{n,m}^{(aa)} x_{n,m}^{(a)}\varphi_n^{(a)} + A_{n,m}^{(au)} x_{n,m}^{(a)} \varphi_n^{(u)} + A_{n,m}^{(at)}x_{n,m}^{(a)} \varphi_n^{(t)} \nonumber \\
&+ B_{n,m}^{(a)} x_{n,m}^{(a)} + C_{n,m}^{(a)} \varphi_n^{(a)} + D_{n,m}^{(a)} T_{n,m}^{(a)}.
\end{talign}

Let's analyze the term $\alpha_{n,m}^{(s)}(c_{n,m}^{(s)}\varphi_n^{(s)}d_n - \psi_{n,m}^{(s)}cost_{n,m}^{(s)})$ by plugging in the expression of $cost_{n,m}^{(s)}$:
\begin{talign}
&\alpha_{n,m}^{(s)}(c_{n,m}^{(s)}\varphi_n^{(s)}d_n - \psi_{n,m}^{(s)}cost_{n,m}^{(s)}) \nonumber \\
&=- \alpha_{n,m}^{(s)} \psi_{n,m}^{(s)} \omega_t T_{n,m}^{(s)} + \alpha_{n,m}^{(s)}c_{n,m}^{(s)} d_n \varphi_n^{(s)}  - \alpha_{n,m}^{(s)} \psi_{n,m}^{(s)}\omega_e \nonumber \\
&\cdot \rho_{n,m}^{(a)}p_{m_a}\frac{\omega_b d_n+ d_n^{(l)}}{r_{m_a,m_s}} 
x_{n,m}^{(s)} + \alpha_{n,m}^{(s)} \psi_{n,m}^{(s)}\omega_e \rho_{n,m}^{(a)} p_{m_a} \frac{\omega_b d_n}{r_{m_a,m_s}}\nonumber \\
&\cdot x_{n,m}^{(s)} \varphi_n^{(u)} \!\!+\! \alpha_{n,m}^{(s)} \psi_{n,m}^{(s)}\omega_e \rho_{n,m}^{(a)} \frac{p_{m_a} \omega_b d_n}{r_{m_a,m_s}} x_{n,m}^{(s)} \varphi_n^{(t)}\!+\! \alpha_{n,m}^{(s)} \psi_{n,m}^{(s)}\nonumber \\
& \cdot\omega_e  \frac{\rho_{n,m}^{(a)} p_{m_a} \omega_b d_n}{r_{m_a,m_s}} x_{n,m}^{(s)} \varphi_n^{(a)}\!-\!\alpha_{n,m}^{(s)} \psi_{n,m}^{(s)} \omega_e e_{m_s}\kappa_{m_s}t_n d_n f_{m_t}^2\nonumber \\
&\cdot (\gamma_{n,m}^{(s)})^2 x_{n,m}^{(s)} \varphi_n^{(s)}.
\end{talign}
We also define the following auxiliary variables to make the above expression clearer:
\begin{talign}
&A_{n,m}^{(ss)}:=-\alpha_{n,m}^{(s)} \psi_{n,m}^{(s)}\omega_e e_{m_s}t_n\kappa_{m_s}d_n(\gamma_{n,m}^{(s)})^2f_{m_t}^2,\\
&A_{n,m}^{(su)}:=\alpha_{n,m}^{(s)} \psi_{n,m}^{(s)}\omega_e \rho_{n,m}^{(a)} p_{m_a} \frac{\omega_b d_n}{r_{m_a,m_s}},\\
&A_{n,m}^{(st)}:=\alpha_{n,m}^{(s)} \psi_{n,m}^{(s)}\omega_e \rho_{n,m}^{(a)} p_{m_a} \frac{\omega_b d_n}{r_{m_a,m_s}},\\
&A_{n,m}^{(sa)}:=\alpha_{n,m}^{(s)} \psi_{n,m}^{(s)}\omega_e \rho_{n,m}^{(a)}p_{m_a} \frac{\omega_b d_n}{r_{m_a,m_s}},\\
&B_{n,m}^{(s)}:=-\alpha_{n,m}^{(s)} \psi_{n,m}^{(s)}\omega_e \rho_{n,m}^{(a)} p_{m_a}\frac{\omega_b d_n+ d_n^{(l)}}{r_{m_a,m_s}},\\
&C_{n,m}^{(s)}:=\alpha_{n,m}^{(s)} c_{n,m}^{(s)}d_n,\\
&D_{n,m}^{(s)}:=- \alpha_{n,m}^{(s)} \psi_{n,m}^{(s)} \omega_t.
\end{talign}
With the defined auxiliary variables, we can rewrite the term $\alpha_{n,m}^{(s)}(c_{n,m}^{(s)}\varphi_n^{(s)}d_n - \psi_{n,m}^{(s)}cost_{n,m}^{(s)})$ as
\begin{talign}
&\alpha_{n,m}^{(s)}(c_{n,m}^{(s)}\varphi_n^{(s)}d_n - \psi_{n,m}^{(s)}cost_{n,m}^{(s)})\nonumber \\
&=A_{n,m}^{(ss)}x_{n,m}^{(s)}\varphi_n^{(s)} + A_{n,m}^{(su)} x_{n,m}^{(s)} \varphi_n^{(u)} + A_{n,m}^{(st)} x_{n,m}^{(s)} \varphi_n^{(t)} \nonumber\\
&+ A_{n,m}^{(sa)} x_{n,m}^{(s)} \varphi_n^{(a)} + B_{n,m}^{(s)} x_{n,m}^{(s)} + C_{n,m}^{(s)} \varphi_n^{(s)} + D_{n,m}^{(s)} T_{n,m}^{(s)}.
\end{talign}
For the term $\alpha_n^{(u)}(c_n^{(u)}\varphi_n^{(u)}d_n - \psi_n^{(u)}cost_n^{(u)})$,
\begin{talign}
&\alpha_n^{(u)}(c_n^{(u)}\varphi_n^{(u)}d_n - \psi_n^{(u)}cost_n^{(u)})\nonumber \\
&=- \alpha_n^{(u)}\psi_n^{(u)}\omega_t T_{n}^{(u)} + [\alpha_n^{(u)}c_n^{(u)}d_n \!-\! \alpha_n^{(u)}\psi_n^{(u)}\omega_e e_n \kappa_n t_n d_n\nonumber \\
&\cdot(\gamma_n^{(u)})^2 f_n^2]\varphi_n^{(u)},
\end{talign}
we define the following auxiliary variables:
\begin{talign}
&C_{n}^{(u)}:=\alpha_n^{(u)}c_n^{(u)}d_n - \alpha_n^{(u)}\psi_n^{(u)}e_n \omega_e \kappa_n t_n d_n(\gamma_n^{(u)})^2 f_n^2,\\
&D_{n}^{(u)}:=- \alpha_n^{(u)}\psi_n^{(u)}\omega_t.
\end{talign}
Therefore, we can know that 
\begin{talign}
&\alpha_n^{(u)}(\varphi_n^{(u)}c_n^{(ut)}d_n - \psi_n^{(u)}cost_n^{(u)})\nonumber \\
&=C_{n}^{(u)}\varphi_n^{(u)} + D_{n}^{(u)} T_{n}^{(u)}.
\end{talign}

Based on the above discussion, the objective function of Problem $\mathbb{P}_{7}$ can be rewritten as
\begin{talign}
&\sum_{n \in \mathcal{N}}\sum_{m \in \mathcal{M}} \alpha_{n,m}^{(t)}(c_{n,m}^{(t)}\varphi_n^{(t)}d_n - \psi_{n,m}^{(t)}cost_{n,m}^{(t)}) +  \alpha_{n,m}^{(a)} \nonumber \\
&\cdot (c_{n,m}^{(a)}\varphi_n^{(a)}d_n- \!\psi_{n,m}^{(a)}cost_{n,m}^{(a)}) \!+\!  \alpha_{n,m}^{(s)}(c_{n,m}^{(s)}\varphi_n^{(s)}d_n \!-\! \psi_{n,m}^{(s)}\nonumber \\
&\cdot cost_{n,m}^{(s)})\!+\! \sum_{n \in \mathcal{N}} \alpha_n^{(u)}(c_n^{(u)}\varphi_n^{(u)}d_n - \psi_n^{(u)}cost_n^{(u)})\nonumber\\
&=\sum_{n \in \mathcal{N}}\sum_{m \in \mathcal{M}} A_{n,m}^{(tt)}x_{n,m}^{(t)} \varphi_n^{(t)} + A_{n,m}^{(tu)} x_{n,m}^{(t)}\varphi_n^{(u)} + B_{n,m}^{(t)} \nonumber\\
&\cdot x_{n,m}^{(t)}+ C_{n,m}^{(t)}\varphi_n^{(t)} + D_{n,m}^{(t)} T_{n,m}^{(t)} + A_{n,m}^{(aa)} x_{n,m}^{(a)}\varphi_n^{(a)} + A_{n,m}^{(au)}  \nonumber\\
&\cdot x_{n,m}^{(a)} \varphi_n^{(u)}\!+\! A_{n,m}^{(at)}x_{n,m}^{(a)} \varphi_n^{(t)} \!+\! B_{n,m}^{(a)} x_{n,m}^{(a)} \!+\! C_{n,m}^{(a)} \varphi_n^{(a)} \!+\! D_{n,m}^{(a)}  \nonumber \\
&\cdot T_{n,m}^{(a)}+ A_{n,m}^{(ss)}x_{n,m}^{(s)}\varphi_n^{(s)} + A_{n,m}^{(su)} x_{n,m}^{(s)} \varphi_n^{(u)} + A_{n,m}^{(st)} x_{n,m}^{(s)} \varphi_n^{(t)} \nonumber \\
&+ A_{n,m}^{(sa)} x_{n,m}^{(s)} \varphi_n^{(a)} + B_{n,m}^{(s)} x_{n,m}^{(s)} + C_{n,m}^{(s)} \varphi_n^{(s)} + D_{n,m}^{(s)} T_{n,m}^{(s)} \nonumber \\
&+ \sum_{n \in \mathcal{N}} C_{n}^{(u)} \varphi_n^{(u)} + D_{n}^{(u)} T_{n}^{(u)}.
\end{talign}
It's clear that Problem $\mathbb{P}_{7}$  is a quadratically constrained quadratic programming (QCQP) problem. To combine $\bm{\varphi}$ and $\bm{x}$, we define a new matrix 
\begin{talign}
\bm{Q}:=&[(\bm{\varphi^{(u)}})^\intercal,(\bm{\varphi^{(t)}})^\intercal,(\bm{\varphi^{(a)}})^\intercal,(\bm{\varphi^{(s)}})^\intercal,(\bm{x^{(t)}})^\intercal,(\bm{x^{(a)}})^\intercal,\nonumber \\
&(\bm{x^{(s)}})^\intercal]^\intercal,
\end{talign}
where $\bm{\varphi^{(i)}}=(\varphi_1,\cdots,\varphi_N)^\intercal$, for $i \in \{u,t,a,s\}$, and $\bm{x^{(i)}}=(x^{(i)}_{1,m^{(i)}},\cdots,x^{(i)}_{N,m^{(i)}},\cdots,\cdots,x^{(i)}_{N,M^{(i)}})$, for $j \in \{t,a,s\}$. We define some auxiliary matrices and vectors to aid in our transformation. Let
\begin{talign}
&e_i:=(0,\cdots,1_{i\text{-th}},\cdots,0)_{NM+4N\times1}^\intercal,\\
&\bm{e}_{i,j}:=(e_i,\cdots,e_j)^\intercal,\\
&e_{\overline{i_k}}\!\nonumber \\
&:=\!(0,\!\cdots\!,\!1_{i\text{-th}},\cdots,1_{(i+N)\text{-th}},\!\cdots\!,1_{[i+N(M^{(k)}-1)]\text{-th}},\cdots,0)^\intercal,\nonumber \\
&\hspace{150pt}k \in \{t,a,s\},\\
&\bm{e}_{\overline{i},\overline{j}}:=(e_{\overline{i}},\cdots,e_{\overline{j}})^\intercal,\\
&e_{i\rightarrow j}:=(0,\cdots,1_{i\text{-th}},1,\cdots,1_{j\text{-th}},0,\cdots,0)^\intercal, i<j,\\
&\bm{e}_{M^{(t)}}:=\bm{e}_{4N+1,4N+NM^{(t)}},\\
&\bm{e}_{M^{(a)}}:=\bm{e}_{4N+NM^{(t)}+1,4N+NM^{(t)}+NM^{(a)}},\\
&\bm{e}_{M^{(s)}}:=\bm{e}_{4N+NM^{(t)}+NM^{(a)}+1,4N+NM^{(t)}+NM^{(a)}+NM^{(s)}},\\
&\bm{e}_{\varphi_u}:=\bm{e}_{1,N},\\
&\bm{e}_{\varphi_t}:=\bm{e}_{N+1,2N},\\
&\bm{e}_{\varphi_a}:=\bm{e}_{2N+1,3N},\\
&\bm{e}_{\varphi_s}:=\bm{e}_{3N+1,4N},\\
&\bm{I}_{N\rightarrow NM}:=(\bm{I}_{N},\cdots,\bm{I}_{N})_{N\times NM}.
\end{talign}
We define variables $T^{(u)}$, $T^{(t)}$, $T^{(a)}$ and $T^{(s)}$ as
\begin{talign}
&\sum_{n\in \mathcal{N}} D_{n}^{(u)} T_{n}^{(u)} = T^{(u)},\\
&\sum_{n\in \mathcal{N},m \in \mathcal{M}} D_{n,m}^{(t)} T_{n,m}^{(t)} = T^{(t)},\\
&\sum_{n\in \mathcal{N},m \in \mathcal{M}} D_{n,m}^{(a)} T_{n,m}^{(a)} = T^{(a)},\\
&\sum_{n\in \mathcal{N},m \in \mathcal{M}} D_{n,m}^{(s)} T_{n,m}^{(s)} = T^{(s)}.
\end{talign}
Next, we define the following matrices:
\begin{talign}
\bm{A^{(tt)}}:=[A_{n,m}^{(tt)}]|_{n \in \mathcal{N},m \in \mathcal{M}}.
\end{talign}
Similarly, we define other matrices $\bm{A^{(tu)}},\bm{B^{(t)}},\cdots$. We can obtain that
\begin{talign}
&\sum_{n \in \mathcal{N}}\sum_{m \in \mathcal{M}}A_{n,m}^{(tt)}x_{n,m}^{(t)} \varphi_n^{(t)}\nonumber \\
&=\bm{Q}^\intercal(\bm{0}_{N\times N},\bm{I}_{N},\bm{0}_{N\times 2N+NM})^\intercal \bm{I}_{N\rightarrow NM^{(t)}}\text{diag}(\bm{A^{(tt)}})\nonumber \\
&\cdot \bm{e}_{M^{(t)}}\bm{Q},
\end{talign}
\begin{talign}
&\sum_{n \in \mathcal{N}}\sum_{m \in \mathcal{M}}A_{n,m}^{(tu)}x_{n,m}^{(t)} \varphi_n^{(u)}\nonumber \\
&=\bm{Q}^\intercal(\bm{I}_{N},\bm{0}_{N\times 3N+NM})^\intercal \bm{I}_{N\rightarrow NM^{(t)}}\text{diag}(\bm{A^{(tu)}}) \bm{e}_{M^{(t)}}\bm{Q},
\end{talign}
\begin{talign}
&\sum_{n \in \mathcal{N}}\sum_{m \in \mathcal{M}}A_{n,m}^{(aa)}x_{n,m}^{(a)} \varphi_n^{(a)}\nonumber \\
&=\bm{Q}^\intercal(\bm{0}_{N\times 2N},\bm{I}_{N},\bm{0}_{N\times N+NM})^\intercal \bm{I}_{N\rightarrow NM^{(a)}}\text{diag}(\bm{A^{(aa)}})\nonumber \\
&\cdot \bm{e}_{M^{(a)}}\bm{Q},
\end{talign}
\begin{talign}
&\sum_{n \in \mathcal{N}}\sum_{m \in \mathcal{M}}A_{n,m}^{(au)}x_{n,m}^{(a)} \varphi_n^{(u)}\nonumber \\
&=\bm{Q}^\intercal(\bm{I}_{N},\bm{0}_{N\times 3N+NM})^\intercal \bm{I}_{N\rightarrow NM^{(a)}}\text{diag}(\bm{A^{(au)}})\bm{e}_{M^{(a)}}\bm{Q},
\end{talign}
\begin{talign}
&\sum_{n \in \mathcal{N}}\sum_{m \in \mathcal{M}}A_{n,m}^{(at)}x_{n,m}^{(a)} \varphi_n^{(t)}\nonumber \\
&=\bm{Q}^\intercal(\bm{0}_{N\times N},\bm{I}_{N},\bm{0}_{N\times 2N+NM})^\intercal \bm{I}_{N\rightarrow NM^{(a)}}\text{diag}(\bm{A^{(at)}})\nonumber \\
&\cdot \bm{e}_{M^{(a)}}\bm{Q},
\end{talign}
\begin{talign}
&\sum_{n \in \mathcal{N}}\sum_{m \in \mathcal{M}}A_{n,m}^{(ss)}x_{n,m}^{(s)} \varphi_n^{(s)}\nonumber \\
&=\bm{Q}^\intercal(\bm{0}_{N\times 3N},\bm{I}_{N},\bm{0}_{N\times NM})^\intercal \bm{I}_{N\rightarrow NM^{(s)}}\text{diag}(\bm{A^{(ss)}})\nonumber \\
&\cdot \bm{e}_{M^{(s)}}\bm{Q},
\end{talign}
\begin{talign}
&\sum_{n \in \mathcal{N}}\sum_{m \in \mathcal{M}}A_{n,m}^{(su)}x_{n,m}^{(s)} \varphi_n^{(u)}\nonumber \\
&=\bm{Q}^\intercal(\bm{I}_{N},\bm{0}_{N\times 3N+NM})^\intercal \bm{I}_{N\rightarrow NM^{(s)}}\text{diag}(\bm{A^{(su)}}) \bm{e}_{M^{(s)}}\bm{Q},
\end{talign}
\begin{talign}
&\sum_{n \in \mathcal{N}}\sum_{m \in \mathcal{M}}A_{n,m}^{(st)}x_{n,m}^{(s)} \varphi_n^{(t)}\nonumber \\
&=\bm{Q}^\intercal(\bm{0}_{N\times N},\bm{I}_{N},\bm{0}_{N\times 2N+NM})^\intercal \bm{I}_{N\rightarrow NM^{(s)}}\text{diag}(\bm{A^{(st)}})\nonumber \\
&\cdot \bm{e}_{M^{(s)}}\bm{Q},
\end{talign}
\begin{talign}
&\sum_{n \in \mathcal{N}}\sum_{m \in \mathcal{M}}A_{n,m}^{(sa)}x_{n,m}^{(s)} \varphi_n^{(a)}\nonumber \\
&=\bm{Q}^\intercal(\bm{0}_{N\times 2N},\bm{I}_{N},\bm{0}_{N\times N+NM})^\intercal \bm{I}_{N\rightarrow NM^{(s)}}\text{diag}(\bm{A^{(sa)}})\nonumber \\
&\cdot \bm{e}_{M^{(s)}}\bm{Q},
\end{talign}
\begin{talign}
&\sum_{n \in \mathcal{N}}\sum_{m \in \mathcal{M}} B_{n,m}^{(t)} x_{n,m}^{(t)}={\bm{B^{(t)}}}^\intercal \bm{e}_{M^{(t)}} \bm{Q},
\end{talign}
\begin{talign}
&\sum_{n \in \mathcal{N}}\sum_{m \in \mathcal{M}} B_{n,m}^{(a)}x_{n,m}^{(a)}={\bm{B^{(a)}}}^\intercal \bm{e}_{M^{(a)}} \bm{Q},
\end{talign}
\begin{talign}
&\sum_{n \in \mathcal{N}}\sum_{m \in \mathcal{M}} B_{n,m}^{(s)}x_{n,m}^{(s)}={\bm{B^{(s)}}}^\intercal\bm{e}_{M^{(s)}} \bm{Q},
\end{talign}
\begin{talign}
&\sum_{n \in \mathcal{N}}\sum_{m \in \mathcal{M}} C_{n,m}^{(t)}\varphi_{n,m}^{(t)}={\bm{C^{(t)}}}^\intercal\bm{e}_{\varphi_t} \bm{Q},
\end{talign}
\begin{talign}
&\sum_{n \in \mathcal{N}}\sum_{m \in \mathcal{M}} C_{n,m}^{(a)}\varphi_{n,m}^{(a)}={\bm{C^{(a)}}}^\intercal\bm{e}_{\varphi_a} \bm{Q},
\end{talign}
\begin{talign}
&\sum_{n \in \mathcal{N}}\sum_{m \in \mathcal{M}} C_{n,m}^{(s)}\varphi_{n,m}^{(s)}={\bm{C^{(s)}}}^\intercal\bm{e}_{\varphi_s} \bm{Q},
\end{talign}
\begin{talign}
\sum_{n \in \mathcal{N}}C_n^{(u)}\varphi_n^{(u)} = {\bm{C^{(u)}}}^\intercal\bm{e}_{\varphi_u} \bm{Q}.
\end{talign}
We define a matrix $\bm{P}_0$ as follows:
\begin{talign}
&\bm{P}_0 \nonumber \\
&=(\bm{0}_{N\times N},\bm{I}_{N},\bm{0}_{N\times 2N+NM})^\intercal \bm{I}_{N\rightarrow NM^{(t)}}\text{diag}(\bm{A^{(tt)}})\bm{e}_{M^{(t)}}\nonumber \\
& + (\bm{I}_{N},\bm{0}_{N\times 3N+NM})^\intercal \bm{I}_{N\rightarrow NM^{(t)}} \text{diag}(\bm{A^{(tu)}})\bm{e}_{M^{(t)}} \nonumber \\
&+(\bm{0}_{N\times 2N},\bm{I}_{N},\bm{0}_{N\times N+NM})^\intercal \bm{I}_{N\rightarrow NM^{(a)}}\text{diag}(\bm{A^{(aa)}})\bm{e}_{M^{(a)}}\nonumber \\
&+(\bm{I}_{N},\bm{0}_{N\times 3N+NM})^\intercal \bm{I}_{N\rightarrow NM^{(a)}}\text{diag}(\bm{A^{(au)}})\bm{e}_{M^{(a)}}\nonumber \\
&+(\bm{0}_{N\times N},\bm{I}_{N},\bm{0}_{N\times 2N+NM})^\intercal \bm{I}_{N\rightarrow NM^{(a)}}\text{diag}(\bm{A^{(at)}})\bm{e}_{M^{(a)}}\nonumber \\
&+(\bm{0}_{N\times 3N},\bm{I}_{N},\bm{0}_{N\times NM})^\intercal \bm{I}_{N\rightarrow NM^{(s)}}\text{diag}(\bm{A^{(ss)}})\bm{e}_{M^{(s)}}\nonumber \\
&+(\bm{I}_{N},\bm{0}_{N\times 3N+NM})^\intercal \bm{I}_{N\rightarrow NM^{(s)}}\text{diag}(\bm{A^{(su)}})\bm{e}_{M^{(s)}}\nonumber \\
&+(\bm{0}_{N\times N},\bm{I}_{N},\bm{0}_{N\times 2N+NM})^\intercal \bm{I}_{N\rightarrow NM^{(s)}}\text{diag}(\bm{A^{(st)}})\bm{e}_{M^{(s)}}\nonumber \\
&+(\bm{0}_{N\times 2N},\bm{I}_{N},\bm{0}_{N\times N+NM})^\intercal \bm{I}_{N\rightarrow NM^{(s)}}\text{diag}(\bm{A^{(sa)}})\bm{e}_{M^{(s)}}.
\end{talign}
Next, we define another matrix $\bm{W}_0^\intercal$ as follows:
\begin{talign}
&\bm{W}_0^\intercal ={\bm{B^{(t)}}}^\intercal\bm{e}_{M^{(t)}} + {\bm{B^{(a)}}}^\intercal \bm{e}_{M^{(a)}}+ {\bm{B^{(s)}}}^\intercal\bm{e}_{M^{(s)}}\nonumber \\
&+{\bm{C^{(t)}}}^\intercal\bm{e}_{\varphi_t} + {\bm{C^{(a)}}}^\intercal\bm{e}_{\varphi_a} + {\bm{C^{(s)}}}^\intercal\bm{e}_{\varphi_s} +{\bm{C^{(u)}}}^\intercal\bm{e}_{\varphi_u}.
\end{talign}
Based on the above analysis, we finally can express the objective function in Problem $\mathbb{P}_{7}$ as
\begin{talign}
\bm{Q}^\intercal \bm{P}_0 \bm{Q} + \bm{W}_0^\intercal \bm{Q} + T^{(u)} + T^{(t)} + T^{(a)} + T^{(s)}.
\end{talign}
Next, we analyze the delay terms. 
For $T^{(u)}$,
\begin{talign}
&\sum_{n \in \mathcal{N}}-D_n^{(u)}T^{(up)}_n \leq T^{(u)},\nonumber \\
&\Rightarrow \sum_{n \in \mathcal{N}} -D_n^{(u)}\frac{e_n t_n d_n}{\gamma_n^{(u)}f_n}\varphi_n^{(u)} \leq T^{(u)}.
\end{talign}
To make the expression clearer, we define that
\begin{talign}
&W^{(T_u)}_n:=-D_n^{(u)}\frac{e_n t_n d_n}{\gamma_n^{(u)}f_n},\\
&\bm{W^{(T_u)}}:=[W^{(T_u)}_n]|_{n \in \mathcal{N}}.
\end{talign}
Thus, we can obtain that
\begin{talign}
\sum_{n \in \mathcal{N}}-D_n^{(u)}T^{(up)}_n={\bm{W^{(T_u)}}}^\intercal \bm{e}_{\varphi_u} \bm{Q},
\end{talign}
\begin{talign}
\sum_{n \in \mathcal{N}}-D_n^{(u)}T^{(up)}_n \leq T^{(u)}\Longleftrightarrow{\bm{W^{(T_u)}}}^\intercal \bm{e}_{\varphi_u} \bm{Q} \leq T^{(u)}.
\end{talign}
For $T^{(t)}$,
\begin{talign}
&\sum_{n \in \mathcal{N}, m \in \mathcal{M}}-D_{n,m}^{(t)}(T_{n,m}^{(ut)} + T_{n,m}^{(tp)}) \leq T^{(t)},\nonumber \\
&\Rightarrow \sum_{n \in \mathcal{N}, m \in \mathcal{M}} -D_{n,m}^{(t)}\frac{x_{n,m}^{(t)}[\omega_b (1-\varphi_n^{(u)})d_n+d_n^{(l)}]}{r_{n,m_t}} -  D_{n,m}^{(t)}x_{n,m}^{(t)}\nonumber \\
&\hspace{150pt}\cdot \frac{e_{m_t}t_n\varphi_n^{(t)}d_n}{\gamma_{n,m}^{(t)}f_{m_t}} \leq T^{(t)},\nonumber \\
&\Rightarrow \sum_{n \in \mathcal{N}, m \in \mathcal{M}} -D_{n,m}^{(t)}\frac{\omega_b d_n + d_n^{(l)}}{r_{n,m_t}}
x_{n,m}^{(t)} + D_{n,m}^{(t)}  \frac{\omega_b d_n}{r_{n,m_t}}x_{n,m}^{(t)}\nonumber \\
&\cdot\varphi_n^{(u)}-D_{n,m}^{(t)} \frac{e_{m_t} t_n d_n}{\gamma_{n,m}^{(t)}f_{m_t}}x_{n,m}^{(t)}\varphi_n^{(t)} \leq T^{(t)}.
\end{talign}
To make the expression clearer, we define the following auxiliary variables and matrices:
\begin{talign}
&W_{1,n,m}^{(T_t)}:=-D_{n,m}^{(t)}\frac{\omega_b d_n + d_n^{(l)}}{r_{n,m_t}},\\ 
&W_{2,n,m}^{(T_t)}:=D_{n,m}^{(t)}\frac{\omega_b d_n}{r_{n,m_t}},\\
&W_{3,n,m}^{(T_t)}:=-D_{n,m}^{(t)} \frac{e_n t_n d_n}{\gamma_{n,m}^{(t)}f_{m_t}},\\
&\bm{W_1^{(T_t)}}:=[W_{1,n,m}^{(T_t)}]|_{n \in \mathcal{N}, m \in \mathcal{M}},\\
&\bm{W_2^{(T_t)}}:=[W_{2,n,m}^{(T_t)}]|_{n \in \mathcal{N}, m \in \mathcal{M}},\\
&\bm{W_3^{(T_t)}}:=[W_{2,n,m}^{(T_t)}]|_{n \in \mathcal{N}, m \in \mathcal{M}}.
\end{talign}
Based on the predefined auxiliary variables and matrices, we can obtain that
\begin{talign}
&\sum_{n \in \mathcal{N}, m \in \mathcal{M}}-D_{n,m}^{(t)}(T_{n,m}^{(ut)} + T_{n,m}^{(tp)})\leq T^{(t)}\nonumber \\
&\Longleftrightarrow\bm{Q}^\intercal(\bm{I}_{N},\bm{0}_{N\times 3N+NM})^\intercal \bm{I}_{N\rightarrow NM^{(t)}}\text{diag}(\bm{W_2^{(T_t)}})\bm{e}_{M^{(t)}} \bm{Q} \nonumber \\
&+ \bm{Q}^\intercal(\bm{0}_{N\times N},\bm{I}_{N},\bm{0}_{N\times 2N+NM})^\intercal \bm{I}_{N\rightarrow NM^{(t)}}\text{diag}(\bm{W_3^{(T_t)}})\nonumber \\
&\cdot \bm{e}_{M^{(t)}}\bm{Q} + {\bm{W_1^{(T_t)}}}^\intercal \bm{e}_{M^{(t)}} \bm{Q} \leq T^{(t)}.
\end{talign}
For $T^{(a)}$,
\begin{talign}
&\sum_{n \in \mathcal{N}, m \in \mathcal{M}}-D_{n,m}^{(a)}(T_{n,m}^{(tt)} + T_{n,m}^{(ap)}) \leq T^{(a)},\nonumber \\
&\Rightarrow \!\!\sum_{n \in \mathcal{N}, m \in \mathcal{M}} -D_{n,m}^{(a)}\frac{x_{n,m}^{(a)}[\omega_b(1-\varphi_n^{(u)}-\varphi_n^{(t)})d_n + d_n^{(l)}]}{r_{m_t,m_a}} \!-\!D_{n,m}^{(a)}\nonumber \\
&\hspace{130pt}\cdot \frac{x_{n,m}^{(a)}e_{m_a}t_n\varphi_n^{(a)}d_n}{\gamma_{n,m}^{(a)}f_{m_a}} \leq T^{(a)},\nonumber \\
&\Rightarrow \sum_{n \in \mathcal{N}, m \in \mathcal{M}}-D_{n,m}^{(a)} \frac{\omega_b d_n + d_n^{(l)}}{r_{m_t,m_a}}x_{n,m}^{(a)} + D_{n,m}^{(a)} \frac{\omega_b d_n}{r_{m_t,m_a}}x_{n,m}^{(a)}\nonumber \\
&\cdot\varphi_n^{(u)} + D_{n,m}^{(a)} \frac{\omega_b d_n}{r_{m_t,m_a}} x_{n,m}^{(a)}\varphi_n^{(t)} - D_{n,m}^{(a)} \frac{e_{m_a}d_n t_n}{\gamma_{n,m}^{(a)}f_{m_a}} x_{n,m}^{(a)}\varphi_n^{(a)} \nonumber \\
&\hspace{180pt}\leq T^{(a)}.
\end{talign}
To make the expression clearer, we define the following auxiliary variables and matrices:
\begin{talign}
&W_{1,n,m}^{(T_a)}:=-D_{n,m}^{(a)} \frac{\omega_b d_n + d_n^{(l)}}{r_{m_t,m_a}},\\
&W_{2,n,m}^{(T_a)}:=D_{n,m}^{(a)} \frac{\omega_b d_n}{r_{m_t,m_a}},\\
&W_{3,n,m}^{(T_a)}:=-D_{n,m}^{(a)}\frac{e_{m_a} t_n d_n}{\gamma_{n,m}^{(a)}f_{m_a}},\\
&\bm{W_1^{(T_a)}}:=[W_{1,n,m}^{(T_a)}]|_{n \in \mathcal{N}, m \in \mathcal{M}},\\
&\bm{W_2^{(T_a)}}:=[W_{2,n,m}^{(T_a)}]|_{n \in \mathcal{N}, m \in \mathcal{M}},\\
&\bm{W_3^{(T_a)}}:=[W_{2,n,m}^{(T_a)}]|_{n \in \mathcal{N}, m \in \mathcal{M}}.
\end{talign}
Therefore, we get the following conclusion:
\begin{talign}
&\sum_{n \in \mathcal{N}, m \in \mathcal{M}}-D_{n,m}^{(a)}(T_{n,m}^{(tt)} + T_{n,m}^{(ap)}) \leq T^{(a)}\nonumber \\
&\Longleftrightarrow\bm{Q}^\intercal(\bm{I}_{N},\bm{0}_{N\times 3N+NM})^\intercal \bm{I}_{N\rightarrow NM^{(a)}}\text{diag}(\bm{W_2^{(T_a)}})\bm{e}_{M^{(a)}}\bm{Q} \nonumber \\
&+ \bm{Q}^\intercal(\bm{0}_{N\times N},\bm{I}_{N},\bm{0}_{N\times 2N+NM})^\intercal \bm{I}_{N\rightarrow NM^{(a)}}\text{diag}(\bm{W_2^{(T_a)}})\nonumber \\
&\cdot \bm{e}_{M^{(a)}}\bm{Q} + \bm{Q}^\intercal (\bm{0}_{N\times 2N},\bm{I}_{N},\bm{0}_{N\times N+NM})^\intercal \bm{I}_{N\rightarrow NM^{(a)}}\nonumber \\
&\cdot \text{diag}(\bm{W_3^{(T_a)}}) \bm{e}_{M^{(a)}}\bm{Q} + {\bm{W_1^{(T_a)}}}^\intercal \bm{e}_{M^{(a)}}\bm{Q} \leq T^{(a)}.
\end{talign}
For $T^{(s)}$,
\begin{talign}
&\sum_{n \in \mathcal{N}, m \in \mathcal{M}}-D_{n,m}^{(s)}(T_{n,m}^{(at)} + T_{n,m}^{(sp)}) \leq T^{(s)},\nonumber \\
&\Rightarrow \sum_{n \in \mathcal{N}, m \in \mathcal{M}}-D_{n,m}^{(s)}\frac{x_{n,m}^{(s)}[\omega_b(1-\varphi_n^{(u)}-\varphi_n^{(t)}-\varphi_n^{(a)})d_n + d_n^{(l)}]}{r_{m_a,m_s}} \nonumber \\
&\hspace{110pt}-D_{n,m}^{(s)}\! \frac{x_{n,m}^{(s)}e_{m_s}t_n\varphi_n^{(s)}d_n}{\gamma_{n,m}^{(s)}f_{m_s}}\! \leq \!T^{(s)},\nonumber \\
&\Rightarrow \sum_{n \in \mathcal{N}, m \in \mathcal{M}} -D_{n,m}^{(s)}\frac{\omega_b d_n + d_n^{(l)}}{r_{m_a,m_s}}x_{n,m}^{(s)} + D_{n,m}^{(s)}\frac{\omega_b d_n}{r_{m_a,m_s}}x_{n,m}^{(s)}\nonumber \\
&\cdot \varphi_n^{(u)} + D_{n,m}^{(s)} \frac{\omega_b d_n}{r_{m_a,m_s}}x_{n,m}^{(s)}\varphi_n^{(t)} + D_{n,m}^{(s)} \frac{\omega_b d_n}{r_{m_a,m_s}}x_{n,m}^{(s)}\varphi_n^{(a)}\nonumber \\
&- D_{n,m}^{(s)} \frac{e_{m_s} t_n d_n}{\gamma_{n,m}^{(s)}f_{m_s}}x_{n,m}^{(s)}\varphi_n^{(s)} \leq T^{(s)}
\end{talign}
To make the expression clearer, we define the following auxiliary variables and matrices:
\begin{talign}
&W_{1,n,m}^{(T_s)}:=-D_{n,m}^{(s)}\frac{\omega_b d_n + d_n^{(l)}}{r_{m_a,m_s}},\\
&W_{2,n,m}^{(T_s)}:=D_{n,m}^{(s)} \frac{\omega_b d_n}{r_{m_a,m_s}},\\
&W_{3,n,m}^{(T_s)}:=- D_{n,m}^{(s)} \frac{e_{m_s} t_n d_n}{\gamma_{n,m}^{(s)}f_{m_s}},\\
&\bm{W_1^{(T_s)}}:=[W_{1,n,m}^{(T_s)}]|_{n \in \mathcal{N}, m \in \mathcal{M}},\\
&\bm{W_2^{(T_s)}}:=[W_{2,n,m}^{(T_s)}]|_{n \in \mathcal{N}, m \in \mathcal{M}},\\
&\bm{W_3^{(T_s)}}:=[W_{2,n,m}^{(T_s)}]|_{n \in \mathcal{N}, m \in \mathcal{M}}.
\end{talign}
Thus, we can know that 
\begin{talign}
&\sum_{n \in \mathcal{N}, m \in \mathcal{M}}-D_{n,m}^{(s)}(T_{n,m}^{(at)} + T_{n,m}^{(sp)}) \leq T^{(s)}\nonumber\\
&\Longleftrightarrow\bm{Q}^\intercal(\bm{I}_{N},\bm{0}_{N\times 3N+NM})^\intercal \bm{I}_{N\rightarrow NM^{(s)}}\text{diag}(\bm{W_2^{(T_s)}}) \bm{e}_{M^{(s)}}\bm{Q}\nonumber \\
&+ 
\bm{Q}^\intercal (\bm{0}_{N\times N},\bm{I}_{N},\bm{0}_{N\times 2N+NM})^\intercal \bm{I}_{N\rightarrow NM^{(s)}}\text{diag}(\bm{W_2^{(T_s)}})\nonumber \\
&\cdot \bm{e}_{M^{(s)}}\bm{Q}+ 
\bm{Q}^\intercal(\bm{0}_{N\times 2N},\bm{I}_{N},\bm{0}_{N\times N+NM})^\intercal \bm{I}_{N\rightarrow NM^{(s)}}\nonumber \\
&\cdot \text{diag}(\bm{W_2^{(T_s)}}) \bm{e}_{M^{(s)}}\bm{Q}+\bm{Q}^\intercal(\bm{0}_{N\times 3N},\bm{I}_{N},\bm{0}_{N\times NM})^\intercal \nonumber \\
&\cdot \bm{I}_{N\rightarrow NM^{(s)}}\text{diag}(\bm{W_3^{(T_s)}}) \bm{e}_{M^{(s)}}\bm{Q}+\!\!{\bm{W_1^{(T_s)}}}^\intercal \!\!\!\bm{e}_{M^{(s)}} \bm{Q}\leq T^{(s)}.
\end{talign}
To make the expression clearer, we define the following auxiliary matrices:
\begin{talign}
&{\bm{P^{(T_u)}}}^\intercal = {\bm{W^{(T_u)}}}^\intercal \bm{e}_{\varphi_u},
\end{talign}
\begin{talign}
&\bm{P_1^{(T_t)}} = (\bm{I}_{N},\bm{0}_{N\times 3N+NM})^\intercal \bm{I}_{N\rightarrow NM^{(t)}}\text{diag}(\bm{W_2^{(T_t)}}) \bm{e}_{M^{(t)}} \nonumber \\
&+ (\bm{0}_{N\times N},\bm{I}_{N},\bm{0}_{N\times 2N+NM})^\intercal \bm{I}_{N\rightarrow NM^{(t)}}\text{diag}(\bm{W_3^{(T_t)}}) \bm{e}_{M^{(t)}},
\end{talign}
\begin{talign}
&{\bm{P_2^{(T_t)}}}^\intercal = {\bm{W_1^{(T_t)}}}^\intercal \bm{e}_{M^{(t)}},
\end{talign}
\begin{talign}
&\bm{P_1^{(T_a)}} = (\bm{I}_{N},\bm{0}_{N\times 3N+NM})^\intercal \bm{I}_{N\rightarrow NM^{(a)}}\text{diag}(\bm{W_2^{(T_a)}}) \bm{e}_{M^{(a)}} \nonumber \\
&+ (\bm{0}_{N\times N},\bm{I}_{N},\bm{0}_{N\times 2N+NM})^\intercal \bm{I}_{N\rightarrow NM^{(a)}}\text{diag}(\bm{W_2^{(T_a)}})\bm{e}_{M^{(a)}}\nonumber \\
&+ (\bm{0}_{N\times 2N},\bm{I}_{N},\bm{0}_{N\times N+NM})^\intercal \bm{I}_{N\rightarrow NM^{(a)}}\text{diag}(\bm{W_3^{(T_a)}})\bm{e}_{M^{(a)}},
\end{talign}
\begin{talign}
&{\bm{P_2^{(T_a)}}}^\intercal = {\bm{W_1^{(T_a)}}}^\intercal \bm{e}_{M^{(a)}},
\end{talign}
\begin{talign}
&\bm{P_1^{(T_s)}} = 
(\bm{I}_{N},\bm{0}_{N\times 3N+NM})^\intercal \bm{I}_{N\rightarrow NM^{(s)}}\text{diag}(\bm{W_2^{(T_s)}}) \bm{e}_{M^{(s)}}\nonumber \\
&+ 
(\bm{0}_{N\times N},\bm{I}_{N},\bm{0}_{N\times 2N+NM})^\intercal \bm{I}_{N\rightarrow NM^{(s)}}\text{diag}(\bm{W_2^{(T_s)}})\bm{e}_{M^{(s)}}\nonumber \\
&+ 
(\bm{0}_{N\times 2N},\bm{I}_{N},\bm{0}_{N\times N+NM})^\intercal \bm{I}_{N\rightarrow NM^{(s)}}\text{diag}(\bm{W_2^{(T_s)}})\bm{e}_{M^{(s)}}\nonumber \\
&+(\bm{0}_{N\times 3N},\bm{I}_{N},\bm{0}_{N\times NM})^\intercal \bm{I}_{N\rightarrow NM^{(s)}}\text{diag}(\bm{W_3^{(T_s)}})\bm{e}_{M^{(s)}},
\end{talign}
\begin{talign}
&{\bm{P_2^{(T_s)}}}^\intercal = {\bm{W_1^{(T_s)}}}^\intercal\bm{e}_{M^{(s)}}.
\end{talign}
Therefore, the delay term constraints (\ref{Tu_constr})-(\ref{Ts_constr}) can be transformed into new constraints shown as follows:
\begin{talign}
&{\bm{P^{(T_u)}}}^\intercal \bm{Q} \leq T^{(u)},\\
&\bm{Q}^\intercal \bm{P_1^{(T_t)}} \bm{Q} + {\bm{P_2^{(T_t)}}}^\intercal \bm{Q} \leq T^{(t)},\\
&\bm{Q}^\intercal \bm{P_1^{(T_a)}} \bm{Q} + {\bm{P_2^{(T_a)}}}^\intercal \bm{Q} \leq T^{(a)},\\
&\bm{Q}^\intercal \bm{P_1^{(T_s)}} \bm{Q} + {\bm{P_2^{(T_s)}}}^\intercal \bm{Q} \leq T^{(s)}.
\end{talign}
For constraint (\ref{x_range_constr_new}), it can be rewritten as 
\begin{talign}
&\text{diag}(\bm{e}_{M^{(t)}}^\intercal\bm{Q})(\text{diag}(\bm{e}_{M^{(t)}}^\intercal\bm{Q}) - \bm{I})= 0,\\
&\text{diag}(\bm{e}_{M^{(a)}}^\intercal\bm{Q})(\text{diag}(\bm{e}_{M^{(a)}}^\intercal\bm{Q}) - \bm{I}) = 0,\\
&\text{diag}(\bm{e}_{M^{(s)}}^\intercal\bm{Q})(\text{diag}(\bm{e}_{M^{(s)}}^\intercal\bm{Q})- \bm{I}) = 0.
\end{talign}
For constraint (\ref{x_sum_constr}), it can be transformed into
\begin{talign}
&\text{diag}(\bm{e}_{\overline{1_t},\overline{M^{(t)}_t}}^\intercal \bm{e}_{M^{(t)}}^\intercal\bm{Q}) = \bm{I},\\
&\text{diag}(\bm{e}_{\overline{1_a},\overline{M^{(a)}_a}}^\intercal \bm{e}_{M^{(a)}}^\intercal\bm{Q}) = \bm{I},\\
&\text{diag}(\bm{e}_{\overline{1_s},\overline{M^{(s)}_s}}^\intercal \bm{e}_{M^{(s)}}^\intercal\bm{Q}) = \bm{I}.
\end{talign}
Let's ignore the restriction of greater than or equal to 0 for a moment and variables considered are all greater than or equal to 0. We will add this restriction in the final form of the transformed problem. For constraints (\ref{varphi_range_constr})-(\ref{varphi_sum_constr2}), their new forms are
\begin{talign}
&\text{diag}(\bm{e}_{\varphi_u}^\intercal\bm{Q}) \leq \bm{I},\\
&\text{diag}(\bm{e}_{\varphi_t}^\intercal\bm{Q}) \leq \bm{I},\\
&\text{diag}(\bm{e}_{\varphi_a}^\intercal\bm{Q}) \leq \bm{I},\\
&\text{diag}(\bm{e}_{\varphi_s}^\intercal\bm{Q}) \leq \bm{I},\\
&\text{diag}\big((\bm{e}_{\varphi_u}^\intercal+\bm{e}_{\varphi_t}^\intercal+\bm{e}_{\varphi_a}^\intercal+\bm{e}_{\varphi_s}^\intercal)\bm{Q}\big) = \bm{I}.
\end{talign}
For constraints (\ref{phi_sum_constr}), (\ref{gamma_sum_constr}), and (\ref{rho_sum_constr}), they can be rewritten as
\begin{talign}
&\bm{\phi^{(t)}}\bm{e}_{M^{(t)}}^\intercal\bm{Q}-1 \leq 0,\\
&\bm{\phi^{(a)}}\bm{e}_{M^{(a)}}^\intercal\bm{Q}-1 \leq 0,\\
&\bm{\phi^{(s)}}\bm{e}_{M^{(s)}}^\intercal\bm{Q}-1\leq 0,\\
&\bm{\gamma^{(t)}}\bm{e}_{M^{(t)}}^\intercal\bm{Q}-1 \leq 0,\\
&\bm{\gamma^{(a)}}\bm{e}_{M^{(a)}}^\intercal\bm{Q}-1 \leq 0,\\
&\bm{\gamma^{(s)}}\bm{e}_{M^{(s)}}^\intercal\bm{Q}-1\leq 0,\\
&\bm{\rho^{(t)}}\bm{e}_{M^{(t)}}^\intercal\bm{Q}-1 \leq 0,\\
&\bm{\rho^{(a)}}\bm{e}_{M^{(a)}}^\intercal\bm{Q}-1 \leq 0.
\end{talign}
Based on the above discussion, we transform ``maximization'' of Problem $\mathbb{P}_{7}$ to ``minimization'' to obtain the standard QCQP form Problem $\mathbb{P}_{8}$:
\begin{subequations}
\begin{talign}
\mathbb{P}_{8}:&\min\limits_{\bm{x},\bm{\varphi},\bm{T}} -\bm{Q}^\intercal \bm{P}_0 \bm{Q} - \bm{W}_0^\intercal \bm{Q} - T^{(u)} - T^{(t)} - T^{(a)} - T^{(s)}\nonumber \\
\text{s.t.} \quad
&\text{diag}(\bm{e}_{M^{(i)}}^\intercal\bm{Q})(\text{diag}(\bm{e}_{M^{(i)}}^\intercal\bm{Q})- \bm{I})= 0, \forall i \in \{t,a,s\},\nonumber\\
&\text{diag}(\bm{e}_{\overline{1_i},\overline{M^{(i)}_i}}^\intercal \bm{e}_{M^{(i)}}^\intercal\bm{Q}) = \bm{I},\forall i \in \{t,a,s\},\nonumber\\
&\text{diag}(\bm{e}_{\varphi_i}^\intercal\bm{Q}) \leq \bm{I},\forall i \in \{u,t,a,s\},\nonumber\\
&\text{diag}\big((\bm{e}_{\varphi_u}^\intercal+\bm{e}_{\varphi_t}^\intercal+\bm{e}_{\varphi_a}^\intercal+\bm{e}_{\varphi_s}^\intercal)\bm{Q}\big) = \bm{I},\nonumber\\
&\bm{\phi^{(i)}}\bm{e}_{M^{(i)}}^\intercal\bm{Q}-1 \leq 0,\forall i \in \{t,a,s\},\nonumber\\
&\bm{\gamma^{(i)}}\bm{e}_{M^{(i)}}^\intercal\bm{Q}-1 \leq 0,\forall i \in \{t,a,s\},\nonumber\\
&\bm{\rho^{(i)}}\bm{e}_{M^{(i)}}^\intercal\bm{Q}-1 \leq 0,\forall i \in \{t,a\},\nonumber\\
&{\bm{P^{(T_u)}}}^\intercal \bm{Q} \leq T^{(u)},\nonumber\\
&\bm{Q}^\intercal \bm{P_1^{(T_i)}} \bm{Q} + {\bm{P_2^{(T_i)}}}^\intercal \bm{Q} \leq T^{(i)},\forall i \in \{t,a,s\}.\nonumber
\end{talign}
\end{subequations}
\textbf{Lemma \ref{lemma_p7top8}} is proven.
\end{proof}

\section{Proof of \textbf{Lemma \ref{lemma_p8top9}}}\label{append_lemma_p8top9}
\begin{proof}
We introduce a new variable $\bm{S}:=(\bm{Q}^\intercal,1)^\intercal(\bm{Q}^\intercal,1)$. Let
\begin{talign}
\bm{P}_1=
\left(
    \begin{array}{cc}
       -\bm{P}_0  & -\frac{1}{2}\bm{W}_0 \\
        -\frac{1}{2}\bm{W}_0^\intercal &  - T^{(u)} - T^{(t)} - T^{(a)} - T^{(s)}
    \end{array}
\right),
\end{talign}

\begin{talign}
\bm{P}_2=
\left(
    \begin{array}{cc}
       \bm{e}_i^\intercal\bm{e}_i  & -\frac{1}{2}\bm{e}_i \\
        -\frac{1}{2}\bm{e}_i^\intercal &  0
    \end{array}
\right),\forall i \in \{4N+1,\cdots,4N+NM\},
\end{talign}

\begin{talign}
\bm{P}_3=
\left(
    \begin{array}{cc}
       \bm{0}_{4N+NM,4N+NM}  & \frac{1}{2}(\bm{e}_{\overline{1_k},\overline{M^{(k)}}} \bm{e}_{M^{(k)}}) \\
        \frac{1}{2}(\bm{e}_{\overline{1_k},\overline{M^{(k)}}} \bm{e}_{M^{(k)}})^\intercal &  -1
    \end{array}
\right),\nonumber \\
\forall i \in \{1,\cdots,N\}, \forall k \in \{t,a,s\},
\end{talign}

\begin{talign}
\bm{P}_4=
\left(
    \begin{array}{cc}
       \bm{0}_{4N+NM,4N+NM}  & \frac{1}{2}\bm{e}_i \\
        \frac{1}{2}\bm{e}^\intercal &  -1
    \end{array}
\right),
\forall i \in \{1,\cdots,4N\},
\end{talign}

\begin{talign}
\bm{P}_5=
\left(
    \begin{array}{cc}
       \bm{0}_{4N+NM,4N+NM}  & \frac{(\bm{e}_{\varphi_u}+\bm{e}_{\varphi_t}+\bm{e}_{\varphi_a}+\bm{e}_{\varphi_s})}{2} \\
        \frac{(\bm{e}_{\varphi_u}+\bm{e}_{\varphi_t}+\bm{e}_{\varphi_a}+\bm{e}_{\varphi_s})^\intercal}{2} &  -1
    \end{array}
\right),
\end{talign}

\begin{talign}
\bm{P}_6=
\left(
    \begin{array}{cc}
       \bm{0}_{4N+NM,4N+NM}  & \frac{1}{2}\bm{\phi^{(i)}}\bm{e}_{M^{(i)}} \\
        \frac{1}{2}\bm{\phi^{(i)}}\bm{e}_{M^{(i)}}^\intercal &  -1
    \end{array}
\right),
\forall i \in \{t,a,s\},
\end{talign}

\begin{talign}
\bm{P}_7=
\left(
    \begin{array}{cc}
       \bm{0}_{4N+NM,4N+NM}  & \frac{1}{2}\bm{\gamma^{(i)}}\bm{e}_{M^{(i)}} \\
        \frac{1}{2}\bm{\gamma^{(i)}}\bm{e}_{M^{(i)}}^\intercal &  -1
    \end{array}
\right),
\forall i \in \{t,a,s\},
\end{talign}

\begin{talign}
\bm{P}_8=
\left(
    \begin{array}{cc}
       \bm{0}_{4N+NM,4N+NM}  & \frac{1}{2}\bm{\rho^{(i)}}\bm{e}_{M^{(i)}} \\
        \frac{1}{2}\bm{\rho^{(i)}}\bm{e}_{M^{(i)}}^\intercal &  -1
    \end{array}
\right),
\forall i \in \{t,a\},
\end{talign}

\begin{talign}
\bm{P}_9=
\left(
    \begin{array}{cc}
       \bm{0}_{4N+NM,4N+NM}  & \frac{1}{2}\bm{P^{(T_u)}} \\
        {\frac{1}{2}\bm{P^{(T_u)}}}^\intercal &  0
    \end{array}
\right),
\end{talign}

\begin{talign}
\bm{P}_{10}=
\left(
    \begin{array}{cc}
       \bm{P_1^{(T_i)}}  & \frac{1}{2}\bm{P_2^{(T_i)}} \\
        {\frac{1}{2}\bm{P_2^{(T_i)}}}^\intercal &  0
    \end{array}
\right),
\forall i \in \{t,a,s\}.
\end{talign}
Therefore, we can obtain the following conclusions:
\begin{talign}
&-\bm{Q}^\intercal \bm{P}_0 \bm{Q} - \bm{W}_0^\intercal \bm{Q} - T^{(u)} - T^{(t)} - T^{(a)} - T^{(s)}\nonumber \\
&\Longleftrightarrow\text{Tr}(\bm{P}_1 \bm{S}).\\
&\text{diag}(\bm{e}_{M^{(i)}}^\intercal\bm{Q})(\text{diag}(\bm{e}_{M^{(i)}}^\intercal\bm{Q})- \bm{I})= 0, \forall i \in \{t,a,s\}\nonumber\\
&\Longleftrightarrow \text{Tr}(\bm{P}_2 \bm{S})=0.\\
&\text{diag}(\bm{e}_{\overline{1_i},\overline{M^{(i)}_i}}^\intercal \bm{e}_{M^{(i)}}^\intercal\bm{Q}) = \bm{I},\forall i \in \{t,a,s\}\Longleftrightarrow \text{Tr}(\bm{P}_3 \bm{S})=0.\\
&\text{diag}(\bm{e}_{\varphi_i}^\intercal\bm{Q}) \leq \bm{I},\forall i \in \{u,t,a,s\}\Longleftrightarrow\text{Tr}(\bm{P}_4 \bm{S})\leq 0.\\
&\text{diag}\big((\bm{e}_{\varphi_u}^\intercal+\bm{e}_{\varphi_t}^\intercal+\bm{e}_{\varphi_a}^\intercal+\bm{e}_{\varphi_s}^\intercal)\bm{Q}\big) = \bm{I}\Longleftrightarrow \text{Tr}(\bm{P}_5 \bm{S})=0.\\
&\bm{\phi^{(i)}}\bm{e}_{M^{(i)}}^\intercal\bm{Q}-1 \leq 0,\forall i \in \{t,a,s\}\Longleftrightarrow \text{Tr}(\bm{P}_6 \bm{S})\leq0.\\
&\bm{\gamma^{(i)}}\bm{e}_{M^{(i)}}^\intercal\bm{Q}-1 \leq 0,\forall i \in \{t,a,s\}\Longleftrightarrow \text{Tr}(\bm{P}_7 \bm{S})\leq0.\\
&\bm{\rho^{(i)}}\bm{e}_{M^{(i)}}^\intercal\bm{Q}-1 \leq 0,\forall i \in \{t,a\}\Longleftrightarrow \text{Tr}(\bm{P}_8 \bm{S})\leq0.\\
&{\bm{P^{(T_u)}}}^\intercal \bm{Q} \leq T^{(u)}\Longleftrightarrow \text{Tr}(\bm{P}_9 \bm{S})\leq T^{(u)}.
\end{talign}
\begin{talign}
&\bm{Q}^\intercal \bm{P_1^{(T_i)}} \bm{Q} + {\bm{P_2^{(T_i)}}}^\intercal \bm{Q} \leq T^{(i)},\forall i \in \{t,a,s\}\nonumber \\
&\Longleftrightarrow \text{Tr}(\bm{P}_{10} \bm{S})\leq T^{(i)}, \forall i \in \{t,a,s\}.
\end{talign}
Based on the above analysis, we can obtain a solvable SDR Problem $\mathbb{P}_{9}$.

\textbf{Lemma \ref{lemma_p8top9}} is proven.
\end{proof}

\section{Hyperparameter Settings}
Here we present the hyperparameters used in the PPO method in Table \ref{tab:ppo_hyperparams}.
\begin{table}[ht]
\centering
\caption{PPO hyperparameters used in our simulations.}
\setlength{\tabcolsep}{4pt}    
\renewcommand{\arraystretch}{1.0}
\begin{normalsize}             
\begin{tabular}{l c}
\hline
\textbf{Hyperparameter} & \textbf{Value} \\
\hline
Discount factor & 0.99 \\
GAE factor & 0.95 \\
PPO clipping ratio & 0.2 \\
PPO update epochs per iteration & 10 \\
Hidden layer width & 64 \\
Actor learning rate & $2\times10^{-3}$ \\
Critic learning rate & $2\times10^{-4}$ \\
Additional learning rate & $1\times10^{-4}$ \\
Critic L2 regularization coefficient & 0 \\
Trajectory batch size & 64 \\
Entropy coefficient & 0 \\
Entropy coefficient decay rate & 0.99 \\
Actor optimization batch size & 64 \\
Critic optimization batch size & 64 \\
\hline
\end{tabular}
\end{normalsize}
\label{tab:ppo_hyperparams}
\end{table}

\section{Discussion on the Weight Setting of Delay and Energy}
\label{sec.weight_setting}
In this section, we provide a qualitative discussion on the role and interpretation of the weight parameters $\omega_t$ and $\omega_e$ that appear in the cost terms of Problem~$\mathbb{P}_{3}$. Recall that the cost of each component is modeled as a weighted sum of delay and energy, for example
\begin{equation}
\label{eq:cost_example}
\text{cost} = \omega_t T + \omega_e E,
\end{equation}
where $\omega_t$ and $\omega_e$ are nonnegative parameters reflecting the relative importance of delay and energy consumption in the system design.

\subsection{Fix decision variables and then optimize weight parameters}
For a fixed resource allocation solution $(\bm{x}$, $\bm{\varphi}$, $\bm{\gamma}$, $\bm{\phi}$, $\bm{\rho}$, $\bm{\psi}$, $\bm{\alpha}$, $\bm{T})$, the objective of $\mathbb{P}_3$ can be written in a simplified affine form with respect to $(\omega_t,\omega_e)$ as
\begin{equation}
\label{eq:F_affine}
F(\omega_t,\omega_e) = A - B \omega_t - C \omega_e,
\end{equation}
where the constants $A$, $B$, and $C$ are nonnegative terms determined by the delay and energy components of the solution. This expression shows explicitly how the overall performance varies as the designer changes the emphasis on delay and energy through $(\omega_t,\omega_e)$.

It is common to treat the weights as
\begin{equation}
\label{eq:weight_normalization}
\omega_t + \omega_e = 1, \quad \omega_t,\omega_e > 0.
\end{equation}
In this case, $\omega_e = 1 - \omega_t$, and \eqref{eq:F_affine} becomes a function of a single scalar $\omega_t$:
\begin{equation}
\label{eq:F_omega_t}
F(\omega_t) = A - B\omega_t - C(1-\omega_t)
= A - C - (B - C)\omega_t.
\end{equation}
Since \eqref{eq:F_omega_t} is linear in $\omega_t$, the following properties hold for a fixed allocation:
\begin{enumerate}
    \item If $B > C$, then $F(\omega_t)$ decreases with $\omega_t$;
    \item If $B < C$, then $F(\omega_t)$ increases with $\omega_t$;
    \item If $B = C$, then $F(\omega_t)$ is constant in $\omega_t$.
\end{enumerate}
Therefore, the simplified objective \eqref{eq:F_omega_t} is an affine function of the delay weight $\omega_t$ and does not admit a unique interior optimum with respect to $\omega_t$. Instead, different values of $(\omega_t,\omega_e)$ correspond to different points on the delay–energy tradeoff curve for the given allocation.

\subsection{Jointly optimize decision variables and weight parameters}
In this case, the optimization problem would be different and more complex, which is beyond the scope of this work. In this paper, we instead follow the common practice of treating $(\omega_t,\omega_e)$ as design parameters that encode the operator’s delay–energy preference, and we focus on analyzing how different weight settings influence the resulting performance.

\subsection{Further discussion}
From a system design perspective, these observations imply that $(\omega_t,\omega_e)$ should be interpreted as policy parameters in this paper rather than as quantities that can be optimized in a purely mathematical sense. Larger values of $\omega_t$ place more emphasis on delay, which encourages solutions with lower latency at the expense of higher energy consumption, while smaller values of $\omega_t$ place more emphasis on energy saving and allow higher delay to reduce energy usage. In practice, the choice of $(\omega_t,\omega_e)$ should be guided by the operator's quality-of-service requirements and energy budget.

In summary, the weight parameters $(\omega_t,\omega_e)$ provide a flexible mechanism for network operators to select their preferred operating point along the delay–energy tradeoff curve. The optimization framework in Problem~$\mathbb{P}_3$ accommodates any such choice of $(\omega_t,\omega_e)$ and computes the corresponding resource allocation that maximizes the overall parameter training efficiency under the specified preference between delay and energy.

\section{Additional Simulation Results about Performances under Mobility-Aware SAGIN Networks}\label{sec.more_results_under_mobility}
\begin{figure}[t] 
\vspace{-0.3cm}
\subfigure[Total cost at each level.]{\includegraphics[width=.24\textwidth]{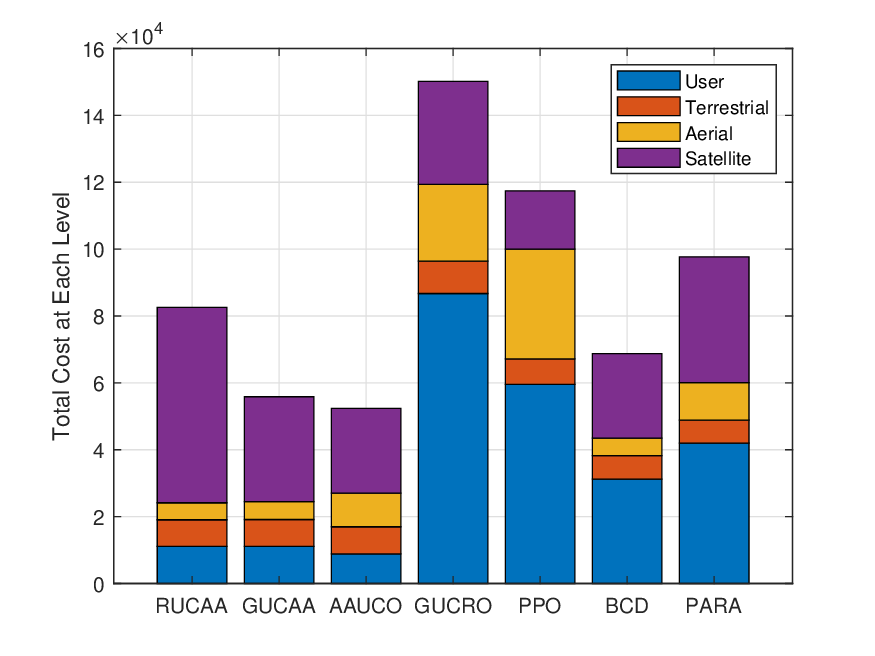}}
\subfigure[Numbers of training parameters at each level.]{\includegraphics[width=.24\textwidth]{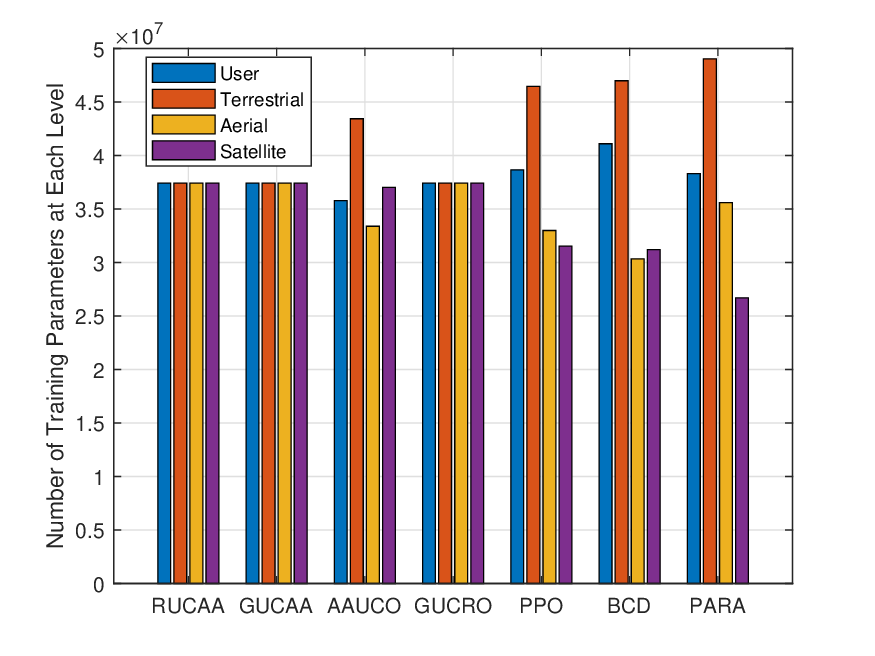}} \vspace{-10pt}
\caption{More performance comparisons under dynamic SAGIN topology with $\omega_t = 0.5, \omega_e = 0.5$.}
\label{fig.mobility_moreresults_t5e5}
\end{figure}

\begin{figure}[t] 
\vspace{-0.3cm}
\subfigure[$\omega_t = 0.1, \omega_e = 0.9$.]{\includegraphics[width=.24\textwidth]{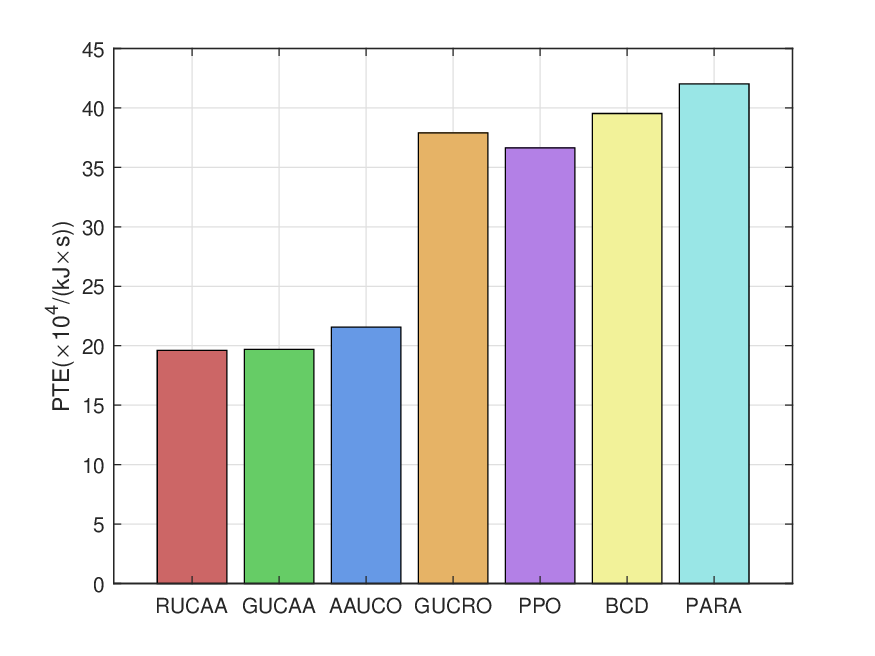}}
\subfigure[$\omega_t = 0.3, \omega_e = 0.7$.]{\includegraphics[width=.24\textwidth]{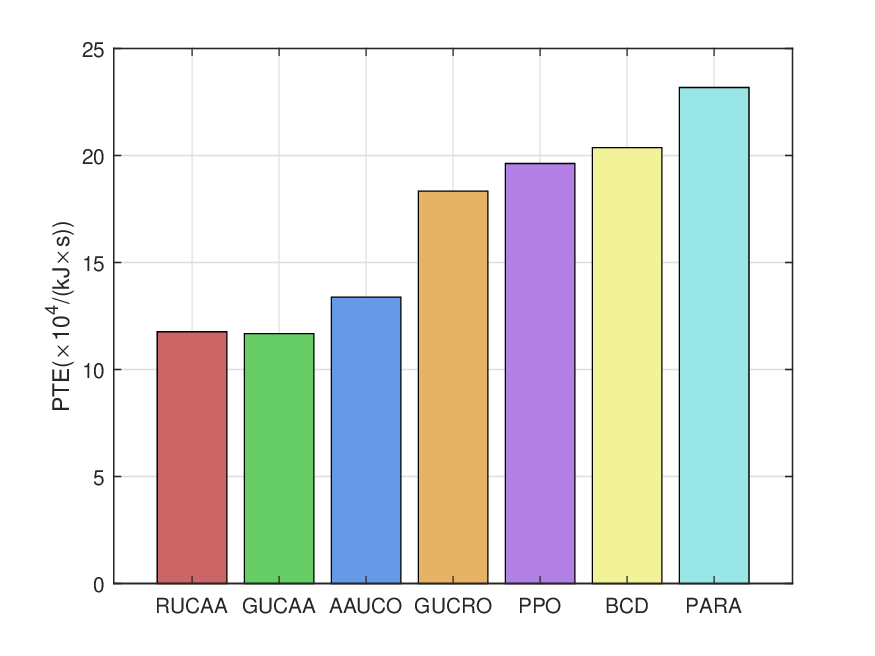}} 
\subfigure[$\omega_t = 0.7, \omega_e = 0.3$.]{\includegraphics[width=.24\textwidth]{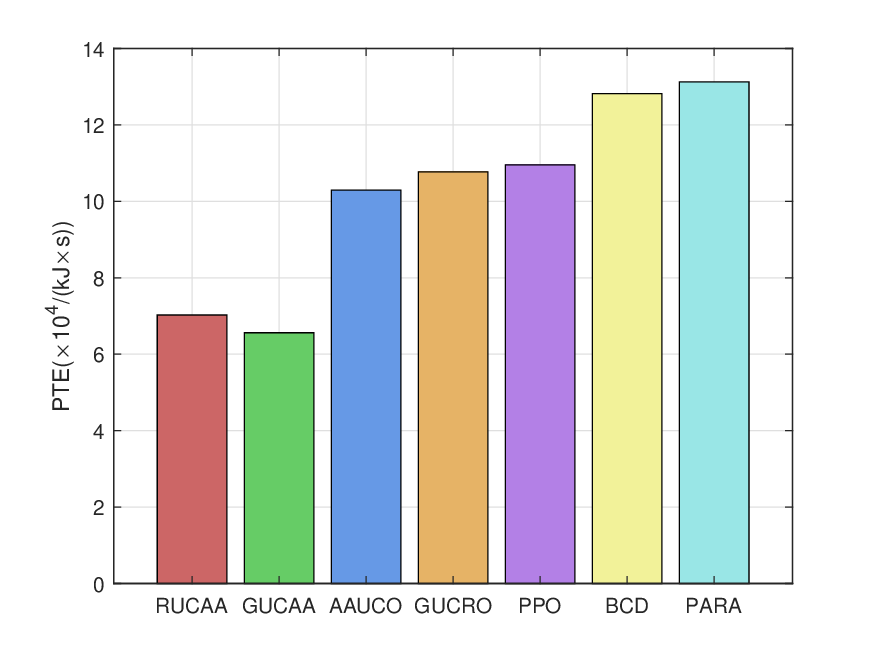}}
\subfigure[$\omega_t = 0.9, \omega_e = 0.1$.]{\includegraphics[width=.24\textwidth]{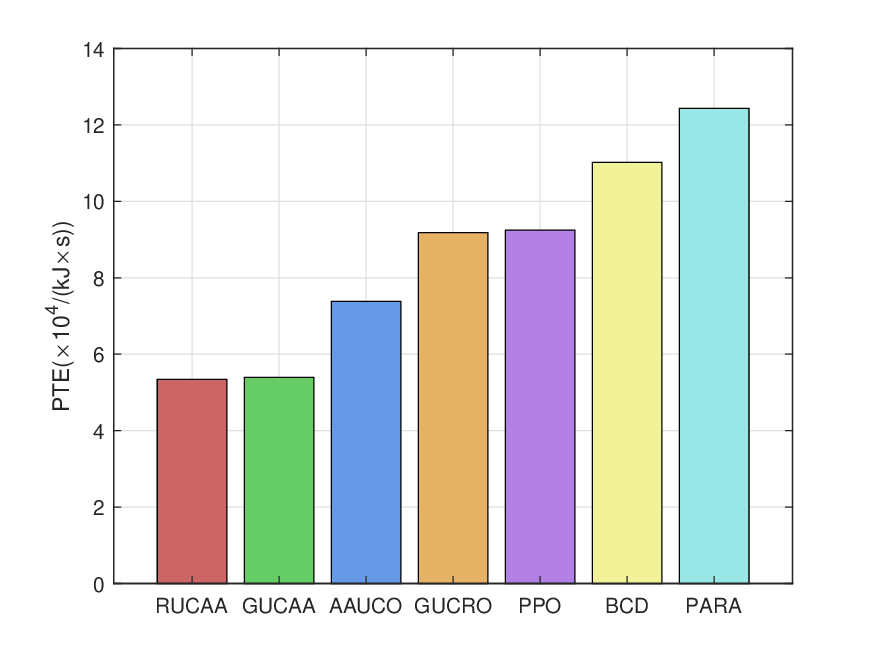}}\vspace{-10pt}
\caption{PTE performance comparisons under dynamic SAGIN topology with different $(\omega_t , \omega_e)$ settings.}
\label{fig.mobility_pte_addtional}
\end{figure}

\begin{figure}[t] 
\vspace{-0.3cm}
\subfigure[$\omega_t = 0.1, \omega_e = 0.9$.]{\includegraphics[width=.24\textwidth]{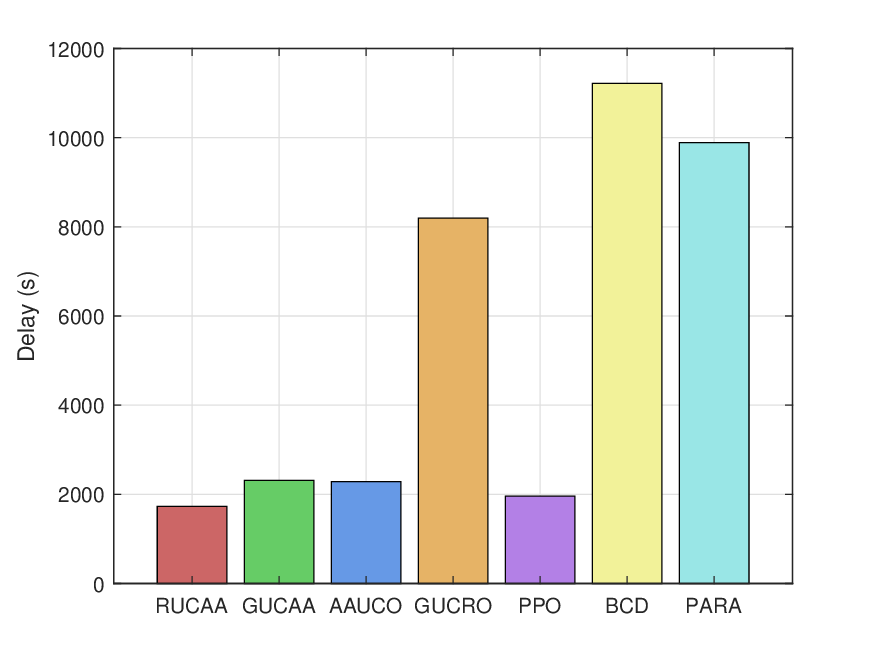}}
\subfigure[$\omega_t = 0.3, \omega_e = 0.7$.]{\includegraphics[width=.24\textwidth]{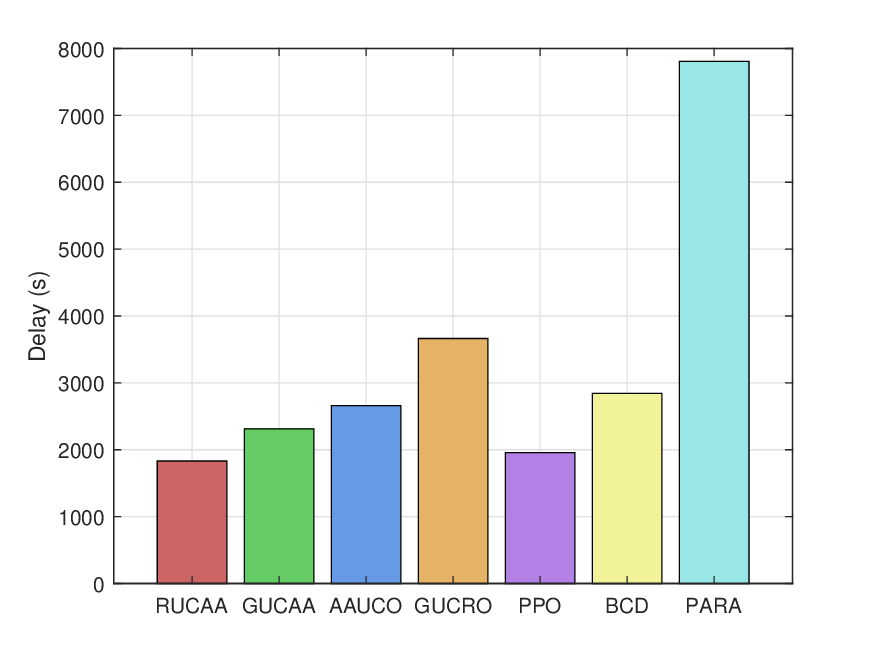}} 
\subfigure[$\omega_t = 0.7, \omega_e = 0.3$.]{\includegraphics[width=.24\textwidth]{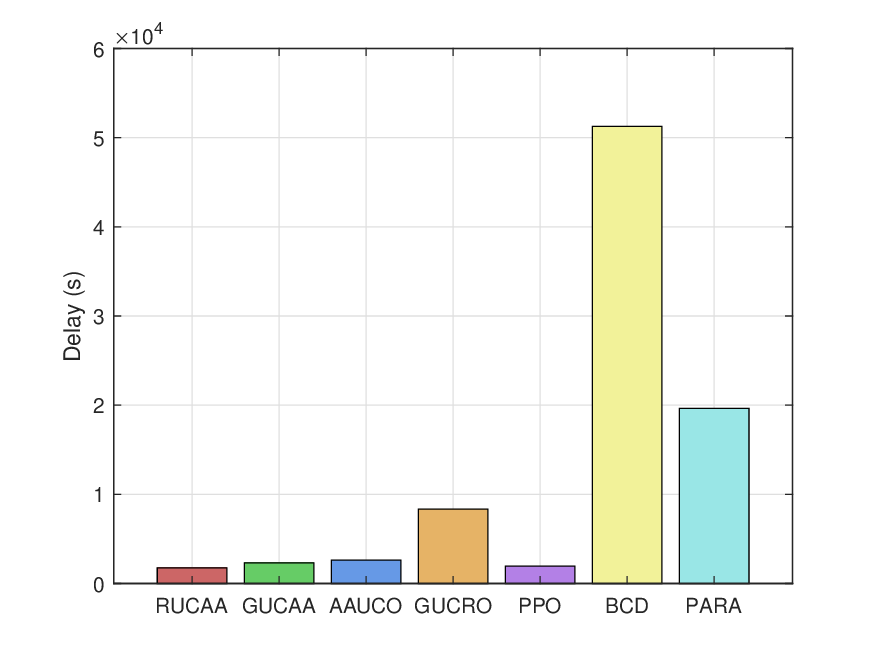}}
\subfigure[$\omega_t = 0.9, \omega_e = 0.1$.]{\includegraphics[width=.24\textwidth]{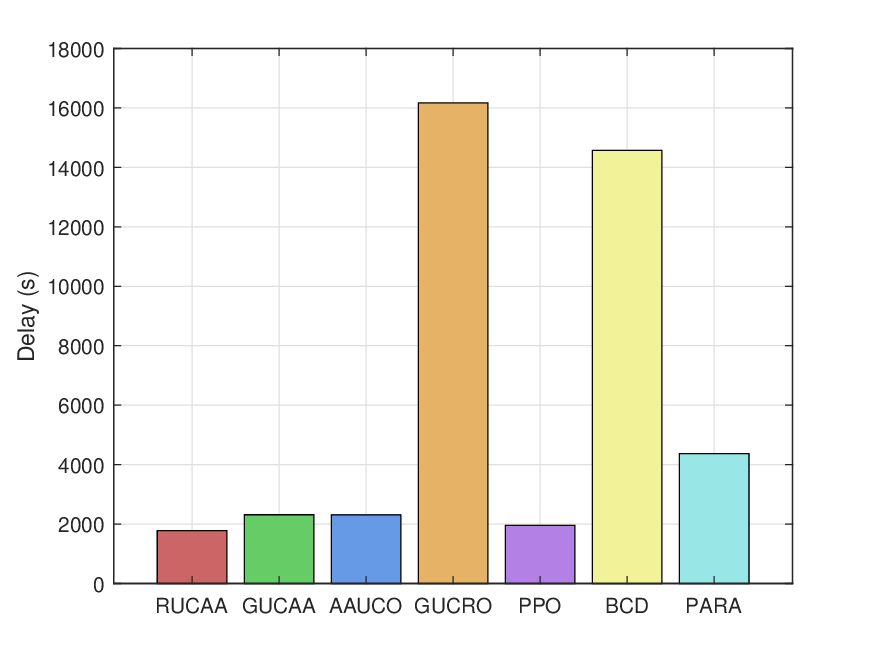}}\vspace{-10pt}
\caption{Delay performance comparisons under dynamic SAGIN topology with different $(\omega_t , \omega_e)$ settings.} 
\label{fig.mobility_delay_addtional}
\end{figure}

\begin{figure}[t]
\vspace{-0.3cm}
\subfigure[$\omega_t = 0.1, \omega_e = 0.9$.]{\includegraphics[width=.24\textwidth]{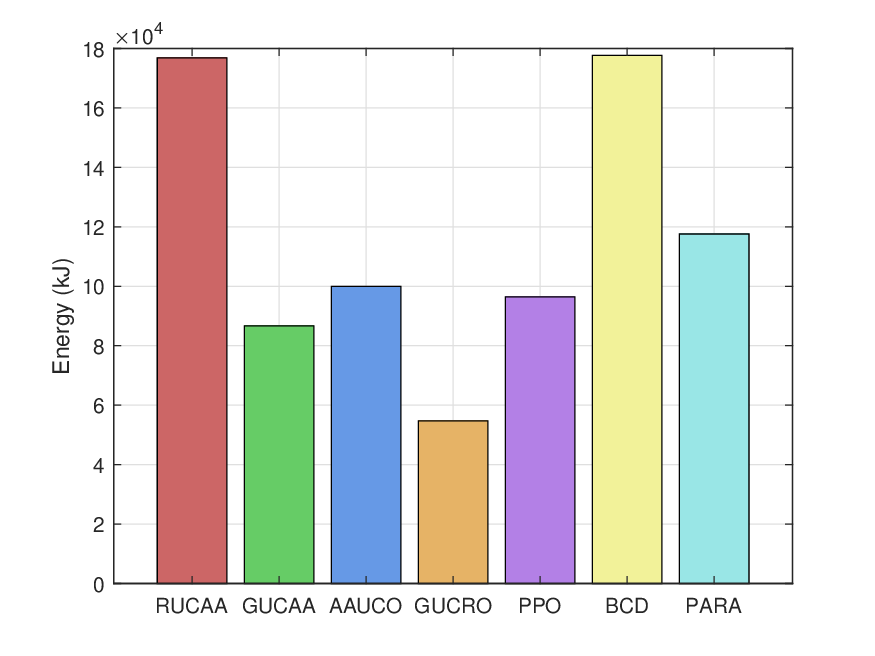}} 
\subfigure[$\omega_t = 0.3, \omega_e = 0.7$.]{\includegraphics[width=.24\textwidth]{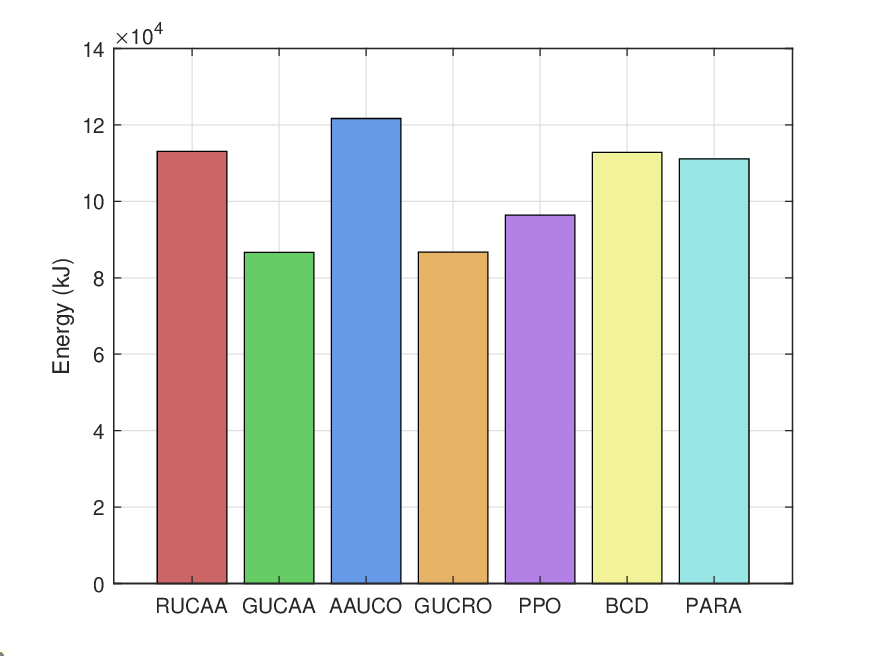}} 
\subfigure[$\omega_t = 0.7, \omega_e = 0.3$.]{\includegraphics[width=.24\textwidth]{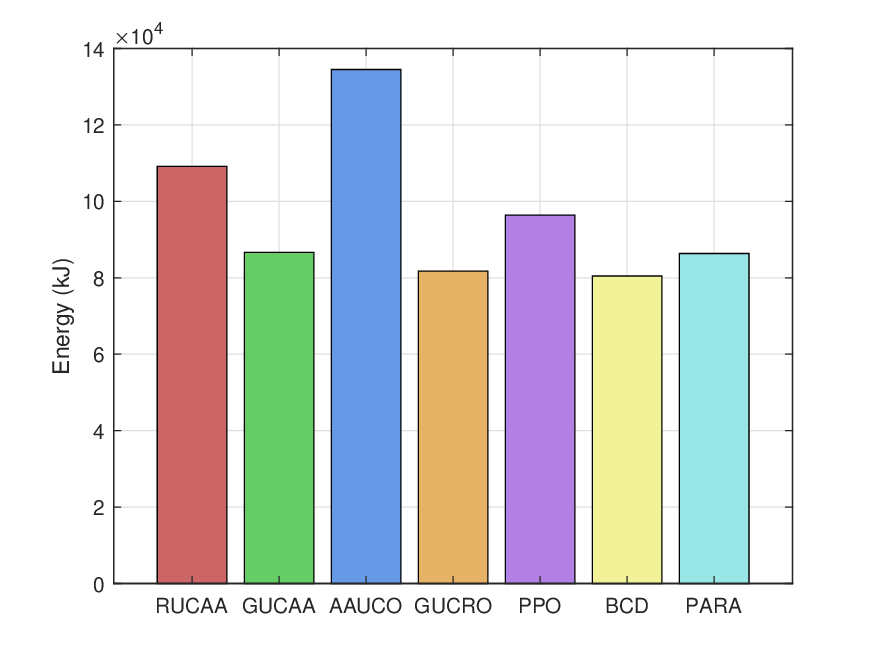}}
\subfigure[$\omega_t = 0.9, \omega_e = 0.1$.]{\includegraphics[width=.24\textwidth]{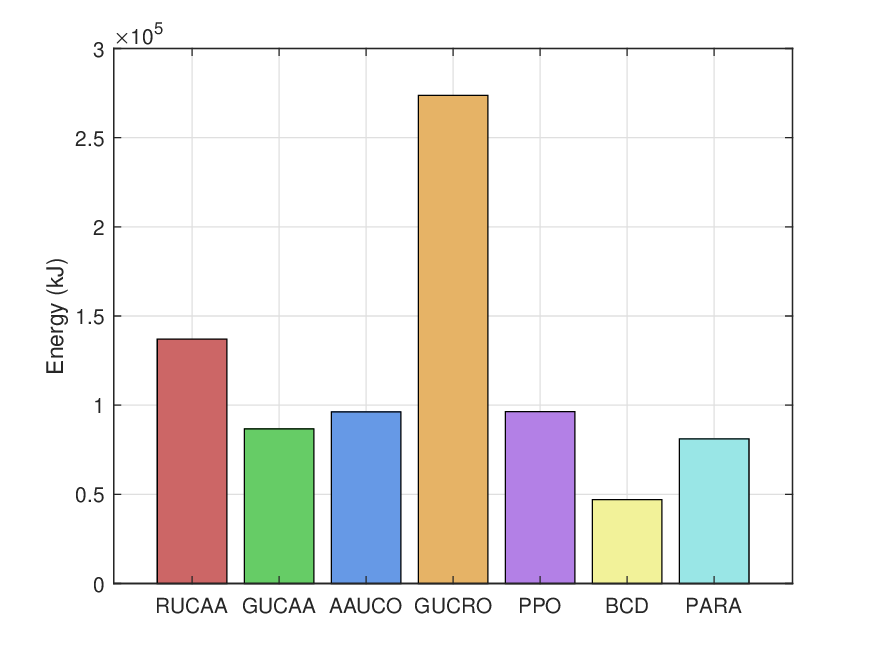}} \vspace{-10pt}
\caption{Energy consumption performance comparisons under dynamic SAGIN topology with different $(\omega_t , \omega_e)$ settings.} 
\label{fig.mobility_energy_addtional}
\end{figure}

\begin{figure}[t]
\vspace{-0.3cm}
\subfigure[$\omega_t = 0.1, \omega_e = 0.9$.]{\includegraphics[width=.24\textwidth]{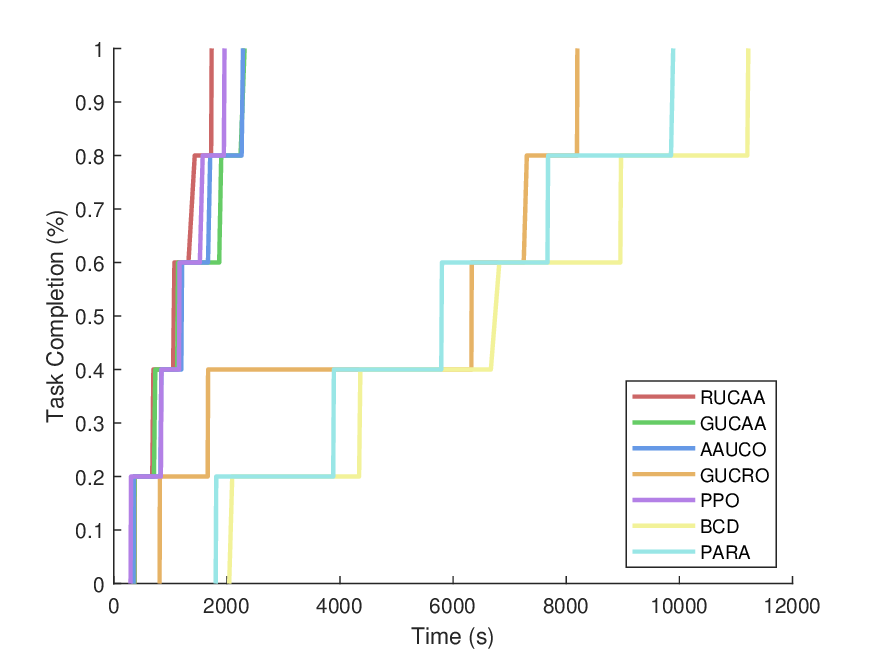}} 
\subfigure[$\omega_t = 0.3, \omega_e = 0.7$.]{\includegraphics[width=.24\textwidth]{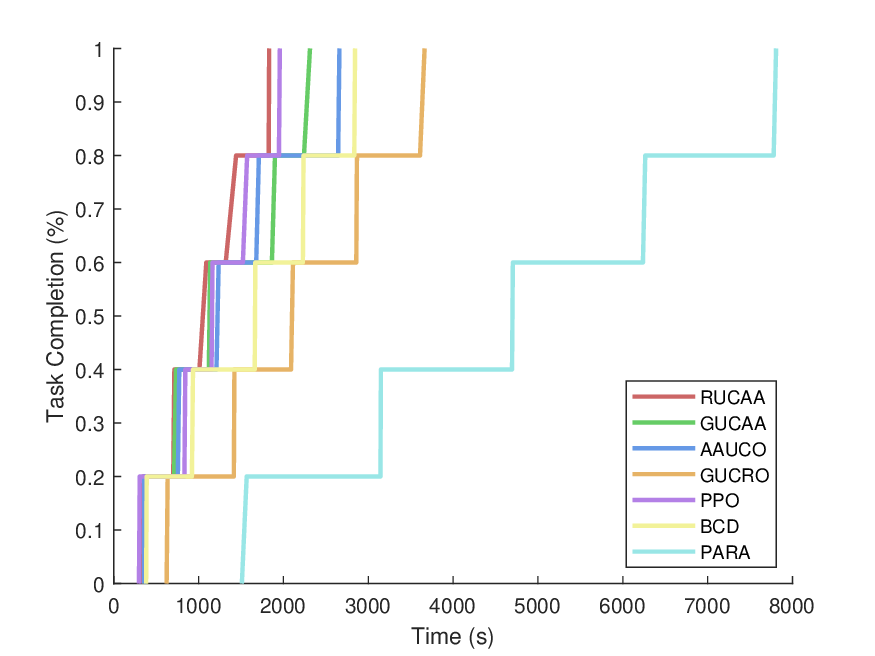}} 
\subfigure[$\omega_t = 0.7, \omega_e = 0.3$.]{\includegraphics[width=.24\textwidth]{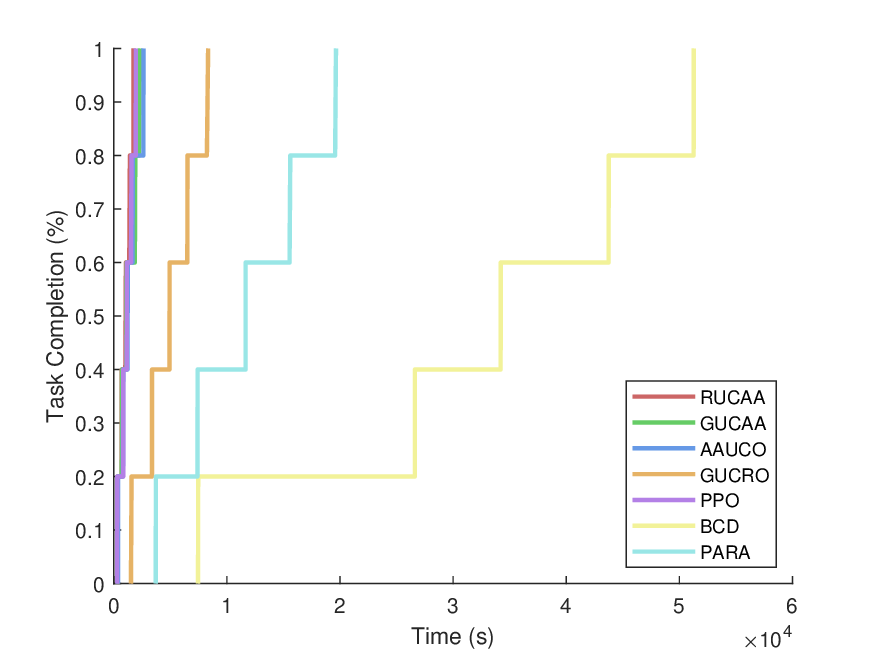}}
\subfigure[$\omega_t = 0.9, \omega_e = 0.1$.]{\includegraphics[width=.24\textwidth]{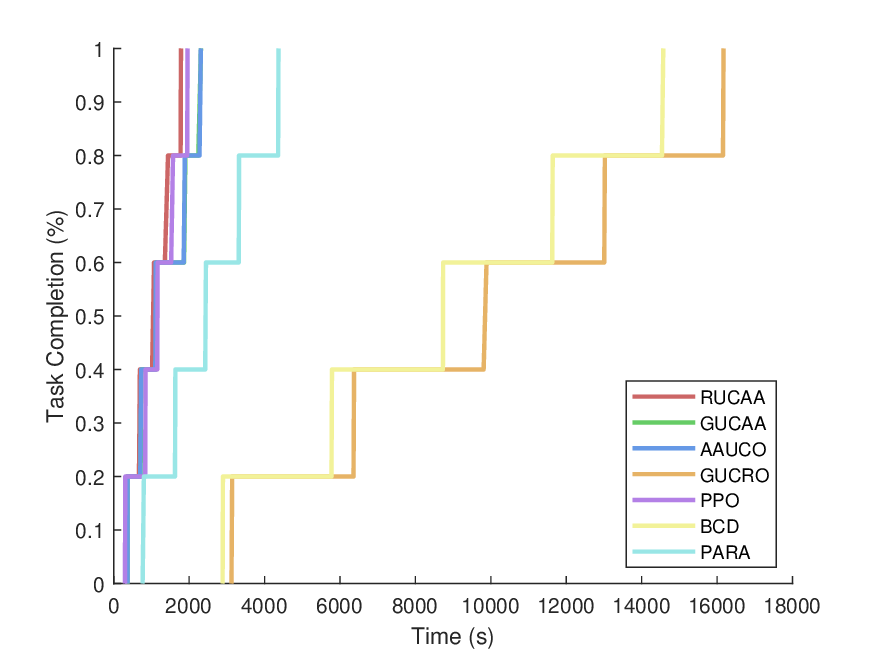}} \vspace{-10pt}
\caption{Task complete ratio performance comparisons under dynamic SAGIN topology with different $(\omega_t , \omega_e)$ settings.} 
\label{fig.mobility_taskcompleteratio_addtional}
\end{figure}

\begin{figure}[t]
\vspace{-0.3cm}
\subfigure[$\omega_t = 0.1, \omega_e = 0.9$.]{\includegraphics[width=.24\textwidth]{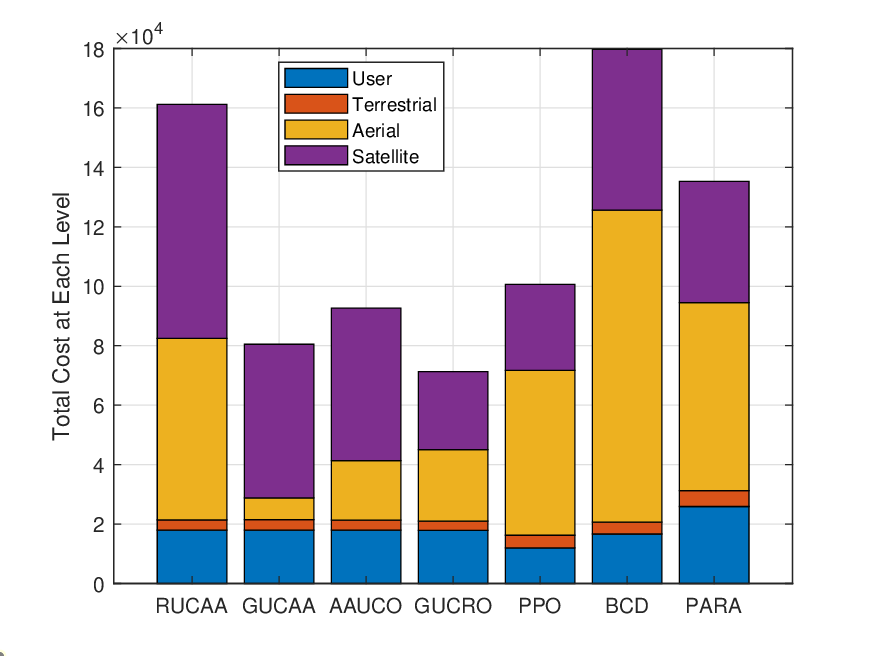}} 
\subfigure[$\omega_t = 0.3, \omega_e = 0.7$.]{\includegraphics[width=.24\textwidth]{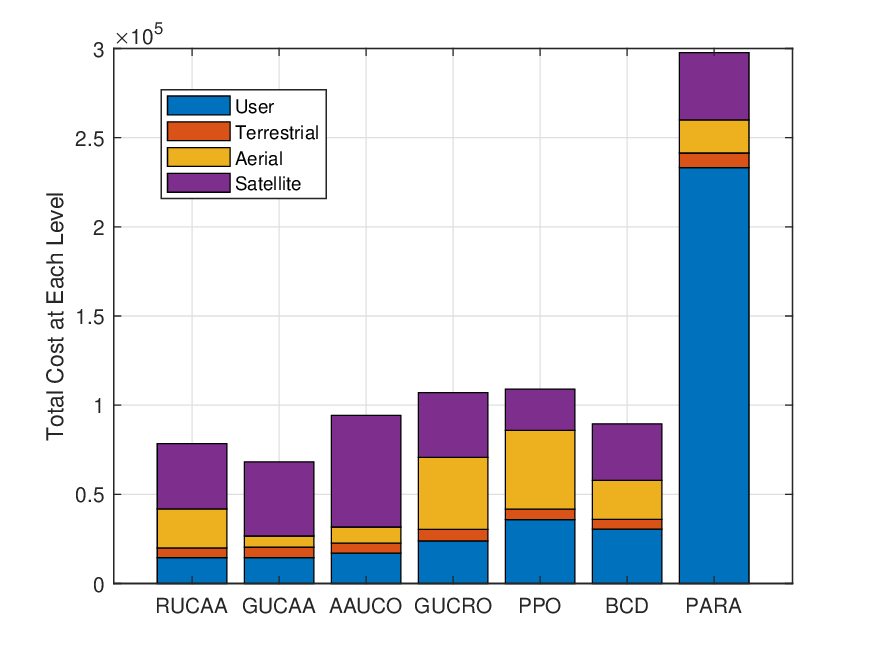}} 
\subfigure[$\omega_t = 0.7, \omega_e = 0.3$.]{\includegraphics[width=.24\textwidth]{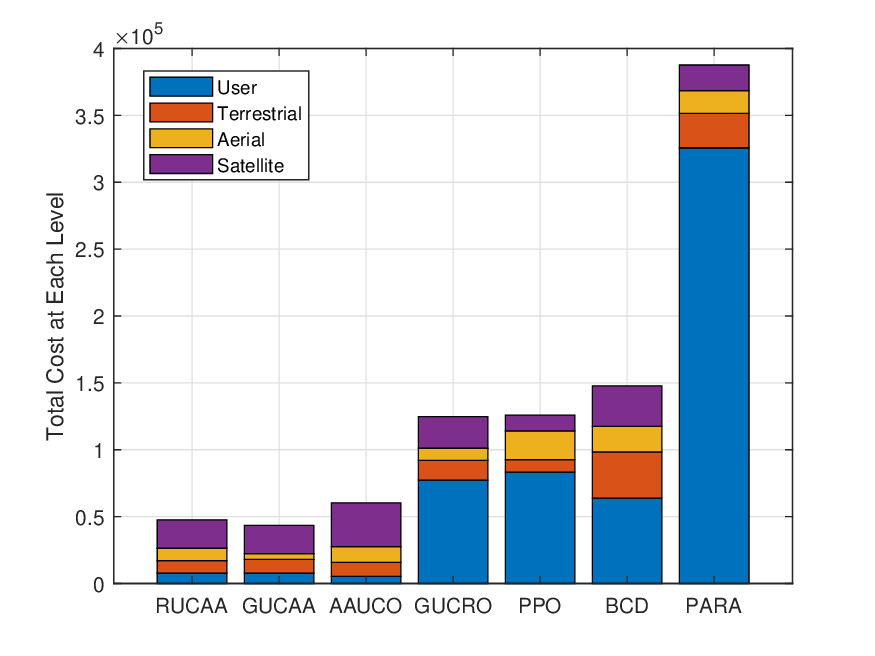}}
\subfigure[$\omega_t = 0.9, \omega_e = 0.1$.]{\includegraphics[width=.24\textwidth]{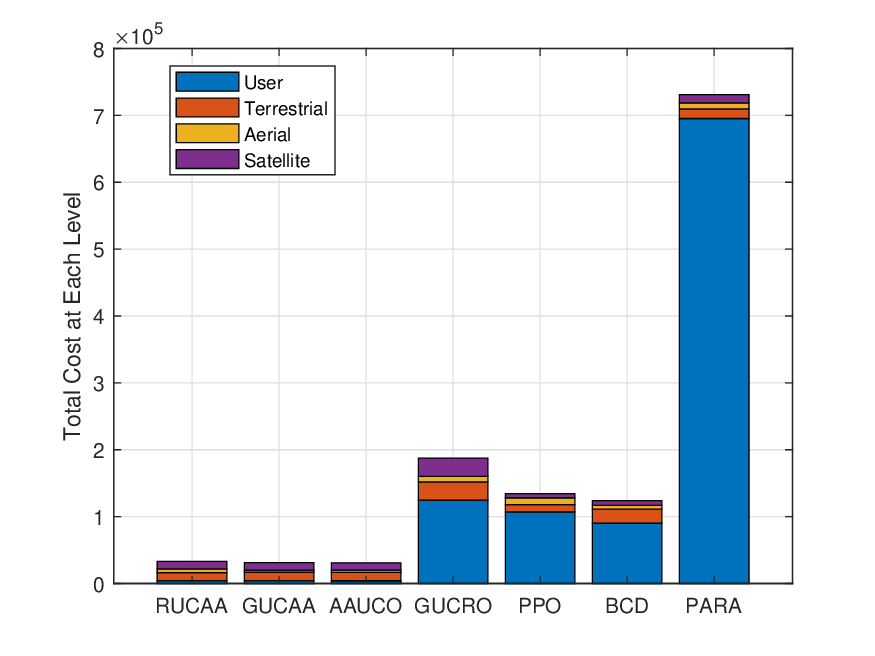}} \vspace{-10pt}
\caption{Total cost comparisons at each level under dynamic SAGIN topology with different $(\omega_t , \omega_e)$ settings.}
\label{fig.mobility_cost_eachlevel_addtional}
\end{figure}

\begin{figure}[t]
\vspace{-0.3cm}
\subfigure[$\omega_t = 0.1, \omega_e = 0.9$.]{\includegraphics[width=.24\textwidth]{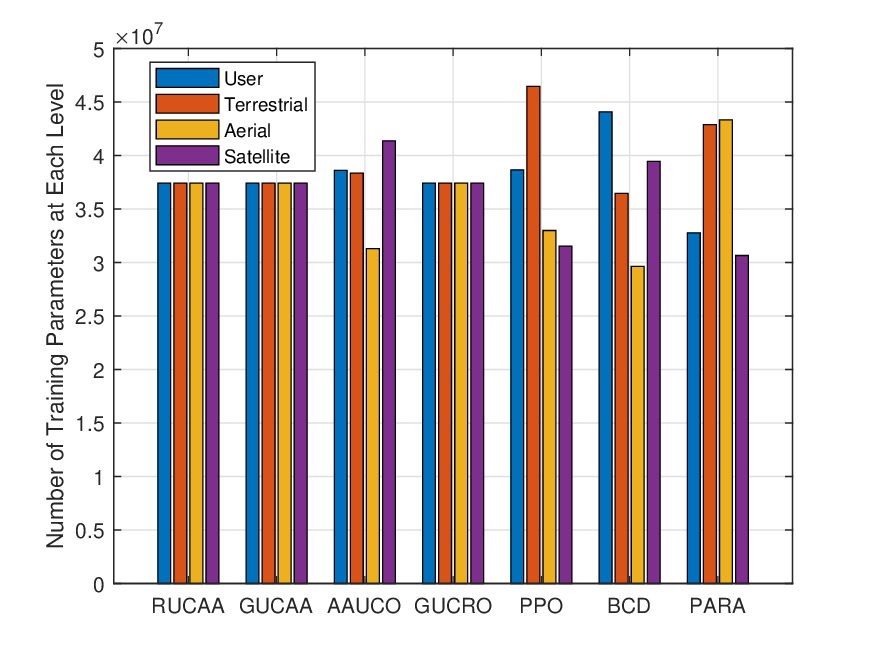}} 
\subfigure[$\omega_t = 0.3, \omega_e = 0.7$.]{\includegraphics[width=.24\textwidth]{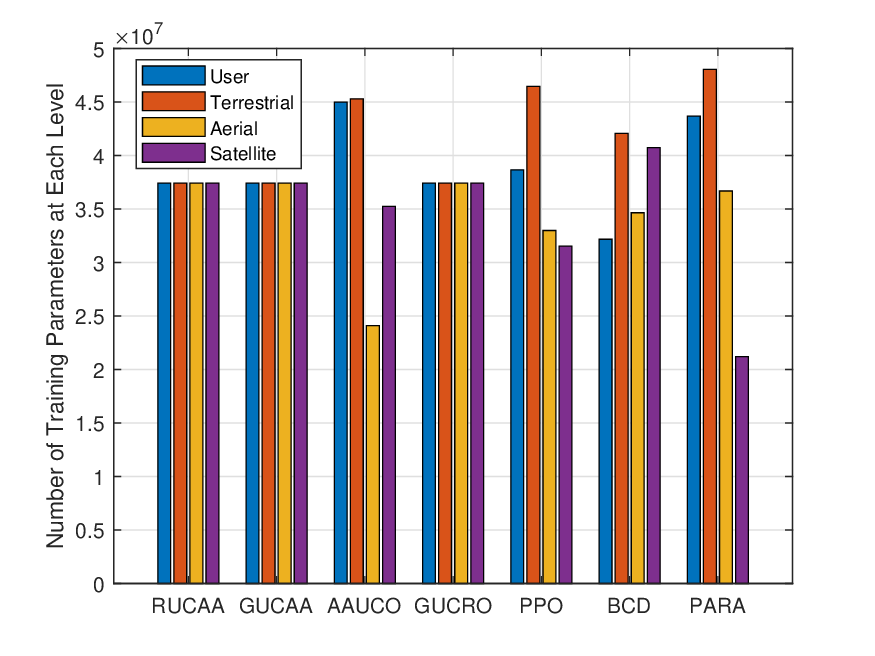}} 
\subfigure[$\omega_t = 0.7, \omega_e = 0.3$.]{\includegraphics[width=.24\textwidth]{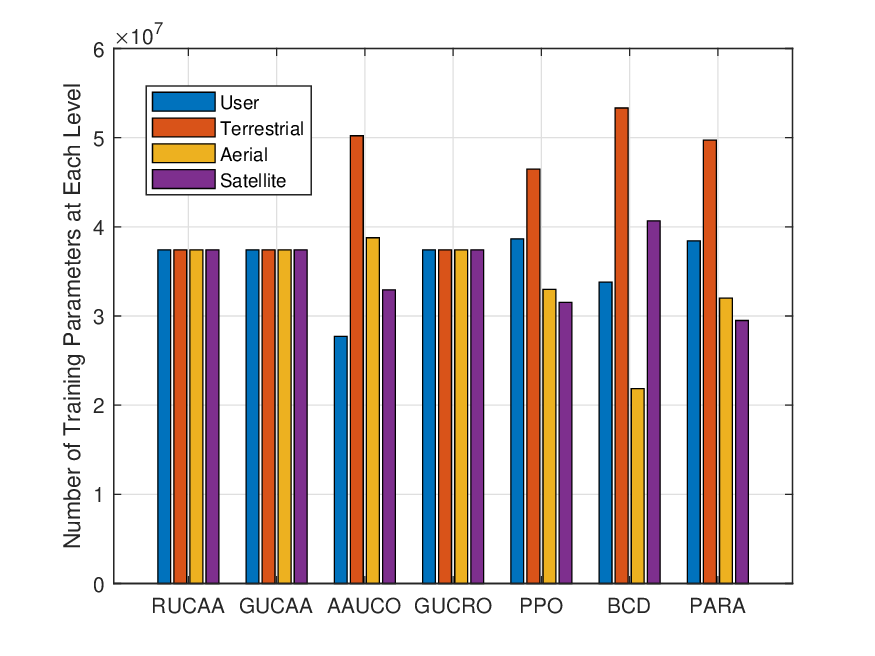}}
\subfigure[$\omega_t = 0.9, \omega_e = 0.1$.]{\includegraphics[width=.24\textwidth]{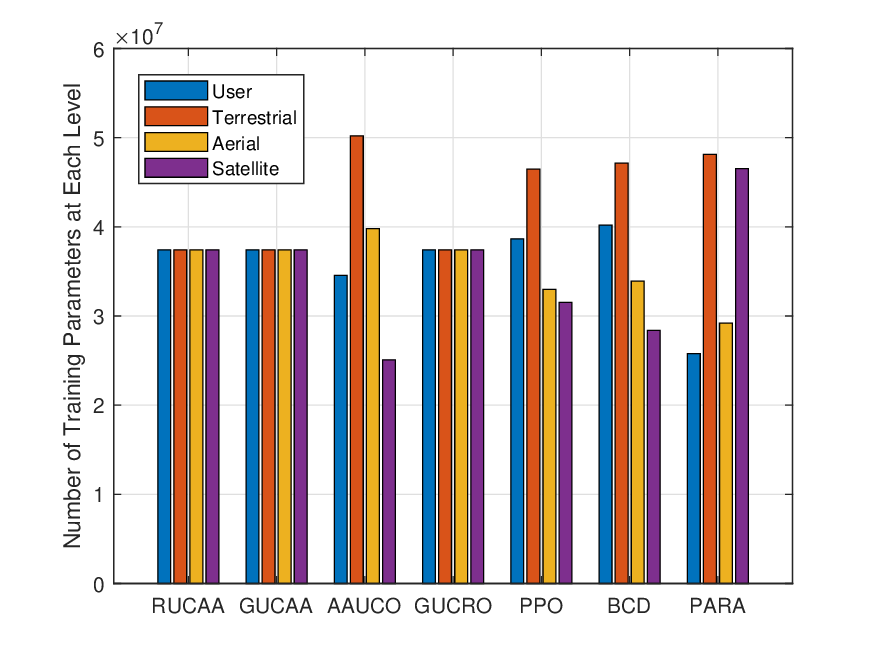}} \vspace{-10pt}
\caption{Numbers of training parameters at each level under dynamic SAGIN topology with different $(\omega_t , \omega_e)$ settings.} 
\label{fig.mobility_train_paranum_addtional}
\end{figure}

In this section, additional simulation results comparing the performance of methods under mobility-aware SAGIN networks are presented in Figs. \ref{fig.mobility_moreresults_t5e5}-\ref{fig.mobility_train_paranum_addtional}.

In Fig. \ref{fig.mobility_moreresults_t5e5}, we provide additional per-level results to compare the methods under a dynamic SAGIN topology with $\omega_t = 0.5$ and $\omega_e = 0.5$. The PARA method achieves a moderate cost at each level compared with the other methods, while allocating more trainable parameters to terrestrial servers and fewer to satellite servers. Next, we further examine the impact of the weight parameters $(\omega_t, \omega_e)$ on PTE, delay, energy consumption, task complete ratios, total cost at each level, and numbers of training parameters at each level for four representative settings, $(0.1, 0.9), (0.3, 0.7)$, $(0.7, 0.3)$, $(0.9, 0.1)$.

In Fig. \ref{fig.mobility_pte_addtional}, for all methods, the PTE value decreases as $\omega_t$ increases, which is consistent with the affine analysis in Appendix G and indicates that, after optimization, the effective delay term multiplied by $\omega_t$ remains larger than the effective energy term multiplied by $\omega_e$. In other words, giving more weight to delay inevitably reduces PTE in our setting. Across all weight pairs, PARA consistently achieves the highest PTE.

Based on the results in Fig. \ref{fig.mobility_delay_addtional} to Fig. \ref{fig.mobility_train_paranum_addtional}, it is not straightforward to attribute the PTE gain of PARA to a single dominant factor, since PTE is a sum-of-ratios metric jointly influenced by user–server association, offloading decisions of training parameters, system delay, energy consumption, and waiting time when aerial or satellite servers are unavailable. Consequently, PARA does not necessarily achieve the best value on any individual sub-metric (e.g., delay, energy, or the number of trained parameters at each layer). Instead, by directly optimizing the PTE objective, PARA finds a balanced operating point that increases the amount of trained parameters while keeping the overall training cost (delay, energy, and waiting time) under control, which leads to the highest overall PTE among all methods.

In such a coupled and non-convex problem, the optimization may move between different stationary points as $\omega_t$ varies, and not every individual metric needs to change monotonically. The proposed PARA algorithm is designed exactly for this regime: by relaxing the original problem into a sequence of convex subproblems, PARA is derived from a fractional-programming and semidefinite programming framework and systematically optimizes a convex surrogate at each iteration, whereas the baseline methods are not specifically tailored to this PTE-oriented fractional formulation. These results confirm that PARA robustly exploits the delay–energy–training parameters trade-off and maintains the highest PTE across a wide range of reasonable weight settings, even though the detailed evolution of delay and energy with $\omega_t$ is not strictly monotonic. Some post-deployment tuning methods may achieve a closer correlation between PTE and delay or energy consumption, which is beyond the scope of this work.

\end{appendices}

\end{document}